%
%
%
%

\RequirePackage{fix-cm}
\documentclass[smallcondensed,final]{svjour3}     
\smartqed  
\usepackage{mathptmx}      
%
\usepackage[utf8]{inputenc} 
\usepackage{amssymb}
\usepackage{amsmath, bm}
\usepackage{anyfontsize}
\usepackage{mathrsfs}
\usepackage{xcolor}
\usepackage{theorem}
\usepackage{cite}
\usepackage{url}
\usepackage{hyperref}
\usepackage{tikz}
\usepackage{pgfplots}
\usepackage{multirow}
\usepackage[nolist]{acronym}
\usetikzlibrary{plotmarks,patterns}
\usepackage{arydshln}

\usepackage[ruled,vlined,titlenumbered,linesnumbered]{algorithm2e}
\usepackage[normalem]{ulem}

%

%
%
\newtheorem{assumption}{Assumption}

%
%
\pgfplotsset{compat=1.17}

%
%
\newcommand{\Gal}{\mathrm{Gal}}

\newcommand{\Fqm}{\ensuremath{\mathbb F_{q^m}}}
\newcommand{\Fqms}{\ensuremath{\mathbb F_{q^{ms}}}}

\newcommand{\Fq}{\ensuremath{\mathbb F_{q}}}

\newcommand{\ZZ}{\ensuremath{\mathbb{Z}}}
\newcommand{\set}[1]{\ensuremath{\mathcal{#1}}}

\newcommand{\intervallincl}[2]{\ensuremath{[#1,#2]}}

\newcommand{\spannedBy}[1]{\ensuremath{\langle #1\rangle}}
\newcommand{\modSpan}[1]{\ensuremath{\langle #1\rangle}}

\newcommand{\intPointSet}{\ensuremath{\set{P}}}



\newcommand{\Polyring}{\ensuremath{\Fqm[x]}}
\newcommand{\Linpolyring}{\mathbb{L}_{q^m}\![x]}

\newcommand{\aut}{\ensuremath{\sigma}}
\newcommand{\der}{\ensuremath{\delta}}
\newcommand{\Id}{\ensuremath{\textsf{Id}}}
\newcommand{\SkewPolyring}{\ensuremath{\Fqm[x;\aut,\der]}}
\newcommand{\SkewPolyringZeroDer}{\ensuremath{\Fqm[x;\aut]}}
\newcommand{\remev}[2]{{#1}\!\left[#2\right]}
\newcommand{\opev}[3]{\ensuremath{{#1}(#2)_{#3}}}
\newcommand{\opfull}[2]{\ensuremath{\mathcal{D}_{#1}^{\aut,\der}(#2)}}
\newcommand{\op}[2]{\ensuremath{\mathcal{D}_{#1}(#2)}}
\newcommand{\opexp}[3]{\ensuremath{\mathcal{D}_{#1}^{#3}(#2)}}
\newcommand{\conjg}[2]{\ensuremath{{#1}^{#2}}}


%
%
\newcommand{\OCompl}[1]{\ensuremath{\mathcal{O}({#1})}}

%


%
%
\DeclareMathOperator{\defi}{def}
\newcommand{\defeq}{\overset{\defi}{=}}

\renewcommand{\bar}{\overline}

\newcommand{\modr}{\; \mathrm{mod}_\mathrm{r} \;}

\DeclareMathOperator{\wt}{wt}
\DeclareMathOperator{\rk}{rk}

\newcommand{\LT}[1]{\textrm{LT}(#1)}

\newcommand{\ind}[2]{\ensuremath{\textrm{Ind}_{\vec{#2}}(#1)}}
\newcommand{\mystack}[2]{\ensuremath{\genfrac{}{}{0pt}{}{#1}{#2}}}

%
%
\renewcommand{\vec}[1]{\ensuremath{\mathbf{#1}}}
\newcommand{\mat}[1]{\ensuremath{\mathbf{#1}}}

\newcommand{\lclm}[2]{\ensuremath{\text{lclm}\left(#1\right)_{#2}}}


\renewcommand{\a}{\mathbf a}
\renewcommand{\b}{\mathbf b}

\newcommand{\p}{\mathbf p}
\renewcommand{\r}{\mathbf r}

\newcommand{\w}{\mathbf w}
\newcommand{\x}{\mathbf x}

\newcommand{\B}{\mathbf B}

\newcommand{\E}{\mathbf E}

\newcommand{\Q}{\mathbf Q}
\newcommand{\R}{\mathbf R}

\newcommand{\U}{\mathbf U}

\newcommand{\0}{\mathbf 0}

%
%


\newcommand{\IntGab}[1]{\ensuremath{\mathrm{I}\mathrm{Gab}[#1]}}

\newcommand{\intSkewRS}[1]{\ensuremath{\mathrm{I}\mathrm{SRS}[#1]}}

\newcommand{\intLinRS}[1]{\ensuremath{\mathrm{I}\mathrm{LRS}[#1]}}

\newcommand{\shots}{\ensuremath{\ell}}


%
%

\newcommand{\HammingWeight}{\ensuremath{\wt_{H}}}
\newcommand{\RankWeight}{\ensuremath{\wt_{R}}}
\newcommand{\SumRankWeight}{\ensuremath{\wt_{\Sigma R}}}
\newcommand{\SkewWeight}{\ensuremath{\wt_{skew}}}

\newcommand{\RankDist}{d_R}
\newcommand{\SumRankDist}{d_{\ensuremath{\Sigma}R}}

\newcommand{\SkewDist}{d_{skew}}


%
%

\newcommand{\minpolyRem}[1]{\ensuremath{M^\text{rem}_{#1}(x)}}
\newcommand{\minpolyOp}[2]{\ensuremath{M^\text{op}_{#1}(x)_{#2}}}

\newcommand{\vecMinpolyNoX}[1]{\ensuremath{{\vec{M}}_{#1}}}
\newcommand{\vecMinpolyRem}[1]{\ensuremath{\vec{M}^\text{rem}_{#1}(x)}}
\newcommand{\vecMinpolyOp}[2]{\ensuremath{\vec{M}^\text{op}_{#1}(x)_{#2}}}



%
%

\newcommand{\bnd}[2]{\ensuremath{#1\mathopen{}\left(#2\right)\mathclose{}}}
\newcommand{\oh}[1]{\bnd{O}{#1}}
\newcommand{\softoh}[1]{\bnd{\widetilde{O}}{#1}}
\newcommand{\matmult}{\ensuremath{\omega}}
\newcommand{\OMul}[1]{\mathfrak{M}(#1)}

%
%

\newcommand{\nReceive}{\ensuremath{n_r}}

\newcommand{\degConstraint}{\ensuremath{D}}
\newcommand{\intOrder}{\ensuremath{s}}
\newcommand{\foldPar}{\ensuremath{h}}



\newcommand{\posInt}{\ensuremath{\mathbb{Z}_{+}}} 
\newcommand{\Groebner}{Gr\"obner }

%
%

\newcommand{\vecEv}[2]{\ensuremath{\mathscr{E}_{#1}(#2)}}
\newcommand{\vecEvNoInput}[1]{\ensuremath{\mathscr{E}_{#1}}}
\newcommand{\vecOpEv}[3]{\ensuremath{\mathscr{E}_{#1}^{op}(#2)_{#3}}}
\newcommand{\vecOpEvNoInput}[1]{\ensuremath{\mathscr{E}_{#1}^{op}}}
\newcommand{\vecRemEv}[2]{\ensuremath{\mathscr{E}_{#1}^{rem}(#2)}}
\newcommand{\vecRemEvNoInput}[1]{\ensuremath{\mathscr{E}_{#1}^{rem}}}

\newcommand{\intParam}{\ensuremath{s}}
\newcommand{\wOrder}{\ensuremath{\prec_{\vec{w}}}}
\newcommand{\module}[1]{\ensuremath{\set{#1}}}
\newcommand{\intModule}[1]{\ensuremath{\bar{\module{K}}_{#1}}}

\begin{acronym}
 \acro{TOP}{term-over-position}
 \acro{DaC}[D\&C]{divide-and-conquer}
 \acro{KNH}{Kötter--Nielsen--H\o holdt}
 \acro{LCLM}{least common left multiple}
 \acro{SRS}{skew Reed--Solomon}
 \acro{ISRS}{interleaved skew Reed--Solomon}
 \acro{LRS}{linearized Reed--Solomon}
 \acro{LLRS}{lifted linearized Reed--Solomon}
 \acro{ILRS}{interleaved linearized Reed--Solomon}
 \acro{LILRS}{lifted interleaved linearized Reed--Solomon}
 \acro{MSRD}{maximum sum-rank distance}
 \acro{MSD}{maximum skew distance}
 \acro{MRD}{maximum rank distance}
\end{acronym}

\DeclareMathAlphabet{\mathcal}{OMS}{cmsy}{m}{n}

\journalname{Designs, Codes and Crypography}

\begin{document}

\title{Fast Kötter--Nielsen--H\o holdt Interpolation over Skew Polynomial Rings and its Application in Coding Theory
 \thanks{Part of this work was presented at the 25th International Symposium on
Mathematical Theory of Networks and Systems (MTNS)~\cite{bartz2022fastKNH}. }
}


\titlerunning{Fast KNH Interpolation over Skew Polynomial Rings}        

\author{Hannes Bartz and Thomas Jerkovits and Johan Rosenkilde}


\institute{
	Hannes Bartz \at
	Institute of Communications and Navigation\\ 
	German Aerospace Center (DLR), D-82234 Oberpfaffenhofen, Germany\\
	Tel.: +49-8153-28-2252\\
	\email{hannes.bartz@dlr.de}
	\and 
	Thomas Jerkovits \at
	Institute of Communications and Navigation\\ 
	German Aerospace Center (DLR), D-82234 Oberpfaffenhofen, Germany\\
	Tel.: +49-8153-28-1557\\
	\email{thomas.jerkovits@dlr.de}
	\and 
	Johan Rosenkilde \at
	GitHub Inc., USA
	\\
	\email{jsrn@jsrn.dk}
}

\date{Received: date / Accepted: date}

\maketitle

\begin{abstract}
 Skew polynomials are a class of non-commutative polynomials that have several applications in computer science, coding theory and cryptography.
 In particular, skew polynomials can be used to construct and decode evaluation codes in several metrics, like e.g. the Hamming, rank, sum-rank and skew metric.

 We propose a fast divide-and-conquer variant of \acf{KNH} interpolation algorithm: it inputs a list of linear functionals on skew polynomial vectors, and outputs a reduced Gröbner basis of their kernel intersection. 
 We show, that the proposed \ac{KNH} interpolation can be used to solve the interpolation step of interpolation-based decoding of interleaved Gabidulin codes in the rank-metric, linearized Reed--Solomon codes in the sum-rank metric and skew Reed--Solomon codes in the skew metric requiring at most $\softoh{\intOrder^\matmult\OMul{n}}$ operations in $\Fqm$, where $n$ is the length of the code, $\intOrder$ the interleaving order, $\OMul{n}$ the complexity for multiplying two skew polynomials of degree at most $n$, $\matmult$ the matrix multiplication exponent and $\softoh{\cdot}$ the soft-$O$ notation which neglects log factors.
 This matches the previous best speeds for these tasks, which were obtained by top-down minimal approximant bases techniques, and complements the theory of efficient interpolation over free skew polynomial modules by the bottom-up KNH approach. 
 In contrast to the top-down approach the bottom-up KNH algorithm has no requirements on the interpolation points and thus does not require any pre-processing.
\end{abstract}

\clearpage

\section{Introduction}\label{sec:introduction}

Skew polynomials are a class of non-commutative polynomials, that were introduced by Ore in 1933~\cite{ore1933theory} and that have a variety of applications in computer science, coding theory and cryptography. 
The non-commutativity stems from the multiplication rule, which involves both, a field automorphism $\aut$ and a field derivation $\der$.
Unlike ordinary polynomials, there exist several ways to evaluate skew polynomials.
General results regarding the so-called \emph{remainder evaluation} of skew polynomials were derived in~\cite{lam1985general,lam1988vandermonde} whereas the \emph{generalized operator evaluation} was considered in~\cite{leroy1995pseudolinear}.
Depending on the choice of the automorphism $\aut$ and the derivation $\der$, skew polynomial rings (denoted by $\SkewPolyring$) include several interesting special cases, such as the ordinary polynomial ring as well as the linearized polynomial ring~\cite{Ore_OnASpecialClassOfPolynomials_1933,ore1933theory}.
This property along with the different ways to evaluate skew polynomials make them a very versatile tool with many different applications.

One important application of skew polynomials is the construction of evaluation codes, that have distance properties in several decoding metrics, including the~\emph{Hamming}, rank, sum-rank, skew and other related metrics such as the (sum-)subspace metric~\cite{boucher2014linear,martinez2018skew,martinez2019reliable,caruso2019residues}.

Many evaluation codes allow for decoding via efficient interpolation-based decoding algorithms, like e.g. the Welch--Berlekamp~\cite{welch1986error} and Sudan\cite{sudan1997decoding} algorithms for decoding Reed--Solomon codes.
In~\cite{koetter_dissertation} Kötter presented a bivariate interpolation algorithm for Sudan-like decoding of Reed--Solomon codes~\cite{koetter_dissertation} (over ordinary polynomial rings) that since then is often referred to as the \emph{Kötter interpolation}.
The Kötter interpolation as it is known today was first stated by Nielsen and H{\o}holdt~\cite{nielsen2000decoding} as a generalization of Kötter's algorithm~\cite{koetter_dissertation} which is able to handle multiplicities.
To acknowledge the contribution by Nielsen and H{\o}holdt we refer to the algorithm as \acf{KNH} interpolation.
A fast divide-and-conquer variant of the~\ac{KNH} interpolation for the Guruswami--Sudan algorithm for decoding Reed--Solomon codes was presented in~\cite{nielsen2014fast}.
Rosenkilde's algorithm~\cite{nielsen2014fast} is a bottom-up \ac{KNH}-like algorithm whose complexity is only slightly larger compared to the currently fastest approach~\cite[Section~2.5]{jeannerod2017computing}.   

A multivariate generalization of the \ac{KNH} interpolation~\cite{wang2005kotter} for free modules over ordinary polynomial rings was proposed in~\cite{wang2005kotter}.
This approach was generalized to free modules over linearized polynomial rings in~\cite{xie2011general}. 
A generalization of the multivariate \ac{KNH} interpolation to free modules over skew polynomial rings was proposed in~\cite{liu2014kotter}, which contains the variants over ordinary polynomial rings~\cite{wang2005kotter} and linearized polynomial rings~\cite{xie2011general} as a special case.

The evaluation and interpolation of multivariate skew polynomials was also considered in~\cite{martinez2019evaluation}; here with the main motivation to construct Reed-Muller-like codes (see also~\cite{geiselmann2019skew,augot2021rank,martinez2022theory}).

\subsection{Main contribution}

In this paper, we propose a fast~\ac{DaC} variant of the \ac{KNH} interpolation in skew polynomial rings~\cite{liu2014kotter}, that uses ideas from~\cite{nielsen2014fast}.
The main idea of the proposed algorithm (Algorithm~\ref{alg:skewIntTree}) is, that the interpolation problem is divided into smaller sub-problems, that can be solved and merged efficiently.
In particular, the update operations in each loop of the \ac{KNH} interpolation are ``recorded'' and then applied to a degree-reduced basis in the merge step rather than to a non-reduced basis.
This allows to control the degree of the polynomials during the interpolation procedure which in turn results in a lower computational complexity.

We state the interpolation problem and the algorithm in a general way using linear functionals over skew polynomials rings with arbitrary automorphisms and derivations.
We show how the fast \ac{KNH} interpolation can be applied to interpolation-based decoding of (interleaved) Gabidulin codes~\cite{Loidreau_Overbeck_Interleaved_2006,overbeck2006decoding}, interleaved linearized Reed--Solomon codes~\cite{martinez2018skew,caruso2019residues,bartz2021decoding,bartz2022fast} and (interleaved) skew Reed--Solomon codes~\cite{boucher2014linear,bartz2021decoding,bartz2022fast}.

We consider skew polynomials over finite fields only. 
However, the results (except for the complexity statements) also hold for skew polynomials over arbitrary finite Galois extensions $\mathbb{L}/\mathbb{K}$ instead of $\Fqm/\Fq$ and automorphisms $\aut \in \Gal(\mathbb{L}/\mathbb{K})$ with $\mathbb{K} = \mathbb{L}^\aut$ and derivations $\der:\mathbb{K}\mapsto\mathbb{K}$ satisfying~\eqref{eq:def_derivation} for all $a,b\in\mathbb{K}$.

For the above mentioned applications using generalized operator and remainder evaluation maps over skew polynomial rings with arbitrary field automorphisms we discuss the asymptotic complexity for zero derivations ($\der=0$). 
The asymptotic complexity for solving the interpolation step with the proposed approach is $\softoh{\intOrder^\matmult\OMul{n}}$ operations in $\Fqm$, where $n$ is the length of the code, $\intOrder$ a decoding parameter (e.g. interpolation order, usually $\intOrder\ll n$), $\OMul{n}\in\softoh{n^{1.635}}$ the complexity for multiplying two skew polynomials of degree at most $n$, $\matmult$ the matrix multiplication exponent (currently $\omega < 2.37286$) and $\softoh{\cdot}$ denotes the soft-$O$ notation which neglects log factors.

The original skew~\ac{KNH} interpolation from~\cite{liu2014kotter} has an asymptotic complexity of $\oh{\intOrder^2n^2}$ operations in $\Fqm$, which is larger compared to the proposed approach for most practical cases where we usually have $\intOrder \ll n$.

The interpolation step of the above mentioned coding applications can also be solved using the skew minimal approximant bases methods from~\cite{bartz2021fast} requiring at most $\softoh{\intOrder^\matmult\OMul{n}}$ operations in $\Fqm$.
This approach can be seen as a top-down approach: first construct a module that contains all solutions to the interpolation problem, and then find the minimal solution in that module. 
The \ac{KNH} family of algorithms are bottom-up: Gradually build up a minimal basis solving the interpolation constraints one by one. 
Developing both approaches in tandem has been very fruitful for the analogous family of algorithms for ordinary polynomial rings.

Due to the top-down nature, the minimal approximant bases method~\cite[Algorithm~6]{bartz2021fast} requires an additional step if the first entries of the interpolation points (related to the generalized operator evaluation maps) are not linearly independent (see~\cite[Theorem~22]{bartz2021fast}), which is \emph{not} required in the proposed KNH-like algorithm.

A comparison between the proposed fast \ac{KNH} interpolation algorithm and existing interpolation methods for ordinary and skew polynomial rings, including the computational complexity for the zero-derivation case ($\der=0$), is given in Table~\ref{tab:complexity_overview}.
Note, that for ordinary polynomial rings we have that $\OMul{n}\in\softoh{n}$. Additionally, in the ordinary polynomial ring case we have the notion of multiplicities of roots; a concept which is not yet generalized to the skew polynomial setting and therefore omitted.

Although the proposed interpolation algorithm achieves the best known complexity over skew polynomial rings, Table~\ref{tab:complexity_overview} shows that the best computational complexity of $\softoh{\intOrder^{\matmult-1}\OMul{n}}$ for ordinary polynomial rings~\cite{jeannerod2017computing} is not yet reached for the skew polynomial case.
Closing this gap is a potential topic for future work.

\begin{table}[ht!]
\caption{Overview of the computational complexity of the proposed fast \ac{KNH} interpolation approach compared to existing methods for the case of zero derivations ($\der=0$). 
Here $n$ is the number of interpolation points, $\intOrder$ an interpolation parameter (usually $\intOrder\ll n$), $\OMul{n}\in\softoh{n^{1.635}}$ the complexity for multiplying two skew polynomials of degree at most $n$ and $\matmult$ the matrix multiplication exponent (currently $\omega < 2.37286$). For ordinary polynomial rings we have that $\OMul{n}\in\softoh{n}$.}
\label{tab:complexity_overview}
\centering
\renewcommand{\arraystretch}{1.5}
\begin{tabular}{|c|l|c|c|c|}
 \hline
 \multicolumn{2}{|c|}{Interpolation Method} & Type & Polynomial Ring & Complexity ($\der=0$)
 \\ \hline\hline
 \multirow{4}{*}{\rotatebox[origin=c]{90}{ordinary}} & \ac{KNH}~\cite{wang2005kotter} & bottom-up & $\Fqm[x;\Id,0]$ & $\oh{\intParam^2n^2}$
 \\ \cline{2-5}
 & \ac{DaC} \ac{KNH}~\cite{nielsen2014fast} & bottom-up & $\Fqm[x;\Id,0]$ & $\softoh{\intOrder^\matmult\OMul{n}}$
 \\ \cline{2-5}
 & Min. approximant bases method~\cite{giorgi2003complexity} & top-down & $\Fqm[x;\Id,0]$ & $\softoh{\intOrder^\matmult\OMul{n}}$
 \\ \cline{2-5}
 & Min. interpolation bases~\cite{jeannerod2017computing} & top-down & $\Fqm[x;\Id,0]$ & $\softoh{\intOrder^{\matmult-1}\OMul{n}}$
 \\ \hline \hline
 \multirow{4}{*}{\rotatebox[origin=c]{90}{skew}} & Linearized \ac{KNH}~\cite{xie2011general} & bottom-up & $\Fqm[x;\aut_\text{Frob},0]$ & $\oh{\intParam^2n^2}$
 \\ \cline{2-5}
 & Skew \ac{KNH}~\cite{liu2014kotter} & bottom-up & $\SkewPolyring$ & $\oh{\intParam^2n^2}$
 \\ \cline{2-5}
 & \textbf{\ac{DaC} skew \ac{KNH} (this contribution)} & bottom-up & $\SkewPolyring$ & $\softoh{\intOrder^\matmult\OMul{n}}$ 
 \\ \cline{2-5}
 & Skew min. approximant bases method~\cite{bartz2021fast} & top-down & $\SkewPolyring$ & $\softoh{\intOrder^\matmult\OMul{n}}$
 \\ \hline 
\end{tabular}
\end{table}

\subsection{Outline of the paper}

Section~\ref{sec:preliminaries} gives definitions and notations related to skew polynomials as well as a definition of the skew~\ac{KNH} interpolation algorithm from~\cite{liu2014kotter}.
Section~\ref{sec:fast_skew_knh} presents a fast general~\ac{DaC} framework for the skew~\ac{KNH} interpolation.
Section~\ref{sec:coding_applications} considers the application of the fast skew~\ac{KNH} interpolation for decoding (interleaved) Gabidulin, linearized Reed--Solomon and Skew Reed--Solomon codes. The complexity analysis shows that we obtain the currently fastest known decoders for the considered codes.
Section~\ref{sec:conclusion} concludes the paper.

\section{Preliminaries}\label{sec:preliminaries}

\subsection{Sets, Vectors and Matrices over Finite Fields}
Let $\Fq$ be a finite field and denote by $\Fqm$ the extension field of degree $m$. 
Sets are denoted by $\set{A}=\{a_0,a_1,\dots,a_{n-1}\}$. 
The cardinality of a set $\set{A}$ is denoted by $|\set{A}|$.

Vectors and matrices over $\Fqm$ are denoted by bold lower-case and upper-case letters such as $\vec{a}$ and $\mat{A}$, respectively, and the elements are indexed beginning from zero.
The $r\times c$ zero matrix is denoted by $\mat{0}_{r\times c}$ and the $r\times r$ identity matrix is denoted by $\mat{I}_r$.
The $i$-th row of a matrix $\mat{A}$ is denoted by $\vec{a}_i$.
Let $\Fqm^N$ denote the set of all row vectors of length $N$ over $\Fqm$ and let $\Fqm^{M\times N}$ denote the set of all $M\times N$ matrices over $\Fqm$.
The rank of a matrix $\mat{A}\in\Fq^{M\times N}$ is denoted by $\rk_q(\mat{A})$.
Under a fixed basis of $\Fqm$ over $\Fq$, there is a bijection between a vector $\vec{a}\in\Fqm^n$ and a matrix $\Fq^{m\times n}$.
This allows us to define the rank of a vector as $\rk_q(\vec{a})\defeq\rk(\mat{A})$ where $\mat{A}$ is the corresponding matrix of $\vec{a}$ over $\Fq$.
The \emph{Hamming} weight of a vector $\vec{a}\in\Fqm^N$ is defined as
\begin{equation}
 \HammingWeight(\vec{a})=|\{a_i\in\vec{a}:a_i\neq 0\}|.
\end{equation}

Let $\aut:\Fqm\mapsto\Fqm$ be a field automorphism of $\Fqm$ and let $\der:\Fqm\mapsto\Fqm$ be a $\aut$-derivation such that
\begin{align}\label{eq:def_derivation} 
	\der(a+b)=\der(a)+\der(b)
   \quad\text{and}\quad
	\der(ab)=\der(a)b+\aut(a)\der(b).
\end{align}
Over a finite field, all $\aut$-derivations are of the form (see e.g.~\cite[Proposition~1]{liu2014kotter})
\begin{equation}
	\der(a)=b(\aut(a)-a) \qquad \text{for } b\in\Fqm. 
\end{equation}

For any two elements $a\in\Fqm$ and $c\in\Fqm^*=\Fqm\setminus\{0\}$ define
\begin{equation}
 a^c\defeq \aut(c)ac^{-1}+\der(c)c^{-1}
\end{equation}
where term $\der(c)c^{-1}$ is called the \emph{logarithmic derivative} of $c$.
Two elements $a,b\in\Fqm$ are called $(\aut,\der)$-conjugates, if there exists an element $c\in\Fqm^*$ such that $b=a^c$. Otherwise, $a$ and $b$ are called $(\aut,\der)$-distinct.
The notion of $(\aut,\der)$-conjugacy defines an equivalence relation on $\Fqm$ and thus a partition of $\Fqm$ into conjugacy classes (see~\cite{lam1988vandermonde}).

\begin{definition}[Conjugacy Class~\cite{lam1988vandermonde}]
 The set 
 \begin{equation}
 	\set{C}(a)\defeq\left\{a^c:c\in\Fqm^*\right\}
 \end{equation}
 is called \emph{conjugacy class} of $a$.
\end{definition}

\subsection{Skew Polynomials}

\emph{Skew polynomials} are non-commutative polynomials that were introduced by Ore~\cite{ore1933theory}. 
The set of all polynomials of the form 
\begin{equation}
	f(x)=\sum_{i}f_ix^i \quad \text{with} \quad f_i\in\Fqm
\end{equation}
together with the ordinary polynomial addition and the multiplication rule
\begin{equation}\label{eq:mult_rule}
	xa=\aut(a)x+\der(a)
\end{equation}
forms the non-commutative ring of \emph{skew polynomials} that is denoted by $\SkewPolyring$.
The \emph{degree} of a skew polynomial $f\in\SkewPolyring$ is defined as $\deg(f)\defeq\max_i\{i:f_i\neq0\}$ for $f\neq0$ and $-\infty$ else.
Further, by $\SkewPolyring_{<n}$ we denote the set of skew polynomials from $\SkewPolyring$ of degree less than $n$.

The \emph{monic}~\ac{LCLM} of some polynomials $p_0,p_1,\dots,p_{n-1}\in\SkewPolyring$ is denoted by
\begin{equation}\label{eq:def_lclm}
 \lclm{p_i}{0\leq i\leq n-1}\defeq\lclm{p_0,p_1,\dots,p_{n-1}}{}.
\end{equation} 

The skew polynomial ring $\SkewPolyring$ is a left and right Euclidean domain, i.e., for any $f,g\in\SkewPolyring$ with $\deg(f)\geq\deg(g)$ there exist unique polynomials $q_L,r_L,q_R,r_R\in\SkewPolyring$ such that
\begin{equation}
	f(x)=q_R(x)g(x)+r_R(x)=g(x)q_L(x)+r_L(x)
\end{equation}
where $\deg(r_R),\deg(r_L)<\deg(g)$ (see~\cite{ore1933theory}).
Efficient Euclidean-like algorithms for performing left/right skew polynomial division exist~\cite{caruso2017fast,caruso2017new,puchinger2017fast}.
For two skew polynomials $f,g\in\SkewPolyring$, denote by $f\modr g$ the remainder of the right division of $f$ by $g$.

\begin{example}
 Applying the multiplication rule in~\eqref{eq:mult_rule} to $x^2a$ we get
 \begin{align*}
  x^2a &= x(\aut(a)x+\der(a)) = x\aut(a)x+x\der(a) 
  \\
  &= (\aut^2(a)x+\der(\aut(a)))x+\aut(\der(a))x+\der^2(a)
  \\
  &=\aut^2x^2+(\der(\aut(a))+\aut(\der(a))x+\der^2(a).
 \end{align*}
\end{example}

There are several interesting cases where skew polynomial rings coincide with other polynomial rings:
\begin{itemize}
 \item For $\aut$ being the identity and $\der$ being zero derivation (i.e. for $\der(a)=0$ for all $a\in\Fqm$) we have that $\SkewPolyring$ is equivalent to the ordinary polynomial ring $\Polyring$.

 \item For $\der$ being the zero derivation we get the \emph{twisted} polynomial ring $\SkewPolyringZeroDer$.

 \item For $\aut$ being the identity we get the differential polynomial ring $\Fqm[x,\der]$.

 \item For $\aut$ being the Frobenius automorphism of $\Fqm$ (i.e. $\aut(\cdot)=\cdot^q$) and $\der$ being the zero derivation we have that $\SkewPolyringZeroDer$ is isomorphic to the linearized polynomial ring $\Linpolyring$~\cite{Ore_OnASpecialClassOfPolynomials_1933,ore1933theory}.
\end{itemize}

There exist two variants of skew polynomial evaluation: the \emph{(generalized) operator} evaluation and the \emph{remainder} evaluation.

\subsubsection{Generalized Operator Evaluation}

The generalized operator evaluation defined in~\cite{leroy1995pseudolinear} allows to $\Fq$-linearize the skew polynomial evaluation and therefore establishes the link between the skew polynomial ring and the linearized polynomial ring~\cite{Ore_OnASpecialClassOfPolynomials_1933,ore1933theory}.

Given an $\Fqm$ automorphism $\aut$, a $\aut$-derivation $\der$ and an element $a\in\Fqm$, the $(\aut,\der)$ operator $\opfull{a}{b}:\Fqm\mapsto\Fqm$ is defined as
\begin{equation}\label{eq:def_op}
 \opfull{a}{b}\defeq\aut(b)a+\der(b),\qquad\forall b\in\Fqm.	
\end{equation} 
We use the notation $\op{a}{b}$ whenever $\aut$ and $\der$ are clear from the context.
For an integer $i\geq0$, we define $\opexp{a}{b}{i+1}=\op{a}{\opexp{a}{b}{i}}$ and $\opexp{a}{b}{0}=b$.
\begin{definition}[Generalized Operator Evaluation~\cite{martinez2018skew}]
For a skew polynomial $f\in\SkewPolyring$ the generalized operator evaluation $\opev{f}{b}{a}$ of $f$ at an element $b\in\Fqm$ w.r.t. the evaluation parameter $a\in\Fqm$ is defined as
\begin{equation}\label{eq:opEval}
	\opev{f}{b}{a}\defeq\sum_{i}f_i\opexp{a}{b}{i}.
\end{equation}
\end{definition}
The definition of the generalized operator evaluation includes the operator evaluation $\opev{f}{b}{1}$ as a special case ($a=1$).

The generalized operator evaluation is an $\Fq$-linear map, i.e. for any $f\in\SkewPolyring$, $\lambda_1,\lambda_2\in\Fq$ and $a, b_1,b_2\in\Fqm$ we have that (see~\cite[Lemma~23]{martinez2018skew} and~\cite{lam1994hilbert})
\begin{equation}
 \opev{f}{\lambda_1b_1+\lambda_2b_2}{a}=\lambda_1\opev{f}{b_1}{a}+\lambda_2\opev{f}{b_2}{a}.
\end{equation}
For a vector $\vec{b}=(b_0,b_1,\dots,b_{n-1})\in\Fqm^n$ we define the generalized multipoint operator evaluation of a skew polynomial $f\in\SkewPolyring$ w.r.t. an $a\in\Fqm$ as
\begin{equation}
	\opev{f}{\vec{b}}{a}\defeq\left(\opev{f}{b_0}{a},\opev{f}{b_1}{a},\dots,\opev{f}{b_{n-1}}{a}\right).
\end{equation}

For a set $\set{B}=\{b_0,b_1,\dots,b_{n-1}\}\subseteq\Fqm$ and a vector $\vec{a}=(a_0,a_1,\dots,a_{n-1})\in\Fqm^n$ the minimal skew polynomial that vanishes on all elements in $\set{B}$ w.r.t. the evaluation parameters in $\vec{a}$ is defined as (see e.g.~\cite{caruso2019residues})
\begin{equation}\label{eq:def_minpoly_gen_op}
 \minpolyOp{\set{B}}{\vec{a}}
 =\lclm{x-\frac{\aut(b_i)a_i}{b_i}}{\mystack{0\leq i\leq n-1}{b_i\neq 0}}.
\end{equation}
The degree of $\minpolyOp{\set{B}}{\vec{a}}$ satisfies
\begin{equation}
	\deg(\minpolyOp{\set{B}}{\vec{a}})\leq n
\end{equation}
where equality holds if the sequence of elements $b_i$ that have the same evaluation parameter $a_i$ are $\Fq$-linearly independent and the distinct evaluation parameters $a_i$ are from different conjugacy classes (see~\cite{caruso2019residues}).

\begin{example}
 Consider the set $\set{B}=\{b_0,b_1,b_2,b_3\}\subseteq\Fqm$ and a vector $\a=(a_0,a_1,a_2,a_3)\in\Fqm^4$, where $a_0 = a_1$ and $a_2 = a_3$ are representatives from different conjugacy classes.
 Then we have $\deg(\minpolyOp{\set{B}}{\vec{a}})= n$ if and only if $b_0$ and $b_1$ are $\Fq$-linearly independent and $b_2$ and $b_3$ are $\Fq$-linearly independent.
\end{example}

Similar to ordinary polynomials, we get the following result for the generalized operator evaluation of a polynomial modulo a particular minimal polynomial.

\begin{lemma}\label{lem:mod_op_eval}
 For any $f\in\SkewPolyring$, $\set{B}=\{b_0,b_1,\dots,b_{n-1}\}\subseteq\Fqm$ and $\a=(a_0,a_1,\dots,a_{n-1})\in\Fqm^n$ we have that
 \begin{equation}\label{eq:mod_gen_op_eval}
 	\opev{f}{b_i}{a_i}=\opev{\left(f(x)\modr\minpolyOp{\set{B}}{\a}\right)}{b_i}{a_i},
    \quad\forall i=0,\dots,n-1.
 \end{equation}
\end{lemma}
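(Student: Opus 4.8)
The plan is to reduce $f$ modulo the minimal polynomial by right Euclidean division and to check that the quotient part contributes nothing to the generalized operator evaluation at the $b_i$. Write $M\defeq\minpolyOp{\set{B}}{\a}$. Since $\SkewPolyring$ is a right Euclidean domain, there are unique $q,r\in\SkewPolyring$ with $f(x)=q(x)M(x)+r(x)$ and $\deg(r)<\deg(M)$; by definition $r=f(x)\modr M$. As~\eqref{eq:opEval} is additive in the polynomial argument, $\opev{f}{b_i}{a_i}=\opev{qM}{b_i}{a_i}+\opev{r}{b_i}{a_i}$, so it suffices to show $\opev{qM}{b_i}{a_i}=0$ for every $i$.

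The tool for this is the composition rule for generalized operator evaluation with respect to a fixed parameter $a\in\Fqm$,
\[
 \opev{gh}{b}{a}=\opev{g}{\opev{h}{b}{a}}{a}\qquad\text{for all }g,h\in\SkewPolyring,\ b\in\Fqm,
\]
equivalently, $g\mapsto\bigl(b\mapsto\opev{g}{b}{a}\bigr)$ is a ring homomorphism from $\SkewPolyring$ into the ring of $\Fq$-linear endomorphisms of $\Fqm$. This is classical (cf.~\cite{leroy1995pseudolinear,martinez2018skew,caruso2019residues}); for a self-contained argument it suffices to check $\opev{xc}{b}{a}=\opev{x}{cb}{a}$ for constants $c$ (a one-line computation from~\eqref{eq:mult_rule} and the product rule~\eqref{eq:def_derivation} for $\der$) and then to extend by bilinearity and induction on the degree. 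I will also use that $\opev{g}{0}{a}=0$ for every $g$, since $\opexp{a}{0}{i}=0$ for all $i\geq0$.

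Now fix $i\in\{0,\dots,n-1\}$. If $b_i=0$ both sides of~\eqref{eq:mod_gen_op_eval} vanish, so assume $b_i\neq0$. By construction~\eqref{eq:def_minpoly_gen_op}, $M$ is a left multiple of a degree-one skew polynomial that annihilates $b_i$ under operator evaluation with parameter $a_i$, so $\opev{M}{b_i}{a_i}=0$. Hence
\begin{align*}
 \opev{f}{b_i}{a_i}
 &= \opev{qM}{b_i}{a_i}+\opev{r}{b_i}{a_i}
 = \opev{q}{\opev{M}{b_i}{a_i}}{a_i}+\opev{r}{b_i}{a_i} \\
 &= \opev{q}{0}{a_i}+\opev{r}{b_i}{a_i}
 = \opev{r}{b_i}{a_i}
 = \opev{f(x)\modr M}{b_i}{a_i},
\end{align*}
which is exactly~\eqref{eq:mod_gen_op_eval}. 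The one point requiring care is the composition rule: it is what kills the $qM$ summand, and it must be applied with $q$ retained as the \emph{left} factor of $qM$, so that $\opev{qM}{b_i}{a_i}=\opev{q}{\opev{M}{b_i}{a_i}}{a_i}$; the rest is bookkeeping.
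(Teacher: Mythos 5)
Your proof is correct and follows the same approach as the paper: right Euclidean division $f = qM + r$ plus the observation that $M$ annihilates each $b_i$ under operator evaluation with parameter $a_i$. The only difference is that you spell out the composition rule $\opev{gh}{b}{a}=\opev{g}{\opev{h}{b}{a}}{a}$ and the $b_i=0$ case, both of which the paper's one-line argument leaves implicit.
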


\begin{proof}
 Since $\SkewPolyring$ is a left/right Euclidean domain, there exist two unqiue polynomials $q,r\in\SkewPolyring$ such that
 \begin{equation}
 	f(x)=q(x)\minpolyOp{\set{B}}{\a}+r(x) 
 	\quad\Longleftrightarrow\quad
 	r(x)=f(x)\modr\minpolyOp{\set{B}}{\a}.
 \end{equation}
 Since $\minpolyOp{\set{B}}{\a}$ vanishes on all $b_i$ w.r.t. $a_i$ for all $i=0,\dots,n-1$ we have that
 \begin{equation}
  \opev{f}{b_i}{a_i}=\opev{r}{b_i}{a_i},
  \quad\forall i=0,\dots,n-1,
 \end{equation}
 and the result follows.
 \qed
\end{proof}

\subsubsection{Remainder Evaluation}

Another variant of skew polynomial evaluation is the \emph{remainder evaluation} defined in~\cite{lam1985general,lam1988vandermonde}, which generalizes the concept of polynomial evaluation by means of (right) division.

 \begin{definition}[Remainder Evaluation~\cite{lam1985general,lam1988vandermonde}]
  For a skew polynomial $f\in\SkewPolyring$ the remainder evaluation $\remev{f}{b}$ of $f$ at an element $b\in\Fqm$ is defined as the unique remainder of the right division of $f(x)$ by $(x-b)$ such that
  \begin{equation}\label{eq:remEval}
  	f(x)=g(x)(x-b)+\remev{f}{b} \quad \Longleftrightarrow\quad \remev{f}{b} = f(x) \modr (x-b).
  \end{equation}
 \end{definition}

For a vector $\vec{b}=(b_0,b_1,\dots,b_{n-1})\in\Fqm^n$ we define the  multipoint remainder evaluation of a skew polynomial $f\in\SkewPolyring$ as
\begin{equation}
	\remev{f}{\vec{b}}\defeq\left(\remev{f}{b_0},\remev{f}{b_1},\dots,\remev{f}{b_{n-1}}\right).
\end{equation}

In the following we recall important properties of the concept of $P$-independence (or \emph{polynomial} independence) from~\cite{lam1985general, lam1988algebraic,martinez2018skew}.
Given a set $\set{F}\subseteq\SkewPolyring$ we define its \emph{zero set} $\set{Z}(\set{F})\subseteq\Fqm$ as
\begin{equation}\label{eq:def_zero_set}
 \set{Z}(\set{F})\defeq\{a\in\Fqm:\remev{f}{a}=0,\forall f\in\set{F}\}.
\end{equation}
For a set $\set{B}=\{b_0,b_1,\dots,b_{n-1}\}\subseteq\Fqm$ we define its \emph{associated ideal} as (see~\cite{martinez2018skew})
\begin{equation}\label{eq:def_ass_ideal}
 I(\set{B})=\{f\in\SkewPolyring:\remev{f}{b}=0,\forall b\in\set{B}\}.
\end{equation}

For a set $\set{B}=\{b_0,b_1,\dots,b_{n-1}\}\subseteq\Fqm$ the unique minimal skew polynomial that vanishes on all elements in $\set{B}$ w.r.t. the remainder evaluation is defined as (see e.g.~\cite{boucher2014linear})
\begin{equation}\label{eq:def_minpoly_rem}  
 \minpolyRem{\set{B}}=\lclm{x-b_i}{0\leq i\leq n-1}.
\end{equation}
Since $\minpolyRem{\set{B}}$ has the minimal degree among all polynomials in $\SkewPolyring$ that vanish on $\set{B}$ it generates the \emph{left} $\SkewPolyring$-ideal $I(\set{B})$.
The degree of $\minpolyRem{\set{B}}$ satisfies
\begin{equation}
 \deg(\minpolyRem{\set{B}})\leq n
\end{equation}
where the elements $b_0,b_1,\dots,b_{n-1}$ are called $P$-independent (or \emph{polynomially independent}) if and only if $\deg(\minpolyRem{\set{B}})=n$. 

The \emph{closure} of a set $\set{B}\subseteq\Fqm$ is defined as the zero set of its minimal polynomial, i.e. as
\begin{equation}\label{eq:def_closure}
 \bar{\set{B}}\defeq\set{Z}(I(\set{B}))=\set{Z}(\minpolyRem{\set{B}})
\end{equation}
where $\set{B}$ is called $P$-closed if and only if $\bar{\set{B}}=\set{B}$.
For a $P$-closed set $\set{B}$ it can be shown that any root of $\minpolyRem{\set{B}}$ is an $(\aut,\der)$-conjugate of an element in $\set{B}$ (see~\cite{lam1988algebraic}).

Similar to the result w.r.t. to the generalized operator evaluation in Lemma~\ref{lem:mod_op_eval}, we obtain the following result w.r.t. the remainder evaluation.

\begin{lemma}\label{lem:mod_rem_eval}
 For any $p\in\SkewPolyring$ and $\set{B}=\{b_0,b_1,\dots,b_{n-1}\}\subseteq\Fqm$ we have that
 \begin{equation}\label{eq:mod_rem_eval}
 	\remev{p}{b_i}=\remev{\left(p(x)\modr\minpolyRem{\set{B}}\right)}{b_i},
    \quad\forall i=0,\dots,n-1.
 \end{equation}
\end{lemma}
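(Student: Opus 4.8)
The plan is to mirror the proof of Lemma~\ref{lem:mod_op_eval}, replacing the generalized operator evaluation by the remainder evaluation and the minimal polynomial $\minpolyOp{\set{B}}{\a}$ by $\minpolyRem{\set{B}}$. First I would invoke the fact that $\SkewPolyring$ is a right Euclidean domain (stated earlier in the excerpt): for $p\in\SkewPolyring$ there exist unique $q,r\in\SkewPolyring$ with
\begin{equation*}
 p(x)=q(x)\minpolyRem{\set{B}}+r(x),\qquad r(x)=p(x)\modr\minpolyRem{\set{B}},
\end{equation*}
and $\deg(r)<\deg(\minpolyRem{\set{B}})$.

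Next I would evaluate both sides at $b_i$ via the remainder evaluation. The key point is that the remainder evaluation $\remev{\cdot}{b_i}$ is given by right division by $(x-b_i)$, and that right division is "compatible" with products in the following sense: for any $g,h\in\SkewPolyring$ one has $\remev{(g\cdot h)}{b} = \remev{(g\cdot \remev{h}{b})}{b}$, because $h(x) = h'(x)(x-b) + \remev{h}{b}$ and hence $g(x)h(x) = g(x)h'(x)(x-b) + g(x)\remev{h}{b}$, so the two sides agree modulo $(x-b)$ on the right. In particular, since $\minpolyRem{\set{B}}$ vanishes on every $b_i$ by its defining property~\eqref{eq:def_minpoly_rem} (being the monic \ac{LCLM} of the $x-b_i$, it is right-divisible by each $x-b_i$, so $\remev{\minpolyRem{\set{B}}}{b_i}=0$), the term $q(x)\minpolyRem{\set{B}}$ contributes zero under $\remev{\cdot}{b_i}$. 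Therefore $\remev{p}{b_i}=\remev{r}{b_i}=\remev{(p(x)\modr\minpolyRem{\set{B}})}{b_i}$ for all $i$.

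Alternatively — and perhaps cleaner — one observes directly that $p(x)-r(x) = q(x)\minpolyRem{\set{B}} \in I(\set{B})$, the left ideal of polynomials vanishing on $\set{B}$ under remainder evaluation (as in~\eqref{eq:def_ass_ideal}, using that $\minpolyRem{\set{B}}$ generates this left ideal), so $\remev{(p-r)}{b_i}=0$; combined with additivity of the remainder evaluation in its polynomial argument this gives $\remev{p}{b_i}=\remev{r}{b_i}$. I expect no serious obstacle here: the only slightly delicate point is making the "right-division respects products on the left" observation precise, i.e. that $\remev{q\cdot\minpolyRem{\set{B}}}{b_i}=0$ follows from $\remev{\minpolyRem{\set{B}}}{b_i}=0$ — but this is exactly the standard product rule for remainder evaluation recorded in~\cite{lam1985general,lam1988vandermonde}, so it may simply be cited rather than reproved.
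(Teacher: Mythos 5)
Your proof takes the same route as the paper: right-divide $p$ by $\minpolyRem{\set{B}}$ and use that the minimal polynomial vanishes at every $b_i$ to conclude $\remev{p}{b_i}=\remev{r}{b_i}$. The one place you go beyond the paper's terse argument is in explicitly justifying why $\remev{q(x)\minpolyRem{\set{B}}}{b_i}=0$ — a genuinely delicate point in the skew setting, since remainder evaluation is not multiplicative — and your observation that $g(x)h(x)=g(x)h'(x)(x-b)+g(x)\remev{h}{b}$ forces $\remev{gh}{b}=\remev{g\cdot\remev{h}{b}}{b}$ (hence $=0$ when $\remev{h}{b}=0$) is exactly the right way to close that gap; it is equivalent to the product rule of Lam and Leroy cited in the paper.
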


\begin{proof}
 Since $\SkewPolyring$ is a left/right Euclidean domain, there exist two unique polynomials $q,r\in\SkewPolyring$ such that
 \begin{equation}
 	p(x)=q(x)\minpolyRem{\set{B}}+r(x) 
 	\quad\Longleftrightarrow\quad
 	r(x)=p(x)\modr\minpolyRem{\set{B}}.
 \end{equation}
 Since $\minpolyRem{\set{B}}$ vanishes on all $b_i\in\set{B}$ we have that $\remev{p}{b_i}=\remev{r}{b_i}$ for all $i=0,\dots,n-1$ and the result follows.
 \qed
\end{proof}

The following result from~\cite{leroy1995pseudolinear} (see also~\cite{martinez2018skew}) shows the relation between the generalized operator and the remainder evaluation.

\begin{lemma}[Connection between Evaluation Types~\cite{martinez2018skew,leroy1995pseudolinear}]\label{lem:rel_remev_opev}
 For any $a\in\Fqm$, $b\in\Fqm^*$ and $f\in\SkewPolyringZeroDer$ we have that
 \begin{equation}
 	\remev{f}{\op{a}{b}b^{-1}}b=\opev{f}{b}{a}.
 \end{equation}
\end{lemma}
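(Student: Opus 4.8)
The plan is to unwind both sides of the claimed identity by comparing how the remainder evaluation and the generalized operator evaluation behave on monomials $x^i$, and then extend by $\Fq$-linearity of each side in the coefficients of $f$. First I would recall the key auxiliary fact linking the two evaluations at the level of a single application of $x$: for the zero-derivation setting (so multiplication is $x a = \aut(a) x$), one has the "conjugation" identity
\begin{equation}
 \remev{f(x)(x-c)}{b} = \remev{f}{b^c}\,(\aut(b) - \aut(c) b c^{-1})
\end{equation}
for $c \in \Fqm^*$ — or more precisely the standard fact (see~\cite{lam1988vandermonde,leroy1995pseudolinear}) that right-multiplying a skew polynomial by $(x-c)$ transforms its remainder-evaluation at $b \in \Fqm^*$ into a scalar multiple of its remainder-evaluation at the conjugate $b^c$. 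Iterating this gives a closed form for $\remev{x^i}{b}$ that I would then match against $\opexp{a}{b}{i}$ appearing in~\eqref{eq:opEval}.

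The cleaner route, which I would actually pursue, is an induction on $\deg f$ using the recursive structure of the operator evaluation. Write $f(x) = x g(x) + f_0$ with $\deg g < \deg f$. Then on the operator side, $\opev{f}{b}{a} = \op{a}{\opev{g}{b}{a}} + f_0 b = \aut(\opev{g}{b}{a}) a + f_0 b$ (using $\der = 0$, so $\op{a}{\cdot} = \aut(\cdot)a$). On the remainder side, set $b' = \op{a}{b} b^{-1} = \aut(b) a b^{-1}$, so that $x - b' = x - \aut(b) a b^{-1}$. The key computation is that $b \cdot (x - b')= b x - b \aut(b) a b^{-1}$... this needs care; instead I would use that $(x - b')$ satisfies $\remev{(x-b')}{b'} = 0$ together with the product rule for remainder evaluation under right division. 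Concretely, from $f = x g + f_0$ I want to show $\remev{f}{b'} \cdot b = \aut(\opev{g}{b}{a})\,a + f_0 b$, and by the induction hypothesis $\remev{g}{a^{?}} \cdot (\text{appropriate scalar}) = \opev{g}{b}{a}$ where the evaluation point shifts to the conjugate corresponding to peeling off one factor of $x$. I would make the bookkeeping precise by tracking the single-variable identity $\remev{x h}{\beta} = \aut(\remev{h}{\beta^{1/?}}) \cdot (\dots)$ — i.e., the statement that $\remev{x h}{\beta} = \aut(\remev{h}{\gamma})\,\delta$ where $\gamma$ is the conjugate of $\beta$ under the automorphism twist and $\delta$ is an explicit scalar; this is exactly Lemma~\ref{lem:rel_remev_opev} in the degree-one case, so the induction closes.

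The main obstacle I anticipate is getting the conjugation bookkeeping exactly right: the point $b$ on the operator side is fixed, while on the remainder side the natural induction shifts the evaluation point through the conjugacy class (each factor of $x$ moves $b'$ to a new conjugate), and one must verify that after $i$ steps the accumulated scalar factors telescope precisely to $b$ on the right and reproduce $\opexp{a}{b}{i}$. I would handle this by first proving, as a self-contained sub-lemma, the monomial identity $\remev{x^i}{\op{a}{b}b^{-1}} \cdot b = \opexp{a}{b}{i}$ for all $i \ge 0$ by induction on $i$ — the base case $i=0$ is trivial ($\remev{1}{\cdot} = 1$), and the inductive step uses $\opexp{a}{b}{i+1} = \op{a}{\opexp{a}{b}{i}} = \aut(\opexp{a}{b}{i}) a$ together with the degree-one remainder-evaluation shift. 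Once the monomial identity is established, summing $f_i$ times both sides over $i$ and invoking $\Fq$-linearity (indeed $\Fqm$-linearity in the coefficients, which is immediate for both evaluation maps) yields $\remev{f}{\op{a}{b}b^{-1}} b = \opev{f}{b}{a}$, completing the proof.
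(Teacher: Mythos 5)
The paper does not prove Lemma~\ref{lem:rel_remev_opev}; it is stated as a cited fact from~\cite{leroy1995pseudolinear,martinez2018skew}, so there is no in-paper argument to compare against. Your final plan — prove the monomial identity $\remev{x^i}{\op{a}{b}b^{-1}}\,b=\opexp{a}{b}{i}$ by induction on $i$, then conclude by $\Fqm$-linearity in the coefficients — is correct and is essentially the standard derivation via the norm/product rule for remainder evaluation. Concretely, with $b'=\op{a}{b}b^{-1}=\aut(b)ab^{-1}$, right division of $x\cdot x^i$ by $(x-b')$ gives the shift rule $\remev{x^{i+1}}{b'}=\aut\bigl(\remev{x^i}{b'}\bigr)b'$, and then
\begin{equation}
 \remev{x^{i+1}}{b'}\,b=\aut\bigl(\remev{x^i}{b'}\bigr)b'\,b=\aut\bigl(\opexp{a}{b}{i}b^{-1}\bigr)\aut(b)a=\aut\bigl(\opexp{a}{b}{i}\bigr)a=\opexp{a}{b}{i+1},
\end{equation}
closing the induction; both evaluation maps are visibly $\Fqm$-linear in $f_0,\dots,f_{\deg f}$, so summing finishes the proof.

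One caveat: the earlier portion of your write-up (the attempted induction on $\deg f$ via $f=xg+f_0$, with the undetermined conjugate ``$\beta^{1/?}$'' and the unverified claim that this reduces to the degree-one case) is left genuinely unresolved — you acknowledge the conjugation bookkeeping does not obviously telescope, and indeed in that framing the evaluation point on the remainder side would drift, which is exactly what makes it awkward. You should drop that paragraph and present only the self-contained monomial-induction argument, which is complete and does not suffer from the drifting-point issue because the evaluation point $b'$ is held fixed throughout.
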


\subsubsection{Skew Polynomial Vectors and Matrices}

To be consistent with the conventional notation in coding theory we denote both, vectors and matrices, by bold letters. The dimensions are clear from the context.

For two vectors $\vec{a},\vec{b}\in\SkewPolyring^n$ we denote the element-wise right modulo operation by
\begin{equation}
    \vec{a}\modr\vec{b}\defeq\left(a_0\modr b_0,a_1\modr b_1,\dots,a_{n-1}\modr b_{n-1}\right).
\end{equation}

For two vectors $\vec{a},\vec{b}\in\SkewPolyring^n$ we define the element-wise \ac{LCLM} as 
\begin{equation}
 \lclm{\vec{a},\vec{b}}{}\defeq\left(\lclm{a_0,b_0}{},\lclm{a_1,b_1}{},\dots,\lclm{a_{n-1},b_{n-1}}{}\right).
\end{equation}

For a vector $\vec{a}=\left(a_0,a_1,\dots,a_{n-1}\right)\in\SkewPolyring^n$ and a vector $\vec{w}=(w_0,w_1,\dots,w_{n-1})\in\mathbb{Z}_+^n$ we define its $\vec{w}$-weighted degree as
\begin{equation}\label{eq:def_w_deg} 
 \deg_\vec{w}(\vec{a})\defeq\max_{0 \leq j \leq n-1}\{\deg(a_j)+w_j\}.
\end{equation}
Further, we define the $\vec{w}$-weighted monomial ordering $\wOrder$ on $\SkewPolyring^n$ such that we have
\begin{equation}\label{eq:top_order}
	x^\ell\vec{e}_j\wOrder x^{\ell'}\vec{e}_{j'}
\end{equation}
if $\ell+w_j<\ell'+w_{j'}$ or if $\ell+w_j=\ell'+w_{j'}$ and $j<j'$, where $\vec{e}_j$ denotes the $j$-th unit vector over $\SkewPolyring$.
The definition of $\wOrder$ coincides with the $\vec{w}$-weighted \ac{TOP} ordering as defined in~\cite{adams1994introduction}.

For a vector $\vec{a} \in \SkewPolyring^{n} \setminus \{\0\}$ and weighting vector $\vec{w} = (w_0,\dots,w_{n-1}) \in \mathbb{Z}_+^n$, we define the $\vec{w}$-pivot index $\ind{\a}{\w}$ of $\vec{a}$ to be the largest index $j$ with $0 \leq j \leq n-1$ such that $\deg(a_j) + w_j = \deg_{\vec{w}}(\vec{a})$.

For a nonzero vector $\in \SkewPolyring^{n}$ we identify the \emph{leading term} $\LT{\a}$ of $\a$ as the maximum term $a_{i,j}x^j$ under $\wOrder$.
Note, that in this case $j$ coincides with the $\vec{w}$-pivot index of $\vec{a}$.

A matrix $\mat{A}\in\SkewPolyring^{a\times b}$ with $a \leq b$ is in (row) $\vec{w}$-ordered weak Popov form if the $\vec{w}$-pivot indices of its rows are strictly increasing in the row index~\cite{mulders2003lattice}.

A free $\SkewPolyring$-module is a module that has a basis that consists of $\SkewPolyring$-linearly independent elements. The rank of this module equals the cardinality of that basis.

In the following we consider particular bases for (left) $\SkewPolyring$-modules.

\begin{definition}[$\vec{w}$-ordered weak-Popov Basis~\cite{bartz2021fast}]
	Consider a \emph{left} $\SkewPolyring$-submodule $\module{M}$ of $\SkewPolyring^b$.
	For $\vec{w} \in \ZZ^a$, a left $\vec{w}$-ordered weak-Popov basis is a full-rank matrix $\mat{A}\in\SkewPolyring^{a \times a}$ s.t.
	\begin{enumerate}
		\item $\mat{A}$ is in $\vec{w}$-ordered weak Popov form.
		\item The rows of $\mat{A}$ are a basis of $\module{M}$.
	\end{enumerate}
\end{definition}

We will now establish a connection between $\vec{w}$-ordered weak-Popov Bases and Gröbner bases w.r.t. $\wOrder$ for left $\SkewPolyring$-submodules.
This connection is well-known for ordinary commutative polynomial rings (see e.g.~\cite{fitzpatrick1995key,alekhnovich2002linear,kojima2007canonical,nielsen2013list,neiger2016bases}).
For skew polynomial rings this relation was derived in~\cite[Chapter~6]{middeke2012computational} and also used in~\cite{bartz2021fast}.

For a short introduction to Gröbner bases the reader is referred to~\cite{sturmfels2005groebner}.
An extensive study of Gröbner bases can be found in~\cite{cox1992ideals}.

\begin{definition}[\Groebner Basis~\cite{cox1992ideals}]
 Let $\module{M}$ be a left $\SkewPolyring$-submodule. 
 A subset $\set{B}=\{\b_0,\b_1,\dots,\b_{\nu-1}\}\subset \module{M}$ is called a Gröbner basis for $\module{M}$ under $\wOrder$ if the leading terms of $\module{B}$ span a left module that contains all leading terms in $\module{M}$, i.e. if $\modSpan{\LT{b_0},\dots,\LT{b_{\nu-1}}}=\modSpan{\LT{\module{M}}}$.
\end{definition}

A Gröbner basis for a $\SkewPolyring$-submodule $\module{M}$ is not necessarily a \emph{minimal} generating set for $\module{M}$ since any subset of $\module{M}$ that contains a Gröbner basis is also a Gröbner basis (see~\cite{cox1992ideals,sturmfels2005groebner}). 
The following definition imposes a minimality requirement on the cardinality of \Groebner bases for an $\SkewPolyring$-submodule under $\wOrder$.

\begin{definition}[Minimal Gröbner Basis~\cite{cox1992ideals}]\label{def:minGroebner}
 Given a monomial ordering $\wOrder$, a Gröbner basis $\set{B}$ for a left $\SkewPolyring$-sub\-module $\module{M}$ is called minimal if for all $\p\in\set{B}$ the leading term $\LT{\p}$ is not contained in the module $\modSpan{\LT{\set{B}\setminus\{\p\}}}$, i.e. if $\LT{\p}\notin\modSpan{\LT{\set{B}\setminus\{\p\}}}$. 
\end{definition}

A minimal Gröbner basis $\set{B}$ w.r.t. to $\wOrder$ is called \emph{reduced} Gröbner basis if all leading terms are normalized and no monomial of $\p\in\set{B}$ is in $\modSpan{\LT{\set{B}\setminus\{\p\}}}$.	

Although~\cite[Theorem~6.29]{middeke2012computational} establishes the connection between the stronger $\w$-ordered Popov form and the corresponding \emph{reduced} Gröbner basis w.r.t. $\wOrder$, the arguments also hold for the relation between the $\w$-ordered weak-Popov form and the minimal Gröbner basis w.r.t. $\wOrder$.
	
Note that given a module monomial order $\prec$ and a basis $\set{B} \subset \SkewPolyring^n$ of a submodule $\module{M}$ there exist an efficient method to determine a weighting vector $\vec{w}$ and a column permutation $P$ such that the weak Popov form under $\wOrder$ of $P(\mathcal{M})$ equals the $P$-permuted \emph{minimal} Gröbner basis of $\module{M}$ under $\prec$ (see~\cite[Chapter~1.3.4]{neiger2016bases}).

\subsubsection{Cost Model for Skew Polynomial Operations}

For deriving the computational complexity we consider only skew polynomials with zero derivations, i.e. only skew polynomials from $\SkewPolyringZeroDer$.

We use the big-O notation $\oh{\cdot}$ to state asymptotic costs of algorithms. Further, we use shorthand $\softoh{\cdot}$ for $f(n) \in \softoh{g(n)} \Leftrightarrow \exists k: f(n) \in \oh{g(n)\log^k g(n)}$ which is equivalent to the $\oh{\cdot}$ notation, ignoring logarithmic factors in the input parameter.
We denote by $\matmult$ the matrix multiplication exponent, i.e.~the infimum of values $\matmult_0 \in [2; 3]$ such that there is an algorithm for multiplying $n \times n$ matrices over $\Fqm$ in $O(n^{\matmult_0})$ operations in $\Fqm$. The currently best known cost bound in operations in $\Fqm$ is $\matmult < 2.37286$ (see~\cite{puchinger2017fast}).

By $\OMul{n}$ we denote the cost of multiplying two skew polynomials from $\SkewPolyringZeroDer$ of degree $n$. The currently best known cost bound in operations for $\OMul{n}$ in $\Fqm$ is (see~\cite{puchinger2017fast})
\begin{equation*}
\OMul{n} \in \oh{n^{\min\left\{\frac{\matmult+1}{2},1.635\right\}}}.
\end{equation*}
There exist other fast algorithms whose complexity is stated w.r.t. operations in $\Fq$ (see~\cite{caruso2017fast,caruso2017new}).
Hence, the following skew polynomial operations in $\SkewPolyringZeroDer$ can be performed in $\softoh{\OMul{n}}$:
 \begin{itemize}
	\item Left/right division of two skew polynomials of degree at most $n$
	\item Generalized operator / remainder evaluation of a skew polynomial of degree at most $n$ at $n$ elements from $\Fqm$
	\item Computation of the minimal polynomials $\minpolyRem{\set{B}}$ and $\minpolyOp{\set{B}}{\vec{a}}$ for $|\set{B}|\leq n$ w.r.t. the remainder and generalized operator evaluation, respectively
	\item Computation of the \ac{LCLM} (see~\cite[Theorem~3.2.7]{caruso2017new})
\end{itemize}

\subsection{Skew Kötter--Nielsen--H\o holdt Interpolation}

We now consider the skew \ac{KNH} interpolation from~\cite{liu2014kotter}, which is the skew polynomial analogue of the \ac{KNH} interpolation over ordinary polynomial rings in~\cite{wang2005kotter}.
Note, that due to the isomorphism between $\SkewPolyring$ and the ring of linearized polynomials for $\aut$ being the Frobenius automorphism and $\der=0$ (zero derivations), the~\ac{KNH} variant over linearized polynomial rings in~\cite{xie2011general} can be seen as a special case of~\cite{liu2014kotter}.

As input to our problem, we consider the $n$ $\Fqm$-linear skew vector evaluation maps\footnote{In~\cite{wang2005kotter,liu2014kotter,xie2011general} linear functionals are defined for each interpolation point. 
Here we give a different definition based on skew polynomial vectors which is equivalent to the definition based on linear functionals if $\SkewPolyring^{\intParam+1}$ is considered as a vector space over $\Fqm$.} $\vecEvNoInput{i}$:
\begin{equation}\label{eq:skew_eval_maps}
   \vecEvNoInput{i} : \SkewPolyring^{\intParam+1} \rightarrow \Fqm \nonumber
\end{equation}
where $n$ is the number of interpolation constraints and $\intParam$ is an interpolation parameter.

Later on, we will specify particular mappings $\vecEvNoInput{i}$ w.r.t. the generalized operator and the remainder evaluation (see Section~\ref{sec:coding_applications}).
For each skew vector evaluation map $\vecEvNoInput{i}$ we define the kernels
\begin{equation}
	\module{K}_i\defeq\{\vec{Q}\in\SkewPolyring^{\intParam+1}:\vecEv{i}{\vec{Q}}=0\},\quad\forall i=0,\dots,n-1.
\end{equation}
For $0\leq i\leq n-1$ the intersection $\bar{\module{K}}_i\defeq\module{K}_0\,\cap\,\module{K}_1\,\cap\,\dots\,\cap\,\module{K}_i$ contains all vectors from $\SkewPolyring^{\intParam+1}$ that are mapped to zero under $\vecEvNoInput{0},\vecEvNoInput{1},\dots,\vecEvNoInput{i}$, i.e.
\begin{equation}
 \bar{\module{K}}_i=\{\vec{Q}\in\SkewPolyring^{\intParam+1}:\vecEv{j}{\vec{Q}}=0,\forall j=0,\dots,i\}.
\end{equation}

Under the assumption that the $\bar{\module{K}}_i$ are left $\SkewPolyring$-submodules for all $i=0,\dots,n-1$ (see~\cite{liu2014kotter}) we can state the general skew polynomial vector interpolation problem.

\begin{problem}[General Vector Interpolation Problem]\label{prob:generalIntProblem}
Given the integer $\intParam\in\mathbb{Z}_+$, a set of $\Fqm$-linear vector evaluation maps $\set{E}=\{\vecEvNoInput{0},\dots,\vecEvNoInput{n-1}\}$ and a vector $\vec{w}\in\mathbb{Z}_+^{\intParam+1}$ compute a $\vec{w}$-ordered weak-Popov Basis for the left $\SkewPolyring$-module
\begin{equation}\label{eq:int_module}
	\bar{\module{K}}_{n-1}=\{\vec{b}\in\SkewPolyring^{\intParam+1}:\vecEv{i}{\vec{b}}=0,\forall i=0,\dots,n-1\}.
\end{equation}
\end{problem}

Problem~\ref{prob:generalIntProblem} can be solved using a slightly modified variant of the multivariate skew \ac{KNH} interpolation from~\cite{liu2014kotter}.
Since the solution of Problem~\ref{prob:generalIntProblem} is a $\w$-ordered weak Popov basis for the interpolation module $\bar{\module{K}}_{n-1}$ instead of a single minimal polynomial vector, we modified the output of~\cite[Algorithm~1]{liu2014kotter} such that it returns a whole basis for the interpolation module $\bar{\module{K}}_{n-1}$. A similar approach was used in~\cite{bartz2017algebraic} to construct a basis for the interpolation module over linearized polynomial rings.

The modified multivariate skew \ac{KNH} interpolation is summarized in Algorithm~\ref{alg:skewMultVarKNH}.

\begin{algorithm}
	\caption{Modified Skew \ac{KNH} Interpolation}\label{alg:skewMultVarKNH}
	\SetKwInOut{Input}{Input}\SetKwInOut{Output}{Output}
	\Input{
		A set $\{\vecEvNoInput{0},\vecEvNoInput{1},\dots,\vecEvNoInput{n-1}\}$ of vector evaluation maps \\ 
		A ``weighting'' vector $\vec{w}=(w_0, w_1, \dots, w_{\intParam})\in\posInt^{\intParam+1}$
	}
	\Output{A $\vec{w}$-ordered weak-Popov Basis $\mat{B}\in\SkewPolyring^{(\intParam+1)\times(\intParam+1)}$ for $\intModule{n-1}$
	}
	\textbf{Initialize:}
	$\mat{B}=\mat{I}_{\intParam+1}\in\SkewPolyring^{(\intParam+1)\times(\intParam+1)}$\\

	\BlankLine
	\For{$i\leftarrow 0$ \KwTo $n-1$}{
		\For{$j\leftarrow 0$ \KwTo $\intParam$}{
			$\Delta_{j}\gets \vecEv{i}{\vec{b}_j}$ \label{alg1:functional}
		}
		$\set{J}\gets \{j:\Delta_{j}\neq 0\}$ \\
		\If{$\set{J}\neq \emptyset$}{
			$j^{*}\gets \min_{j\in \set{J}}\{\arg\min_{j\in \set{J}}\{\deg_\vec{w}(\vec{b}_{j})\}\}$ \label{alg1:minargmin} \\
			$\vec{b}^{*} \gets \vec{b}_{j^{*}}$ \\
			\For{$j\in \set{J}$}{
				\If{$j=j^{*}$}{
					$\vec{b}_{j} \gets \left(x-\frac{\vecEv{i}{x \vec{b}^{*}}}{\Delta_{j^{*}}}\right) \vec{b}^{*}$ \label{alg1:degreeinc} \tcc*[r]{degree-increasing step}					
				}
				\Else{	
					$\vec{b}_{j} \gets \vec{b}_{j} - \frac{\Delta_{j}}{\Delta_{j^{*}}}\vec{b}^{*}$ \label{alg1:crosseval} \tcc*[r]{cross-evaluation step}
				}
			}
		}
	}
	\Return{$\mat{B}$}
\end{algorithm}

Note, that $\min_{j\in \set{J}}\{\arg\min_{j\in \set{J}}\{\deg_\vec{w}(\vec{b}_{j})\}\}$ in Line~\ref{alg1:minargmin} returns the smallest index $j\in\set{J}$ to break ties, i.e. the index $j$ of the minimal vector $\b_j$ w.r.t. $\wOrder$ for which $\Delta_j \neq 0$ (see~\eqref{eq:top_order}).

In each iteration of Algorithm~\ref{alg:skewMultVarKNH} (and so~\cite[Algorithm~1]{liu2014kotter}) there are three possible update steps:
\begin{enumerate}
 \item \emph{No update}: The vector $\b_j$ is not updated if $\b_j$ is in the kernel $\bar{\module{K}}_i$ already, i.e. if $\Delta_j=\vecEv{i}{\vec{b}_j}=0$.

 \item \emph{Cross-evaluation} (or \emph{order-preserving}~\cite{liu2014kotter}) update: For any $\b_j$ that is not minimal w.r.t. $\wOrder$ (i.e. $j\neq j^*$) the cross-evaluation update (Line~\ref{alg1:crosseval}) is performed such that
 \begin{align*}
  \vecEvNoInput{i}\left(\vec{b}_{j} - \frac{\Delta_{j}}{\Delta_{j^{*}}}\vec{b}^{*}\right)
  =\vecEv{i}{\vec{b}_{j}} - \frac{\vecEv{i}{\vec{b}_{j}}}{\vecEv{i}{\vec{b}_{j^*}}}\vecEv{i}{\vec{b}_{j^*}}
  =0.
 \end{align*}
 Note, that the ($\w$-weighted) degree of $\b_j$ is not increased by this update.
 
 \item \emph{Degree-increasing} (or \emph{order-increasing}~\cite{liu2014kotter}) update: For the minimal vector $\b_{j^*}\defeq \b^*$ w.r.t. $\wOrder$ the degree-increasing update (Line~\ref{alg1:degreeinc}) is performed such that
 \begin{align*}
  \vecEvNoInput{i}\left(\left(x-\frac{\vecEv{i}{x \vec{b}^{*}}}{\Delta_{j^{*}}}\right) \vec{b}^{*}\right)
  =\vecEv{i}{x\b^{*}}-\frac{\vecEv{i}{x \vec{b}^{*}}}{\vecEv{i}{\vec{b}^{*}}}\vecEv{i}{ \vec{b}^{*}}
  =0.
 \end{align*}
 The ($\w$-weighted) degree of $\b^*$ is increased by one in this case. 
\end{enumerate}
The different update steps are illustrated in~\cite[Figure~1]{liu2014kotter}.
Define the sets 
\begin{equation}
    \set{S}_j = \{\Q\in\SkewPolyring^{\intParam+1}:\ind{\Q}{\w}=j\}\cup\{\0\}
\end{equation}
and
\begin{equation}
    \set{T}_{i,j} = \bar{\module{K}}_i\cap\set{S}_j
\end{equation}
for all $i=0,\dots,n-1$ and $j=0,\dots,\intParam$.
Note, that $\set{S}_j\cap\set{S}_{j'}=\{\0\}$ for all $0\leq j,j'\leq \intParam$.

The following result from~\cite{liu2014kotter} is fundamental for proving the correctness of Algorithm~\ref{alg:skewMultVarKNH}.

\begin{theorem}[{\cite[Theorem~5]{liu2014kotter}}]\label{thm:KNHminimality}
 After each iteration $i$ of Algorithm~\ref{alg:skewMultVarKNH}, the updated $\b_j$ is a minimum w.r.t. $\wOrder$ in $\set{T}_{i,j}$ for all $j=0,\dots,\intParam$.
\end{theorem}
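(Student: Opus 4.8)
The plan is to prove the statement by induction on the iteration index $i$, the induction hypothesis being exactly the claim of the theorem for $i-1$. Extending the notation by $\bar{\module{K}}_{-1} := \SkewPolyring^{\intParam+1}$, so that $\set{T}_{-1,j} = \set{S}_j$, the base case is the initialization $\mat{B} = \mat{I}_{\intParam+1}$: then $\vec{b}_j = \vec{e}_j$, whose leading term $x^0\vec{e}_j$ is $\wOrder$-minimal among all nonzero vectors of pivot index $j$. For the inductive step I write $\vec{b}_j$ for the vectors at the start of iteration $i$ and $\vec{b}_j'$ for their updates, and use the two structural facts $\set{T}_{i,j} = \set{T}_{i-1,j}\cap\module{K}_i\subseteq\set{T}_{i-1,j}$ and the standing assumption that the modules $\bar{\module{K}}_{i-1}$ and $\bar{\module{K}}_i = \bar{\module{K}}_{i-1}\cap\module{K}_i$ are left $\SkewPolyring$-submodules, hence closed under left multiplication by $x$. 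Recall that iteration $i$ leaves $\vec{b}_j$ unchanged for $j\notin\set{J}$, performs the cross-evaluation update for $j\in\set{J}\setminus\{j^{*}\}$, and the degree-increasing update for $j=j^{*}$.

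The core is an auxiliary claim: if $\set{J}\neq\emptyset$ then $\vec{b}^{*} = \vec{b}_{j^{*}}$ is a $\wOrder$-minimum of $\bar{\module{K}}_{i-1}\setminus\module{K}_i$. To prove it I would pick $Q$ in that set with $\wOrder$-minimal leading term (such $Q$ exists because $\wOrder$ is a well-order on monomials and $\vec{b}^{*}$ itself lies in the set, since $j^{*}\in\set{J}$), put $j' = \ind{Q}{w}$, and consider whether $\vecEv{i}{\vec{b}_{j'}}$ vanishes. If it does, then $\vec{b}_{j'}\in\bar{\module{K}}_{i-1}\cap\module{K}_i = \bar{\module{K}}_i$, so $x^\ell\vec{b}_{j'}\in\bar{\module{K}}_i\subseteq\module{K}_i$ for all $\ell\ge 0$; cancelling the position-$j'$ leading coefficient of $Q$ by a monomial multiple $cx^\ell\vec{b}_{j'}$ is then a standard \ac{TOP}-order division step that strictly lowers the leading term (both $Q$ and $\vec{b}_{j'}$ have pivot $j'$) while remaining in $\bar{\module{K}}_{i-1}$ and leaving the nonzero value $\vecEv{i}{Q}$ unchanged, contradicting minimality of $Q$. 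Hence $j'\in\set{J}$, so $\vec{b}_{j'}\in\bar{\module{K}}_{i-1}\setminus\module{K}_i$; minimality of $Q$ and the induction hypothesis then force $\LT{Q} = \LT{\vec{b}_{j'}}$. Since every $\vec{b}_j$ with $j\in\set{J}$ belongs to $\bar{\module{K}}_{i-1}\setminus\module{K}_i$ and has pivot index $j$, the term $\LT{\vec{b}_{j'}}$ is $\wOrder$-minimal over $\{\vec{b}_j : j\in\set{J}\}$, and that minimum is attained precisely by the index $j^{*}$ chosen in Line~\ref{alg1:minargmin} (minimal $\vec{w}$-weighted degree, ties broken by smallest index); thus $\LT{\vec{b}^{*}} = \LT{Q}$.

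Granting the claim, the inductive step is bookkeeping. For membership: if $j\notin\set{J}$ then $\vec{b}_j' = \vec{b}_j$ already lies in $\module{K}_i$; if $j\in\set{J}\setminus\{j^{*}\}$ then $\vec{b}_j' = \vec{b}_j - \tfrac{\Delta_j}{\Delta_{j^{*}}}\vec{b}^{*}\in\bar{\module{K}}_{i-1}$ has $\vecEv{i}{\vec{b}_j'}=0$, and since $\deg_{\vec{w}}(\vec{b}^{*})\le\deg_{\vec{w}}(\vec{b}_j)$ with $j^{*}<j$ in case of equality, the subtracted multiple of $\vec{b}^{*}$ does not reach the leading coefficient of $\vec{b}_j$ at position $j$, so $\LT{\vec{b}_j'} = \LT{\vec{b}_j}$; if $j=j^{*}$ then $\vec{b}_{j^{*}}' = x\vec{b}^{*} - \tfrac{\vecEv{i}{x\vec{b}^{*}}}{\Delta_{j^{*}}}\vec{b}^{*}\in\bar{\module{K}}_{i-1}$ has $\vecEv{i}{\vec{b}_{j^{*}}'}=0$, and the subtracted term having strictly smaller $\vec{w}$-weighted degree, $\LT{\vec{b}_{j^{*}}'}$ equals the leading term of $x\vec{b}^{*}$, with pivot $j^{*}$ and degree one above $\LT{\vec{b}^{*}}$. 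In each case $\vec{b}_j'\in\bar{\module{K}}_i\cap\set{S}_j = \set{T}_{i,j}$. For minimality: take $Q\in\set{T}_{i,j}\setminus\{\0\}\subseteq\set{T}_{i-1,j}$; by the induction hypothesis $\LT{\vec{b}_j}\wOrder\LT{Q}$ or they coincide, which already settles $j\neq j^{*}$ because then $\LT{\vec{b}_j'}=\LT{\vec{b}_j}$. If $j=j^{*}$ and some such $Q$ satisfied $\LT{Q}\wOrder\LT{\vec{b}_{j^{*}}'}$, then — as $Q$ and $\vec{b}^{*}$ share pivot $j^{*}$ and no monomial of pivot $j^{*}$ lies strictly between $\LT{\vec{b}^{*}}$ and the leading monomial of $x\vec{b}^{*}$ — the monomials of $\LT{Q}$ and $\LT{\vec{b}^{*}}$ agree, so $R := Q - \LC{Q}\LC{\vec{b}^{*}}^{-1}\vec{b}^{*}$ lies in $\bar{\module{K}}_{i-1}$, has leading term strictly below $\LT{\vec{b}^{*}}$ or is $\0$, and $\vecEv{i}{R} = -\LC{Q}\LC{\vec{b}^{*}}^{-1}\Delta_{j^{*}}\neq 0$; thus $R\in\bar{\module{K}}_{i-1}\setminus\module{K}_i$ with leading term below $\LT{\vec{b}^{*}}$, contradicting the auxiliary claim.

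I expect the main obstacle to be the auxiliary claim, and inside it the reduction step handling a minimal counterexample $Q$ whose pivot index $j'$ does not belong to $\set{J}$: this is exactly where one must use that $\bar{\module{K}}_i$ — and not merely $\module{K}_i$, which need not be a submodule — is a left submodule, so that $Q - cx^\ell\vec{b}_{j'}$ stays in $\bar{\module{K}}_{i-1}$ and the value $\vecEv{i}{Q}$ is preserved (the maps $\vecEv{i}$ are $\Fqm$-linear but not $\SkewPolyring$-linear). The remaining work — tracking the pivot index and leading term through each of the three update types and the ``no monomial in between'' observation — is routine once one fixes the convention that the pivot index is the \emph{largest} maximizing coordinate.
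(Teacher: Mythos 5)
The paper does not actually contain a proof of this statement: it is imported verbatim from the cited reference (Theorem~5 of Liu--Manganiello--Kschischang), so there is no in-paper argument to compare against. Your induction is correct and is essentially the standard bottom-up argument behind that cited result: the crux is your auxiliary claim that $\b^{*}$ is $\wOrder$-minimal in $\bar{\module{K}}_{i-1}\setminus\module{K}_i$, and you correctly identify that the reduction step for a minimal counterexample with pivot outside $\set{J}$ needs the standing assumption that $\bar{\module{K}}_i$ is a left $\SkewPolyring$-submodule together with the $\Fqm$-linearity of the maps $\vecEvNoInput{i}$. Two points you leave implicit but which do hold and deserve a word: the cancelling multiple $cx^{\ell}\b_{j'}$ has $\ell\geq 0$ only because the induction hypothesis gives $\deg_{\vec{w}}(\b_{j'})\leq\deg_{\vec{w}}(Q)$ for $Q\in\set{T}_{i-1,j'}$, and left multiplication by $x$ shifts every component degree (hence the pivot index and leading monomial) by exactly one even for nonzero $\der$, since $\aut$ is an automorphism and thus preserves nonzero leading coefficients.
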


In other words, after the $i$-th iteration each $\b_j$ has $\ind{\b}{\w}=j$ and the minimal $\w$-weighted degree among all vectors in $\set{T}_{i,j}$.
Therefore, after the $i$-th iteration, the matrix $\B$ is a $\w$-ordered weak Popov basis for $\bar{\module{K}}_i$.

\begin{lemma}[Correctness of Algorithm~\ref{alg:skewMultVarKNH}]\label{lem:correctness_skewMultVarKNH}
 Algorithm~\ref{alg:skewMultVarKNH} is correct and provides a solution to the general vector interpolation problem in Problem~\ref{prob:generalIntProblem}.
\end{lemma}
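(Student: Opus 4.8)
The plan is to deduce the statement from Theorem~\ref{thm:KNHminimality}, which already carries the entire technical burden. Termination of Algorithm~\ref{alg:skewMultVarKNH} is immediate: it consists of two nested loops over finite index ranges, each iteration performing only finitely many skew-polynomial operations. Hence everything reduces to showing that the matrix $\B$ returned after the iteration $i=n-1$ is a $\w$-ordered weak-Popov basis of $\bar{\module{K}}_{n-1}$, which is exactly the output required by Problem~\ref{prob:generalIntProblem}.

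By Theorem~\ref{thm:KNHminimality}, after the final iteration each row $\b_j$ of $\B$ is a minimum with respect to $\wOrder$ in $\set{T}_{n-1,j}=\bar{\module{K}}_{n-1}\cap\set{S}_j$, for all $j=0,\dots,\intParam$. First I would check that this minimum is a nonzero vector, so that the statement is non-vacuous: since each $\vecEvNoInput{i}$ is $\Fqm$-linear, the module $\bar{\module{K}}_{n-1}$ is an intersection of $n$ kernels of $\Fqm$-linear functionals on the $\Fqm$-vector space $\SkewPolyring^{\intParam+1}$ and therefore has $\Fqm$-codimension at most $n$; in particular the images of $\vec{e}_j,x\vec{e}_j,x^{2}\vec{e}_j,\dots$ in the quotient satisfy a nontrivial $\Fqm$-linear relation, so $\bar{\module{K}}_{n-1}$ contains $p(x)\vec{e}_j$ for some nonzero $p\in\SkewPolyring$, whence $\set{T}_{n-1,j}\neq\{\0\}$ and its $\wOrder$-minimum $\b_j$ is nonzero with $\ind{\b_j}{\w}=j$ (as $\set{T}_{n-1,j}\subseteq\set{S}_j$). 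Thus the pivot indices of the rows of $\B$ are $0,1,\dots,\intParam$ and strictly increasing, so $\B$ is in $\w$-ordered weak Popov form and all its rows lie in $\bar{\module{K}}_{n-1}$. Distinctness of the pivot indices also forces $\SkewPolyring$-linear independence of the rows: in any nontrivial left combination $\sum_{j}p_j\b_j$ the summand whose leading term is $\wOrder$-largest cannot be cancelled by the others. Hence $\B$ has full rank $\intParam+1$.

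The one remaining point of substance is that the rows of $\B$ \emph{generate} $\bar{\module{K}}_{n-1}$. I would prove this by the classical leading-term reduction, using that $\bar{\module{K}}_{n-1}$ is a left $\SkewPolyring$-module: given a nonzero $\vec{v}\in\bar{\module{K}}_{n-1}$, put $j\defeq\ind{\vec{v}}{\w}$, so that $\vec{v}\in\set{T}_{n-1,j}$ and hence $\deg_{\w}(\vec{v})\geq\deg_{\w}(\b_j)$ by minimality of $\b_j$; writing $\LT{\vec{v}}=v\,x^{e}\vec{e}_j$ and $\LT{\b_j}=b\,x^{e'}\vec{e}_j$ one gets $e\geq e'$, and the left multiple $c\,x^{e-e'}\b_j$ with $c=v\,\aut^{e-e'}(b)^{-1}$ has the same leading term as $\vec{v}$. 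Then $\vec{v}-c\,x^{e-e'}\b_j$ again lies in $\bar{\module{K}}_{n-1}$ and has a strictly $\wOrder$-smaller leading term. Because $\w\in\posInt^{\intParam+1}$ bounds the $\w$-weighted degrees from below, $\wOrder$ is a well-ordering of the monomials of $\SkewPolyring^{\intParam+1}$, so this reduction terminates and expresses $\vec{v}$ as a left $\SkewPolyring$-combination of $\b_0,\dots,\b_\intParam$. Combined with the previous paragraph, this shows that the rows of $\B$ form a basis of $\bar{\module{K}}_{n-1}$ in $\w$-ordered weak Popov form, i.e.\ a solution to Problem~\ref{prob:generalIntProblem}.

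I expect no real obstacle here, since the delicate step is already settled upstream: maintaining, through the degree-increasing and cross-evaluation updates, that each $\b_j$ stays $\wOrder$-minimal in $\set{T}_{i,j}$ is precisely the content of Theorem~\ref{thm:KNHminimality} (that is, \cite[Theorem~5]{liu2014kotter}). What is left is bookkeeping: termination, the weak-Popov-form and linear-independence consequences of having distinct increasing pivot indices, non-vacuousness of the minima (resting on the finite $\Fqm$-codimension of $\bar{\module{K}}_{n-1}$), and the standard fact that a $\w$-ordered weak-Popov set of module elements that is pivot-wise minimal is a generating set. If one preferred to avoid even the codimension remark, one could instead argue directly that the $\b_j$ never vanish: they start as the unit vectors $\vec{e}_j$, left multiplication by a nonzero skew polynomial in the degree-increasing step preserves non-vanishing (as $\SkewPolyring$ is a domain), and a cross-evaluation update $\b_j\mapsto\b_j-\frac{\Delta_j}{\Delta_{j^{*}}}\b^{*}$ cannot produce $\0$ because $\b_j$ and $\b^{*}$ have different pivot indices.
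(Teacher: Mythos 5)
Your proof is correct and follows essentially the same strategy as the paper: both deduce the result from Theorem~\ref{thm:KNHminimality} and then show that pivot-wise $\wOrder$-minimality implies that the rows of $\B$ form a weak-Popov generating set via a leading-term reduction. Your version is somewhat more careful than the paper's: you make the termination of the reduction explicit via the well-ordering of monomials, you supply the non-vacuousness of the $\set{T}_{n-1,j}$ and the $\SkewPolyring$-linear independence of the rows (which the paper takes for granted), and you avoid a small imprecision in the paper's contradiction step, which asserts $\deg_\w(\r)<\min_j\{\deg_\w(\b_j)\}$ for the irreducible remainder $\r$, whereas what the reduction actually yields is $\deg_\w(\r)<\deg_\w(\b_{j'})$ for $j'=\ind{\r}{\w}$ — still enough to contradict minimality of $\b_{j'}$ in $\set{T}_{n-1,j'}$, which is how you phrase it.
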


\begin{proof}
 The update steps of Algorithm~\ref{alg:skewMultVarKNH} and~\cite[Algorithm~1]{liu2014kotter} are equivalent and therefore we have by~\cite[Theorem~5]{liu2014kotter} that after the $i$-th iteration each $\b_j\in\bar{\module{K}}_i$.
 We now will show that after the $i$-th iteration of Algorithm~\ref{alg:skewMultVarKNH} the matrix $\B$ is a $\w$-ordered weak Popov basis for $\bar{\module{K}}_i$.
 By Theorem~\ref{thm:KNHminimality} (\cite[Theorem~5]{liu2014kotter}) each $\b_j$ has the minimal $\w$-weighted degree among all polynomials in $\set{T}_{i,j}$, which implies that the $\w$-pivot indices of $\b_0,\dots,\b_\intParam$ are increasing and distinct.
 Now assume the there exists a vector $\p\in\bar{\module{K}}_i$ that can not be represented by a $\SkewPolyring$-linear combination of the form
 \begin{equation*}
  \p=\sum_{j=0}^{\intParam}a_j\b_j
 \end{equation*}
 for some $a_j\in\SkewPolyring$.
 Then we must have that $\p$ can be written as
 \begin{equation}
     \p=\r + \sum_{j=0}^{\intParam}a_j\b_j
 \end{equation}
 where $\deg_\w(\r)<\min_j\{\deg_\w(\b_j)\}$.
 This contradicts that $\b_j$ is a minimum w.r.t. $\wOrder$ in $\set{T}_{i,j}$ since $\ind{\r}{\w}\in\{0,\dots,\intParam\}$.
 Therefore we conclude that after the $i$-th iteration $\B$ is a $\w$-ordered weak Popov basis for $\bar{\module{K}}_i$.
\end{proof}

\begin{proposition}[Computational Complexity of Algorithm~\ref{alg:skewMultVarKNH}]\label{prop:complSkewMultVarKNH}
 The complexity of Algorithm~\ref{alg:skewMultVarKNH} is dominated by the complexity of:
 \begin{itemize}
    \item $\OCompl{\intParam n}$ evaluation maps $\vecEvNoInput{i}$ applied to a vector from $\SkewPolyring^{\intParam+1}_{\leq n}$,
    
    \item $n$ multiplications of a monic degree-1 skew polynomial with a vector from $\SkewPolyring^{\intParam+1}_{\leq n}$ (degree-increasing step),
    
    \item $\OCompl{\intParam n}$ multiplications of an element from $\Fqm$ with a vector from $\SkewPolyring^{\intParam+1}_{\leq n}$ (cross-evaluation step).
 \end{itemize}
\end{proposition}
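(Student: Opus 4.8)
The plan is to prove the cost decomposition by a direct accounting over the two nested loops of Algorithm~\ref{alg:skewMultVarKNH}, preceded by a bound on the $\vec{w}$-weighted degrees of all intermediate vectors so that each of them may legitimately be charged as an element of $\SkewPolyring^{\intParam+1}_{\leq n}$ (up to the constant hidden in the $O$-notation).

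First I would establish the degree invariant. After initialization $\B=\mat{I}_{\intParam+1}$, so $\deg_{\vec{w}}(\vec{b}_j)=w_j$ for all $j$. The cross-evaluation update in Line~\ref{alg1:crosseval} replaces $\vec{b}_j$ by $\vec{b}_j-\frac{\Delta_j}{\Delta_{j^*}}\vec{b}^*$ and, as already observed in the discussion of the update steps, does not increase $\deg_{\vec{w}}(\vec{b}_j)$, whereas the degree-increasing update in Line~\ref{alg1:degreeinc} raises $\deg_{\vec{w}}$ of exactly one row by exactly one. Since a single pivot index $j^*$ is selected per pass of the outer loop, at most one degree-increasing update occurs per iteration, hence at most $n$ in total; consequently $\deg_{\vec{w}}(\vec{b}_j)\leq w_j+n$ after all iterations, and every vector handled by the algorithm — including $x\vec{b}^*$ — has $\vec{w}$-weighted degree $\leq\max_j w_j+n=O(n)$ under the assumption, met in all the coding applications, that the entries of $\vec{w}$ are $O(n)$.

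Next I would count the evaluation-map applications. In each of the $n$ outer iterations, Line~\ref{alg1:functional} applies $\vecEvNoInput{i}$ to the $\intParam+1$ rows $\vec{b}_0,\dots,\vec{b}_\intParam$, and the degree-increasing branch applies $\vecEvNoInput{i}$ once more to $x\vec{b}^*$, giving $(\intParam+2)n\in\OCompl{\intParam n}$ evaluation-map applications to vectors of $\vec{w}$-weighted degree $O(n)$. Then I would count the polynomial arithmetic: Line~\ref{alg1:degreeinc} performs one left multiplication of $\vec{b}^*$ by the monic degree-$1$ skew polynomial $x-\vecEv{i}{x\vec{b}^*}/\Delta_{j^*}$ and is executed at most once per outer iteration, hence at most $n$ such multiplications; and Line~\ref{alg1:crosseval} is executed once for each $j\in\set{J}\setminus\{j^*\}$, i.e.\ at most $|\set{J}|-1\leq\intParam$ times per iteration and thus $\OCompl{\intParam n}$ times overall, each execution costing one multiplication of an element of $\Fqm$ by a vector in $\SkewPolyring^{\intParam+1}_{\leq n}$ together with a vector subtraction of the same cost order.

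Finally I would remark that the remaining work — forming the set $\set{J}$, reading off the $\deg_{\vec{w}}(\vec{b}_j)$ and carrying out the selection of $j^*$ in Line~\ref{alg1:minargmin}, and the field divisions $\Delta_j/\Delta_{j^*}$ — is dominated termwise by the three items above, which yields the stated decomposition. I expect the only delicate point to be the degree invariant in the first step: everything downstream relies on being able to treat each intermediate $\vec{b}_j$ (and $x\vec{b}^*$) as living in $\SkewPolyring^{\intParam+1}_{\leq n}$, so one must argue carefully that cross-evaluation steps genuinely cannot inflate $\deg_{\vec{w}}$ and that the cap of one degree-increasing step per outer iteration really holds.
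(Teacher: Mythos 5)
Your proof is correct and takes essentially the same approach as the paper's, namely a direct per-iteration accounting of the three operation types ($\intParam+2$ evaluation-map applications, one monic degree-$1$ left multiplication, and at most $\intParam$ cross-evaluation updates per outer pass). The only substantive difference is that you explicitly verify the degree invariant ($\deg_{\vec w}(\vec b_j)\le w_j+n$ at all times, since cross-evaluation preserves $\deg_{\vec w}$ and at most one degree-increasing step fires per iteration) so that charging each operation to a vector in $\SkewPolyring^{\intParam+1}_{\le n}$ is legitimate; the paper's proof leaves this step implicit and just states the per-iteration counts.
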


\begin{proof}
 In each of the $n$ iterations we have:
 \begin{itemize}
  \item $\intParam+2$ evaluation maps $\vecEvNoInput{i}$ applied to a vector from $\SkewPolyring^{\intParam+1}_{\leq n}$ (Line~\ref{alg1:functional}),
  
  \item one product of a skew polynomial of degree $1$ with a vector from $\SkewPolyring^{\intParam+1}_{\leq n}$ (degree-increasing step in Line~\ref{alg1:degreeinc}),

  \item $\intOrder$ multiplications of an element from $\Fqm$ with a vector from $\SkewPolyring^{\intParam+1}_{\leq n}$ (cross-evaluation step in Line~\ref{alg1:crosseval}),

  \item $\intParam+1$ inversions/divisions in $\Fqm$.
 \end{itemize}
 \qed
\end{proof}
\section{Fast Kötter--Nielsen--H\o holdt Interpolation over Skew Polynomial Rings}\label{sec:fast_skew_knh}

In~\cite{nielsen2014fast} a fast \ac{DaC} variant of the Kötter interpolation for the Guruswami--Sudan decoder for Reed--Solomon codes was presented.
We now use ideas from~\cite{nielsen2014fast} to speed up the skew \ac{KNH} interpolation from~\cite{liu2014kotter}. 
The main idea is to sub-divide Problem~\ref{prob:generalIntProblem} into a tree of successively smaller problems. 
Each leaf is identified with a linear functional, and the updates done here are represented as skew polynomial matrices. 
The inner-nodes of the tree combine updates using matrix multiplication. 
The entire cost of the algorithm hinges on the fact that at any node of the tree, we need only know the intermediate basis up to its image on the linear functionals of the subtree of that node.

In the following, we describe the general framework for the fast skew~\ac{KNH} interpolation algorithm which we will then discuss in Section~\ref{sec:coding_applications} w.r.t. to particular operator and remainder vector evaluation maps.

The operations performed on the basis $\mat{B}$ in the inner loop of the $i$-th iteration of Algorithm~\ref{alg:skewMultVarKNH} can be represented by the matrix
\begin{equation}\label{eq:opMatrix}
\arraycolsep=5pt
\mat{U}=
\left(
\begin{array}{ccc|c|ccc}
1 & & & -\frac{\Delta_0}{\Delta_{j^{*}}} & & &
\\
& \ddots & & \vdots & & &
\\
& & 1 & -\frac{\Delta_{j^{*}-1}}{\Delta_{j^{*}}} & & &
\\
& &	& \left(x-\frac{\vecEv{i}{x \vec{b}_{j^{*}}}}{\Delta_{j^{*}}}\right) & & &
\\
& & & -\frac{\Delta_{j^{*}+1}}{\Delta_{j^{*}}} & 1 & &
\\
& & & \vdots &  & \ddots &
\\
& & & -\frac{\Delta_{\intParam}}{\Delta_{j^{*}}} &  & & 1
\end{array}
\right)
\end{equation}
such that after the $i$-th iteration we obtain the basis $\mat{U}\mat{B}$ for $\bar{\module{K}}_i$.
Note, that if $\Delta_j=0$ no update on the row $\vec{b}_j$ is performed since $\Delta_j/\Delta_{j^{*}}=0$. 

\subsection{Divide-and-Conquer Skew Kötter Interpolation}

To describe the following algorithms we introduce some notations. $\vec{M}_{[i,j]} \in \SkewPolyring^{\intParam+1}$ denotes a polynomial vector that is dependent on the index set $\{i,i+1,\ldots,j-1,j\}$ with $j\geq i$ and $\vec{M}_{[i,i]} = \vec{M}_i$.
The (ordered) set $\mathcal{M}$ is globally available for all algorithms and is defined as
\begin{equation}
\set{M}=(\vec{M}_{[0,n-1]},\vec{M}_{[0,\lfloor n/2\rfloor-1]},\vec{M}_{[\lfloor n/2\rfloor,n-1]},\ldots,\vec{M}_{0},\vec{M}_{1},\ldots,\vec{M}_{n-1})\subseteq\SkewPolyring^{\intParam+1}
\end{equation}
for an integer $n\in\posInt$.
For an ordered set (or tuple) of evaluation maps $\set{E} = (\vecEvNoInput{0},\ldots,\vecEvNoInput{n-1})$ we use a similar notation to access an ordered subset of $\set{E}$ as follows
\begin{equation}
\set{E}_{[i,j]}=(\vecEvNoInput{i},\dots,\vecEvNoInput{j}).
\end{equation}
Depending on the considered interpolation problem, we will later on define the polynomial vectors $\vec{M}_{[i,j]}$ to contain minimal polynomials that depend on the interpolation points corresponding to the vector evaluation maps in $\set{E}_{[i,j]}$.
In the general interpolation problem (Problem~\ref{prob:generalIntProblem}) we only consider sets of evaluation maps whereas here we consider \emph{ordered} sets since we need the notation within the \ac{DaC} algorithm to build up the tree.

In order to describe a general framework for the fast skew~\ac{KNH} interpolation, we need the following assumption. In Section~\ref{sec:coding_applications} we show that this assumption holds for specific coding applications.

\begin{assumption}\label{ass:mod_vectors}
Let $\set{E}=\{\vecEvNoInput{0},\ldots,\vecEvNoInput{n-1}\}$ be a set of linear functionals as defined in~\eqref{eq:skew_eval_maps} and let $\set{E}_{[i,j]}=\{\vecEvNoInput{i},\ldots,\vecEvNoInput{j}\}\subseteq\set{E}$. 
We assume that for all $\Q\in\SkewPolyring^{\intParam+1}$ and $0\leq i\leq j\leq n-1$ the skew polynomial vector $\vecMinpolyNoX{[i,j]}\in\set{M}$ (which contains minimal skew polynomials that depend on $\set{E}_{[i,j]}$) satisfies
 \begin{equation}
  \vecEv{l}{\vec{Q}}=\vecEv{l}{\vec{Q}\modr\vecMinpolyNoX{[i,j]}},
  \quad\forall l=i,\dots,j.
 \end{equation}
\end{assumption}

\begin{algorithm}
	\caption{\textsf{SkewInterpolatePoint}}\label{alg:skewIntPoint}
	\SetKwInOut{Input}{Input}\SetKwInOut{Output}{Output}
	\Input{A skew vector evaluation map $\vecEvNoInput{i}$ 
	,\\ 
	$\mat{B} \in \SkewPolyring^{(\intParam+1)\times(\intParam+1)}$\\
	$\vec{d}=(d_0,d_1,\ldots,d_{s}) \in \posInt^{\intParam+1}$ s.t. $d_j=\deg_\vec{w}(\vec{b}_j)$ for all $j=0,\dots,\intParam$.
	}
	\BlankLine
	\Output{$\mat{T}\in\SkewPolyring^{(\intParam+1)\times(\intParam+1)}$ s.t. the rows of $\mat{\hat{B}}\defeq\mat{TB}$ is a $\vec{w}$-ordered weak-Popov Basis for $\spannedBy{\mat{B}}\cap\module{K}_i$,\\ 
	$\vec{\hat{d}}=(\hat{d}_0,\hat{d}_1,\ldots,\hat{d}_{s}) \in \posInt^{\intParam+1}$ s.t. $\hat{d}_j=\deg_\vec{w}(\vec{\hat{b}}_j)$ for all $j=0,\dots,\intParam$.
	}

	\BlankLine
	$\vec{\hat{d}} \gets \vec{{d}}$ \\
	\For{$j\leftarrow 0$ \KwTo $\intParam$}{
		$\Delta_{j}\gets \vecEv{i}{\vec{b}_j}$
	}
	$\set{J}\gets \{j:d_{j}\neq 0\}$ \\
	$\mat{T}\gets\mat{I}_{\intParam+1}\in\SkewPolyring^{(\intParam+1)\times(\intParam+1)}$ \\
	\If{$\set{J}\neq \emptyset$}{
			$j^{*}\gets\min_{l\in \set{J}}\{\arg\min_{l\in \set{J}}\{d_l\}\}$ \\
			$\mat{T}\gets\mat{U}$ where $\mat{U}$ is as in~\eqref{eq:opMatrix}\\
			$\hat{d}_{j^{*}}\gets\hat{d}_{j^{*}}+1$
	}
	\Return{$(\mat{T}, \vec{\hat{d}})$}
\end{algorithm}

\begin{lemma}[Correctness of Algorithm~\ref{alg:skewIntPoint}]
 \textsf{SkewInterpolatePoint} in Algorithm~\ref{alg:skewIntPoint} is correct.
\end{lemma}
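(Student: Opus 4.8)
The plan is to show that Algorithm~\ref{alg:skewIntPoint} does in a single point what one inner iteration ($j$-loop) of Algorithm~\ref{alg:skewMultVarKNH} does, namely it realizes the operation matrix $\mat{U}$ of~\eqref{eq:opMatrix} and returns it together with the correctly updated degree vector. First I would observe that the quantities $\Delta_j = \vecEv{i}{\vec{b}_j}$ and the pivot selection $j^* = \min_{l\in\set{J}}\{\arg\min_{l\in\set{J}}\{d_l\}\}$ are computed exactly as in Algorithm~\ref{alg:skewMultVarKNH} (using that $d_l = \deg_\vec{w}(\vec{b}_l)$ is supplied as input, so the $\arg\min$ over $d_l$ is literally the $\arg\min$ over $\deg_\vec{w}(\vec{b}_l)$), and that the index set $\set{J}$ here selects exactly the rows on which a nonzero update happens. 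Then, by construction $\mat{T} = \mat{U}$, so $\hat{\mat{B}} = \mat{T}\mat{B} = \mat{U}\mat{B}$, and the rows of $\mat{U}\mat{B}$ are precisely the $\vec{b}_j$ after one iteration of the inner loop of Algorithm~\ref{alg:skewMultVarKNH} applied to the basis $\mat{B}$ with the evaluation map $\vecEvNoInput{i}$.

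Next I would invoke the already-established theory: since $\spannedBy{\mat{B}}$ is a free left $\SkewPolyring$-submodule and the matrix $\mat{U}$ is unimodular (its determinant is a unit times $\bigl(x - \vecEv{i}{x\vec{b}_{j^*}}/\Delta_{j^*}\bigr)$ up to the lower-triangular part — more precisely, $\mat{U}$ is a product of the elementary cross-evaluation operations and one degree-increasing operation, each of which preserves the module), the rows of $\hat{\mat{B}} = \mat{U}\mat{B}$ again form a basis of a submodule of $\spannedBy{\mat{B}}$. By the correctness analysis underlying Theorem~\ref{thm:KNHminimality} and Lemma~\ref{lem:correctness_skewMultVarKNH} — which only uses local properties of a single iteration — the updated rows lie in $\spannedBy{\mat{B}}\cap\module{K}_i$, have strictly increasing and distinct $\vec{w}$-pivot indices, and are of minimal $\vec{w}$-weighted degree in their respective $\set{S}_j\cap\bigl(\spannedBy{\mat{B}}\cap\module{K}_i\bigr)$ classes; hence $\hat{\mat{B}}$ is a $\vec{w}$-ordered weak-Popov basis for $\spannedBy{\mat{B}}\cap\module{K}_i$. (If $\set{J}=\emptyset$ then every $\vec{b}_j$ already lies in $\module{K}_i$, so $\spannedBy{\mat{B}}\cap\module{K}_i = \spannedBy{\mat{B}}$ and $\mat{T}=\mat{I}_{\intParam+1}$ is trivially correct.)

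Finally I would verify the degree bookkeeping: the output $\vec{\hat{d}}$ is initialized to $\vec{d}$ and only the entry $\hat{d}_{j^*}$ is incremented by $1$. This matches the three update cases listed after Algorithm~\ref{alg:skewMultVarKNH}: a no-update row keeps its weighted degree; a cross-evaluation row $\vec{b}_j \mapsto \vec{b}_j - (\Delta_j/\Delta_{j^*})\vec{b}^*$ does not increase its weighted degree (since $\vec{b}^*$ is minimal, $\deg_\vec{w}(\vec{b}^*) \le \deg_\vec{w}(\vec{b}_j)$, and subtracting a scalar multiple of $\vec{b}^*$ cannot raise $\deg_\vec{w}(\vec{b}_j)$), so $\hat d_j = d_j$; and the degree-increasing row $\vec{b}^* \mapsto \bigl(x - \vecEv{i}{x\vec{b}^*}/\Delta_{j^*}\bigr)\vec{b}^*$ has weighted degree exactly $d_{j^*}+1$. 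So $\hat d_j = \deg_\vec{w}(\vec{\hat b}_j)$ for all $j$, as claimed. The one point needing care — and the only place where something beyond routine bookkeeping happens — is arguing that the cross-evaluation step genuinely does not raise the weighted degree even when several entries of $\vec{b}_j$ attain the maximum; this is exactly where the minimality of $\vec{b}^*$ guaranteed by Theorem~\ref{thm:KNHminimality} is used, and it is the same argument that makes the non-DaC algorithm preserve the weak-Popov structure, so it can be cited rather than reproved.
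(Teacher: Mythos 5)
Your proposal is correct and follows essentially the same approach as the paper's proof: identify $\mat{T}=\mat{U}$ as the operation matrix encoding one pass of the inner loop of Algorithm~\ref{alg:skewMultVarKNH} applied with $\vecEvNoInput{i}$, and inherit the weak-Popov guarantee from Theorem~\ref{thm:KNHminimality}. In fact you supply considerably more detail than the paper, whose one-paragraph proof only checks that the rows of $\mat{T}\mat{B}$ map to zero under $\vecEvNoInput{i}$ and leaves the weak-Popov property and the degree bookkeeping implicit. One small slip worth flagging: $\mat{U}$ is \emph{not} unimodular over $\SkewPolyring$ — the degree-increasing step multiplies the pivot row by a monic degree-one factor, so $\spannedBy{\mat{U}\mat{B}}$ is in general a proper submodule of $\spannedBy{\mat{B}}$ (indeed that is the whole point). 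But this does not damage your argument, because the only thing you actually use is the weaker statement that the rows of $\mat{U}\mat{B}$ generate a submodule of $\spannedBy{\mat{B}}$, which holds for any $\mat{U}\in\SkewPolyring^{(\intParam+1)\times(\intParam+1)}$; equality with $\spannedBy{\mat{B}}\cap\module{K}_i$ then follows, as you say, from the minimality argument of Theorem~\ref{thm:KNHminimality}.
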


\begin{proof}
The columns of $\U$ except for the $j^*$-th column correspond to the non-minimal rows of $\mat{B}$. 
The cross-evaluation step of the $j$-th non-minimal row of $\B$ in Line~\ref{alg1:crosseval} of Algorithm~\ref{alg:skewMultVarKNH} is performed by the entries in the $j$-th row and $j^{*}$-th column of $\mat{U}$. 
The entry in the $j^{*}$-th row and the $j^{*}$-th column of $\mat{U}$ corresponds to the degree-increasing step in Line~\ref{alg1:degreeinc} in Algorithm~\ref{alg:skewMultVarKNH}. 
Hence, the algorithm outputs a matrix $\mat{T}$ such that all rows of $\mat{T}\mat{B}$ are mapped to zero under $\vecEvNoInput{i}$.
 \qed
\end{proof}

Equipped with the routine \textsf{SkewInterpolatePoint} in Algorithm~\ref{alg:skewIntPoint} to solve the basic step we can now derive a~\ac{DaC} variant of the skew~\ac{KNH} interpolation in Algorithm~\ref{alg:skewMultVarKNH}.

\begin{algorithm}
	\caption{\textsf{SkewInterpolateTree}}\label{alg:skewIntTree}
	\SetKwInOut{Input}{Input}\SetKwInOut{Output}{Output}
	\Input{
	Skew vector evaluation maps $\set{E}_{[i_1,i_2]}=(\vecEvNoInput{i_1},\dots,\vecEvNoInput{i_2})$ \\
	$\mat{B}\in\SkewPolyring^{(\intParam+1)\times(\intParam+1)}$, \\
	$\vec{d}=(d_0,d_1,\ldots,d_{s}) \in \posInt^{\intParam+1}$ s.t. $d_j=\deg_\vec{w}(\vec{b}_j)$ for all $j=0,\dots,\intParam$.
	}
	\BlankLine

	\Output{
	A matrix $\mat{T}\in\SkewPolyring^{(\intParam+1)\times(\intParam+1)}$ s.t. $\mat{\hat{B}}\defeq\mat{TB}$ is a $\vec{w}$-ordered weak-Popov Basis for $\spannedBy{\mat{B}}\cap\module{K}_{i_1}\cap\dots\cap\module{K}_{i_2}$, \\ 
	$\vec{\hat{d}}=(\hat{d}_0,\hat{d}_1,\ldots,\hat{d}_{s}) \in \posInt^{\intParam+1}$ s.t. $\hat{d}_j=\deg_\vec{w}(\vec{\hat{b}}_j)$ for all $j=0,\dots,\intParam$.
	}

	\BlankLine
	\If{$i_1=i_2$}{
		\Return{$\textsf{SkewInterpolatePoint}(\vecEvNoInput{i_1}, {\mat{B}}, \vec{d})$}
	}
	\Else{
		$z\gets\left\lfloor\frac{i_1+i_2}{2}\right\rfloor$ \\
		${\mat{B}}_1\gets{\mat{B}}\modr\vecMinpolyNoX{[i_1,z]}$ \label{line:VecMod1} \\
		$(\mat{T}_1, \vec{d}_1)\gets \textsf{SkewInterpolateTree}(\set{E}_{[i_1,z]}, {\mat{B}}_1, \vec{d})$ \\
		${\mat{B}}_2\gets\mat{T}_1{\mat{B}}\modr \vecMinpolyNoX{[z+1,i_2]}$ \label{line:VecMod2} \\ 
		$(\mat{T}_2, \vec{d}_2)\gets \textsf{SkewInterpolateTree}(\set{E}_{[z+1,i_2]}, {\mat{B}}_2, \vec{d}_1)$ \\ 
		\Return{$(\mat{T}=\mat{T}_2\mat{T}_1, \vec{\hat{d}}=\vec{d}_2)$}\label{line:T2T1}
	}
\end{algorithm}

\begin{lemma}[Correctness of Algorithm~\ref{alg:skewIntTree}]
 Under Assumption~\ref{ass:mod_vectors}, \textsf{SkewInterpolateTree} in Algorithm~\ref{alg:skewIntTree} is correct.
\end{lemma}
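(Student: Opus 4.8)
The plan is to argue by strong induction on the number of interpolation points $N \defeq i_2 - i_1 + 1$. The base case $N=1$ (i.e. $i_1 = i_2$) is exactly the already-established correctness of \textsf{SkewInterpolatePoint} in Algorithm~\ref{alg:skewIntPoint}, so nothing remains to check there. For the inductive step, I would fix $z = \lfloor (i_1+i_2)/2 \rfloor$ and let $\module{K} \defeq \module{K}_{i_1}\cap\dots\cap\module{K}_{i_2}$, $\module{K}^{(1)} \defeq \module{K}_{i_1}\cap\dots\cap\module{K}_{z}$, and $\module{K}^{(2)} \defeq \module{K}_{z+1}\cap\dots\cap\module{K}_{i_2}$, so that $\module{K} = \module{K}^{(1)}\cap\module{K}^{(2)}$. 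The induction hypothesis applies to both recursive calls since each handles strictly fewer than $N$ points.

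The key technical point is that reducing the working basis modulo the minimal-polynomial vector $\vecMinpolyNoX{[i_1,z]}$ does not change the relevant evaluations. Concretely, by Assumption~\ref{ass:mod_vectors}, for every $\Q\in\SkewPolyring^{\intParam+1}$ we have $\vecEv{l}{\Q} = \vecEv{l}{\Q\modr\vecMinpolyNoX{[i_1,z]}}$ for all $l=i_1,\dots,z$; hence a row of $\mat{B}_1 = \mat{B}\modr\vecMinpolyNoX{[i_1,z]}$ lies in $\module{K}_l$ precisely when the corresponding row of $\mat{B}$ does, for each such $l$. Moreover $\spannedBy{\mat{B}_1}$ and $\spannedBy{\mat{B}}$ need not be equal, so the first recursive call returns a transformation $\mat{T}_1$ with $\mat{T}_1\mat{B}_1$ a $\vec{w}$-ordered weak-Popov basis for $\spannedBy{\mat{B}_1}\cap\module{K}^{(1)}$; the point to extract is that the \emph{same} matrix $\mat{T}_1$, applied to $\mat{B}$ rather than $\mat{B}_1$, yields $\mat{T}_1\mat{B}$ whose row span is $\spannedBy{\mat{B}}\cap\module{K}^{(1)}$ and which is still in $\vec{w}$-ordered weak Popov form with the claimed degrees $\vec{d}_1$. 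This is because $\mat{T}_1$ records exactly a sequence of unimodular KNH-type updates (each built from a matrix $\mat{U}$ as in~\eqref{eq:opMatrix}), and such updates produce, from any basis of a module $\module{N}$, a weak-Popov basis of $\module{N}\cap\module{K}^{(1)}$ with the module membership and minimality governed only by the images under the functionals $\vecEvNoInput{i_1},\dots,\vecEvNoInput{z}$ — which, by Assumption~\ref{ass:mod_vectors}, agree on $\mat{B}$ and on $\mat{B}_1$. I would phrase this as a small lemma: if $\mat{T}$ is the output of \textsf{SkewInterpolateTree} on input $(\set{E}_{[a,b]}, \mat{B}', \vec{d})$ and $\mat{B}''$ is any matrix with $\vecEv{l}{\vec{b}'_j} = \vecEv{l}{\vec{b}''_j}$ for all $l\in[a,b]$ and all $j$, and with $\deg_{\vec{w}}(\vec{b}''_j) = d_j$, then $\mat{T}\mat{B}''$ is a $\vec{w}$-ordered weak-Popov basis for $\spannedBy{\mat{B}''}\cap\module{K}_a\cap\dots\cap\module{K}_b$ with degree vector the returned $\vec{\hat{d}}$. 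This lemma is proved by the same induction and is the real engine of the argument.

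Granting this, the second recursive call is handled analogously: $\mat{B}_2 = \mat{T}_1\mat{B}\modr\vecMinpolyNoX{[z+1,i_2]}$ has, row by row, the same images under $\vecEvNoInput{z+1},\dots,\vecEvNoInput{i_2}$ as $\mat{T}_1\mat{B}$ (again by Assumption~\ref{ass:mod_vectors}) and the same $\vec{w}$-weighted degrees $\vec{d}_1$, so by the lemma $\mat{T}_2\mat{T}_1\mat{B}$ is a $\vec{w}$-ordered weak-Popov basis for $\bigl(\spannedBy{\mat{T}_1\mat{B}}\bigr)\cap\module{K}^{(2)} = \bigl(\spannedBy{\mat{B}}\cap\module{K}^{(1)}\bigr)\cap\module{K}^{(2)} = \spannedBy{\mat{B}}\cap\module{K}$, with degree vector $\vec{d}_2$. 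Since the algorithm returns $\mat{T} = \mat{T}_2\mat{T}_1$ and $\vec{\hat{d}} = \vec{d}_2$, this is exactly the claimed output specification, completing the induction. The main obstacle I anticipate is making precise and clean the claim that the recorded transformation $\mat{T}_1$ behaves identically whether applied to $\mat{B}$ or to its reduced version $\mat{B}_1$; everything hinges on the fact that every update step in the KNH recursion — membership decisions, the choice of pivot index $j^*$, the multipliers $\Delta_j/\Delta_{j^*}$, and the degree-increasing factor — depends on $\mat{B}$ only through the scalars $\vecEv{l}{\vec{b}_j}$ and the degrees $d_j$, and Assumption~\ref{ass:mod_vectors} guarantees these coincide. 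Once that invariance is isolated as the lemma above, the rest is bookkeeping on module intersections.
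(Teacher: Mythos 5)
Your proposal is correct and is essentially a fully worked-out version of what the paper compresses into a single sentence (``follows directly from the definition of $\mat{U}$ and Assumption~\ref{ass:mod_vectors}''). The invariance lemma you isolate --- that the transformation $\mat{T}$ recorded by \textsf{SkewInterpolateTree} depends on the input basis only through the scalars $\vecEv{l}{\vec{b}_j}$ and the tracked degrees $\vec{d}$, so it produces the correct result whether applied to $\mat{B}$ or to the reduced $\mat{B}_1$ --- is indeed the real content that the paper leaves unstated, and your handling of the subtlety that $\spannedBy{\mat{B}_1}\neq\spannedBy{\mat{B}}$ is exactly right.

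One small gap: your invariance lemma only assumes that $\vecEv{l}{\vec{b}'_j}=\vecEv{l}{\vec{b}''_j}$ for all $l,j$, but the degree-increasing entry of $\mat{U}$ in~\eqref{eq:opMatrix} also involves $\vecEv{i}{x\vec{b}_{j^*}}$, so you additionally need $\vecEv{l}{x\vec{b}'_{j}}=\vecEv{l}{x\vec{b}''_{j}}$ to conclude that $\mat{U}\mat{B}''$ is killed by $\vecEvNoInput{l}$. This does follow from the hypothesis you wrote down, but only via the standing assumption that each $\module{K}_l$ is a \emph{left} $\SkewPolyring$-submodule: from $\vecEv{l}{\vec{b}'_j-\vec{b}''_j}=0$ one gets $\vec{b}'_j-\vec{b}''_j\in\module{K}_l$, hence $x(\vec{b}'_j-\vec{b}''_j)\in\module{K}_l$, hence $\vecEv{l}{x\vec{b}'_j}=\vecEv{l}{x\vec{b}''_j}$. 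That one extra line should be made explicit in the proof of your lemma; once it is, the induction goes through as you describe. A cosmetic point: the lemma should also carry the hypothesis that the rows of $\mat{B}''$ have $\ind{\vec{b}''_j}{\w}=j$ (equivalently, $\mat{B}''$ is in $\w$-ordered weak Popov form), which is preserved by the $\mat{U}$-updates and is what makes the output a weak-Popov basis rather than merely a generating set.
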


\begin{proof}
 The correctness of Algorithm~\ref{alg:skewIntTree} follows directly from the definition of the matrix $\mat{U}$ in~\eqref{eq:opMatrix} and Assumption~\ref{ass:mod_vectors}.
 \qed
\end{proof}

\subsection{Precomputing Minimal Polynomial Vectors}

We now present a generic procedure to pre-compute the set $\set{M}$ containing the minimal polynomial vectors $\vecMinpolyNoX{[i,j]}$ required in Algorithm~\ref{alg:skewIntTree} efficiently.
We consider minimal polynomials such as the generalized operator and the remainder evaluation which can be constructed by means of the \ac{LCLM} of polynomial sequences (see~\eqref{eq:def_minpoly_gen_op} and~\eqref{eq:def_minpoly_rem}) we use the ideas from~\cite[Theorem~3.2.7]{caruso2017new} to obtain the efficient procedure described in Algorithm~\ref{alg:precomputeMinVectors}. 
The \ac{DaC} structure of the algorithm is illustrated in Figure~\ref{fig:minpoly_vec_tree} for an example of $\kappa=4$. The initial minimal polynomial vectors $\vecMinpolyNoX{0},\vecMinpolyNoX{1},\ldots,\vecMinpolyNoX{\kappa-1}$ from which all other minimal polynomials are computed via the \ac{LCLM}, are computed depending on the application. In Section~\ref{sec:coding_applications} two cases are given, for the general operator evaluation as in~\eqref{eq:def_gen_op_minpoly_vecs} and for the remainder evaluation as in~\eqref{eq:def_rem_minpoly_vecs}.

\begin{algorithm}
	\caption{\textsf{PreComputeMinVectorsTree}}\label{alg:precomputeMinVectors}
	\SetKwInOut{Input}{Input}\SetKwInOut{Output}{Output}
	\Input{Upper and lower index bound $a\in\mathbb{Z}_+$ and $b\in\mathbb{Z}_+$ with $b\geq a$ \\ 
		Minimal polynomial vectors $\vecMinpolyNoX{a},\vecMinpolyNoX{a+1},\ldots,\vecMinpolyNoX{b} \in \SkewPolyring^{\intParam+1}$
	}
	\Output{An (ordered) set $(\vec{M}_{[a,b]},\vec{M}_{[a,\lfloor (b-1)/2\rfloor-1]},\vec{M}_{[\lfloor (b-1)/2\rfloor,b]},\ldots,\vec{M}_{[a,a]},\vec{M}_{[a+1,a+1]},\ldots,\vec{M}_{[b,b]}) \subset\SkewPolyring^{\intParam+1}$}

	\BlankLine
	
	\If{$a = b$} {
		\Return $\{\vecMinpolyNoX{a}\}$
	}
	\Else{
		$\delta \gets \lfloor \frac{b-a+1}{2} \rfloor$\\
		$\set{M}_{1} \gets \text{PreComputeMinVectorsTree}(a,a+\delta-1) $ \\
		$\set{M}_{2} \gets \text{PreComputeMinVectorsTree}(a+\delta,b) $ \\
		$\vecMinpolyNoX{[a,b]} \gets \lclm{\vecMinpolyNoX{[a,a+\delta-1]},\vecMinpolyNoX{[a+\delta,b]}}{}$ \tcc*[r]{with $\vecMinpolyNoX{[a,a+\delta-1]} \in \set{M}_1$ and $\vecMinpolyNoX{[a+\delta,b]} \in \set{M}_2$}
		\Return $ \set{M}_{1} \cup \set{M}_{2} \cup\{\vecMinpolyNoX{[a,b]}\}$
		}		
\end{algorithm}

\begin{lemma}[Correctness of Algorithm~\ref{alg:precomputeMinVectors}]
 Algorithm~\ref{alg:precomputeMinVectors} is correct.
\end{lemma}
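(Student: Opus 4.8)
The plan is to prove correctness of \textsf{PreComputeMinVectorsTree} by induction on $b - a$, mirroring the recursive structure of the algorithm and appealing to the analogous statement for \ac{LCLM} computation of polynomial sequences in~\cite[Theorem~3.2.7]{caruso2017new}. The claim to establish is that, on input $(a,b)$ and the initial minimal polynomial vectors $\vecMinpolyNoX{a},\ldots,\vecMinpolyNoX{b}$, the algorithm returns exactly the ordered set whose entries are the vectors $\vecMinpolyNoX{[i,j]}$ for the dyadic intervals $[i,j]$ arising in the recursion, and that each returned $\vecMinpolyNoX{[i,j]}$ is componentwise the monic \ac{LCLM} of the input minimal polynomial vectors indexed by $i,\ldots,j$; equivalently, that $\vecMinpolyNoX{[i,j]}$ is the correct minimal polynomial vector depending on $\set{E}_{[i,j]}$ in the sense required by Assumption~\ref{ass:mod_vectors}.

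First I would dispatch the base case $a=b$: the algorithm returns $\{\vecMinpolyNoX{a}\}$, which is by definition the minimal polynomial vector associated with the single index $a$, so there is nothing to prove. For the inductive step, assume $a<b$ and that the claim holds for all strictly shorter index ranges. Set $\delta=\lfloor (b-a+1)/2\rfloor$, so that $1\le\delta\le b-a$ and hence both sub-ranges $[a,a+\delta-1]$ and $[a+\delta,b]$ are nonempty and strictly shorter than $[a,b]$. By the induction hypothesis, $\set{M}_1$ and $\set{M}_2$ are the correct ordered sets for these sub-ranges; in particular $\vecMinpolyNoX{[a,a+\delta-1]}\in\set{M}_1$ and $\vecMinpolyNoX{[a+\delta,b]}\in\set{M}_2$ are the correct minimal polynomial vectors for $\set{E}_{[a,a+\delta-1]}$ and $\set{E}_{[a+\delta,b]}$ respectively. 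Then I would argue that
\begin{equation*}
\vecMinpolyNoX{[a,b]} = \lclm{\vecMinpolyNoX{[a,a+\delta-1]},\vecMinpolyNoX{[a+\delta,b]}}{}
\end{equation*}
componentwise equals the monic \ac{LCLM} of $\vecMinpolyNoX{a},\ldots,\vecMinpolyNoX{b}$: this uses associativity of the \ac{LCLM} in the left/right Euclidean domain $\SkewPolyring$, namely that the monic \ac{LCLM} of a sequence equals the monic \ac{LCLM} of the \ac{LCLM}s of any partition of that sequence into consecutive blocks. Since for both the generalized operator and the remainder evaluation the minimal polynomial vector for a union of index sets is precisely this \ac{LCLM} (see~\eqref{eq:def_minpoly_gen_op}, \eqref{eq:def_minpoly_rem} applied componentwise), $\vecMinpolyNoX{[a,b]}$ is the correct minimal polynomial vector for $\set{E}_{[a,b]}$. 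Finally, the returned ordered set $\set{M}_1\cup\set{M}_2\cup\{\vecMinpolyNoX{[a,b]}\}$ contains exactly the dyadic intervals of $[a,b]$ with the new root $\vecMinpolyNoX{[a,b]}$ prepended, matching the claimed output format, which closes the induction.

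The main obstacle, and the only step that is not purely bookkeeping, is the \ac{LCLM}-associativity argument: one must be careful that monic \ac{LCLM}s over the noncommutative ring $\SkewPolyring$ compose correctly block-by-block, and that this lifts to the componentwise vector operation. This is essentially the content imported from~\cite[Theorem~3.2.7]{caruso2017new}, so I would state it as a lemma or cite it directly; everything else (nonemptiness of the halves, correctness of the ordering of the output tuple, the base case) is routine. A secondary point worth a sentence is verifying that $\delta$ splits $[a,b]$ into two nonempty consecutive blocks so that the induction hypothesis applies and the partition used in the associativity step is valid.
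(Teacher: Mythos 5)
Your proof is correct and takes essentially the same approach as the paper: both rely on the divide-and-conquer recursion, the associativity of the monic \ac{LCLM} in the left/right Euclidean domain $\SkewPolyring$ (applied componentwise to the vectors), and the reference to~\cite[Theorem~3.2.7]{caruso2017new}. The paper merely sketches the argument informally, whereas you organize it as an explicit induction on $b-a$, which is a harmless and arguably cleaner presentation of the same idea.
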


\begin{proof}
 The correctness of Algorithm~\ref{alg:precomputeMinVectors} follows directly from~\cite[Theorem~3.2.7]{caruso2017new}.
 The algorithm proceeds in a recursive manner and splits the size of the set of considered minimal polynomials in half. When sets consist only of one element, $\vecMinpolyNoX{[a,a]}$ are computed, using the generalized operator or remainder evaluation (see \eqref{eq:def_minpoly_gen_op} and \eqref{eq:def_minpoly_rem}). The sets of minimal polynomials of larger size are then obtained by merging the smaller sets of minimal polynomials using the relation $\vecMinpolyNoX{[a,b]} = \lclm{\vecMinpolyNoX{[a,a+\delta-1]},\vecMinpolyNoX{[a+\delta,b]}}{}$ with $\delta = \lfloor \frac{b-a+1}{2} \rfloor$, also illustrated in Figure~\ref{fig:minpoly_vec_tree}.
 \qed
\end{proof}

\begin{figure}[ht!]
 \centering
 \tikzset{
  solid node/.style={circle,draw,inner sep=1.2,fill=black},
  hollow node/.style={circle,draw,inner sep=1.2},
}

\begin{tikzpicture}[font=\footnotesize]
  \tikzset{
    level 1/.style={level distance=15mm,sibling distance=45mm},
    level 2/.style={level distance=15mm,sibling distance=30mm},
    level 3/.style={level distance=15mm,sibling distance=15mm},
    level 4/.style={level distance=15mm,sibling distance=10mm},
  }

  \node[hollow node,label=above:{$\vecMinpolyNoX{[0,3]}=\lclm{\vecMinpolyNoX{[0,1]},\vecMinpolyNoX{[2,3]}}{}$}] (root) {}
    child{node[solid node,label=left:{$\vecMinpolyNoX{[0,1]}=\lclm{\vecMinpolyNoX{[0,0]},\vecMinpolyNoX{[1,1]}}{}$}]{}
      child{node(l1)[solid node,label=below:{$\vecMinpolyNoX{[0,0]}=\vecMinpolyNoX{0}$},label=below:{}]{}
        edge from parent node[left]{}
      }
      child{node(l2)[solid node,label=below:{$\vecMinpolyNoX{[1,1]}=\vecMinpolyNoX{1}$},label=below:{}]{}
        edge from parent node[right]{}
      }
      edge from parent node[left,xshift=-10]{}
    }
    child{node[solid node,label=right:{$\vecMinpolyNoX{[2,3]}=\lclm{\vecMinpolyNoX{[2,2]},\vecMinpolyNoX{[3,3]}}{}$}]{}
      child{node(r1)[solid node,label=below:{$\vecMinpolyNoX{[2,2]}=\vecMinpolyNoX{2}$}]{}
        edge from parent node[left]{}
      }
      child{node(r2)[solid node,label=below:{$\vecMinpolyNoX{[3,3]}=\vecMinpolyNoX{3}$}]{}
        edge from parent node[right]{}
      }
      edge from parent node[right,xshift=10]{}
    }
  ;
  \node[above of = root]{};
\end{tikzpicture}
 \caption{Illustration of the computation tree of Algorithm~\ref{alg:precomputeMinVectors} to precompute all minimal polynomial vectors in the set $\mathcal{M}$ for $\kappa=4$.}
 \label{fig:minpoly_vec_tree}
\end{figure}
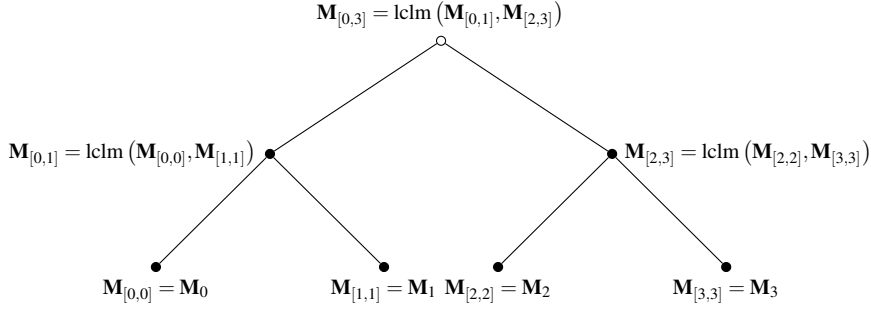

\section{Application to Coding Problems}\label{sec:coding_applications}

In this section we apply the fast skew \ac{KNH} interpolation described in Section~\ref{sec:fast_skew_knh} to coding problems in the rank, sum-rank and skew metric.
In particular, we consider evaluation codes over constructed using $\SkewPolyringZeroDer$, i.e. the skew polynomial ring with zero derivations only.
The results can be generalized to the $\SkewPolyring$ case, except for the complexity analysis.

\subsection{Interpolation-Based Decoding of Interleaved Gabidulin Codes}

Interleaved Gabidulin codes are rank-metric codes that are obtained by the Cartesian product of ordinary Gabidulin codes~\cite{Gabidulin_TheoryOfCodes_1985} and which allow for decoding beyond half the minimum rank distance (see~\cite{Loidreau_Overbeck_Interleaved_2006,overbeck2006decoding}).
In~\cite{Loidreau_Overbeck_Interleaved_2006}, a probabilistic unique decoder that is able to correct errors beyond the unique decoding radius with high probability, was presented.
A Berlekamp--Massey-like decoding algorithm for interleaved Gabidulin codes was presented in~\cite{sidorenko2010decoding,sidorenko2011skew}.

A Welch--Berlekamp-like interpolation-based decoding scheme, that can be either used as a (not necessarily polynomial-time) list decoding algorithm or as a probabilistic unique decoding algorithm was presented by Wachter-Zeh and Zeh in~\cite{WachterzehZeh-ListUniqueErrorErasureInterpolationInterleavedGabidulin_DCC2014}.
The algorithm consists of an \emph{interpolation step} and a \emph{root-finding step}.
The list size (and so the probability of obtaining a unique solution) can be optimized by using a minimal Gröbner basis for the left module containing the solutions of the interpolation problem rather than a single solution of the interpolation problem for the root-finding step (see~\cite{bartz2018efficient,bartz2017algebraic}).

The interpolation step (including the computation of the minimal Gröbner basis) can be solved by the linearized or skew variants of the multivariate \ac{KNH} interpolation~\cite{xie2011general,liu2014kotter} in $\oh{\intOrder^2n^2}$ operations in $\Fqm$, where $\intOrder$ is the \emph{interleaving order} and $n$ is the length of the code.
The overall computational complexity of the algorithm in~\cite{WachterzehZeh-ListUniqueErrorErasureInterpolationInterleavedGabidulin_DCC2014} is then in the order of $\oh{\intOrder^2n^2}$ operations in $\Fqm$.

Recently, efficient algorithms for solving the interpolation step and the root-finding step that are based on the computation of minimal approximant bases over skew polynomial rings were presented in~\cite{bartz2021fast}. 
The algorithms reduce the overall decoding complexity of the Wachter-Zeh and Zeh decoder~\cite{WachterzehZeh-ListUniqueErrorErasureInterpolationInterleavedGabidulin_DCC2014} to $\softoh{\intOrder^\matmult \OMul{n}}\subseteq\softoh{\intOrder^\matmult n^{1.635}}$.

In this section we show how the fast skew \ac{KNH} interpolation algorithm from Section~\ref{sec:fast_skew_knh} can be used to accomplish the interpolation step in the Wachter-Zeh and Zeh decoder in $\softoh{\intOrder^\matmult \OMul{n}}\subseteq\softoh{\intOrder^\matmult n^{1.635}}$ operations in $\Fqm$. 
The obtained improved computational complexity coincides with the computational complexity of the minimal approximant bases variant from~\cite{bartz2021fast}.
The results can be also used to speed up the interpolation-based decoder for interleaved subspace codes in~\cite{bartz2018efficient}.

\subsubsection{Codes in the Rank Metric}

The \emph{rank weight} of a vector $\vec{x}\in\Fqm^n$ is defined as (see~\cite{Gabidulin_TheoryOfCodes_1985})
\begin{equation}\label{eq:def_rank_weight}
	\RankWeight(\vec{x})\defeq\rk_q(\vec{x})=\rk_q(\mat{X})
\end{equation}
where $\mat{X}\in\Fq^{m\times n}$ is the corresponding expanded matrix of $\vec{x}$ over $\Fq$. 
Notice, that in general we have that $\RankWeight(\vec{x})\leq\HammingWeight(\vec{x})$ for any $\vec{x}\in\Fqm^n$ (see~\cite{martinez2016similarities}).
The \emph{rank distance} between two vectors $\vec{x},\vec{y}\in\Fqm^n$ is then defined as the rank of their difference, i.e. as
\begin{equation}\label{eq:def_rank_distance}
 \RankDist(\vec{x},\vec{y})\defeq\RankWeight(\vec{x}-\vec{y}).
\end{equation}
As a channel model we consider the rank error channel 
\begin{equation}
	\vec{r}=\vec{c}+\vec{e}
\end{equation}
where $\RankWeight(\vec{e})=t$.

\begin{definition}[Interleaved Gabidulin Code~\cite{Loidreau_Overbeck_Interleaved_2006,overbeck2006decoding}]
 Let $\aut$ be the Frobenius automorphism $\aut(\cdot)=\cdot^q$ of $\Fqm$.
 Let $\bm{\beta}=(\beta_0,\beta_1,\dots,\beta_{n-1})\in\Fqm^{n}$ contain $\Fq$-linearly independent elements from $\Fqm$.
 An $\intOrder$-interleaved Gabidulin code of length $n\leq m$ and dimension $k$ is defined as
 \begin{equation}
 	\IntGab{\bm{\beta},\intOrder;n,k}
 	=\left\{
 	\begin{pmatrix}
 	 \opev{f^{(1)}}{\bm{\beta}}{1}
 	 \\
 	 \opev{f^{(2)}}{\bm{\beta}}{1} 
 	 \\
 	 \vdots 
 	 \\
 	 \opev{f^{(\intOrder)}}{\bm{\beta}}{1}
 	\end{pmatrix}
 	: f^{(j)}\in\SkewPolyringZeroDer_{<k},\forall j\in\intervallincl{1}{\intOrder}
 	\right\}.
 \end{equation}
\end{definition}
Interleaved Gabidulin codes fulfill the Singleton-like bound in the rank metric with equality, i.e. we have that $\RankDist(\IntGab{\bm{\beta},\intOrder;n,k})=n-k+1$, and thus are \ac{MRD} codes.

Suppose we transmit a codeword
\begin{equation}
 \mat{C}=
 	\begin{pmatrix}
 	 \opev{f^{(1)}}{\bm{\beta}}{1}
 	 \\
 	 \opev{f^{(2)}}{\bm{\beta}}{1} 
 	 \\
 	 \vdots 
 	 \\
 	 \opev{f^{(\intOrder)}}{\bm{\beta}}{1}
 	\end{pmatrix}\in\IntGab{\bm{\beta},\intOrder;n,k}
\end{equation}
and receive
\begin{equation}
 \mat{R}=\mat{C}+\mat{E}
\end{equation}
where $\RankWeight(\mat{E})=t$.

\begin{definition}[Generalized Operator Vector Evaluation Map]\label{def:gen_op_vec_eval}
 Given an interpolation point set $\intPointSet=\{(p_{i,0},p_{i,1},\dots,p_{i,\intParam}):i=0,\dots,n-1\}\subseteq\Fqm^{\intParam+1}$, a vector $\Q\in\SkewPolyringZeroDer^{\intOrder+1}$, and a vector $\vec{a}\in\Fqm^{n}$ containing the generalized operator evaluation parameters, we define the generalized vector evaluation maps as
\begin{equation}\label{eq:def_vec_op_eval}
 \vecOpEv{i}{\vec{Q}}{a_i}\defeq \sum_{j=0}^{\intParam}\opev{Q_j}{p_{i,j}}{a_i}, \quad\forall i=0,\dots,n-1.
\end{equation}
\end{definition}

For interleaved Gabdulin codes, the interpolation point set is
\begin{equation}
	\intPointSet=\{(\beta_i,r_{0,i},\dots,r_{\intOrder-1,i}):i=0,\dots,n-1\}\subset\Fqm^{\intOrder+1}
\end{equation}
with $\vec{a}=(1,1,\dots,1)\in\Fqm^n$ being a vector containing the corresponding generalized operator evaluation parameters.
Equipped with these definitions the interpolation problem ~\cite[Problem~1]{WachterzehZeh-ListUniqueErrorErasureInterpolationInterleavedGabidulin_DCC2014} for decoding interleaved Gabidulin codes can be stated as follows.

\begin{problem}[Vector Interpolation Problem]\label{prob:intProblemIGab}
Given the integer $\intParam\in\mathbb{Z}_+$, a set of $\Fqm$-linear vector evaluation maps $\set{E}^{op}=\{\vecOpEvNoInput{0},\dots,\vecOpEvNoInput{n-1}\}$
as defined in~\eqref{eq:def_vec_op_eval}, the vector $\a=(1,1,\dots,1)\in\Fqm^n$ and a vector $\vec{w}=(0,k-1,\dots,k-1)\in\mathbb{Z}_+^{\intParam+1}$ compute a $\vec{w}$-ordered weak-Popov Basis for the left $\SkewPolyringZeroDer$-module
\begin{equation}\label{eq:int_module_IGab}
    \bar{\module{K}}_{n-1}=\{\vec{b}\in\SkewPolyringZeroDer^{\intParam+1}:\vecOpEv{i}{\vec{b}}{a_i}=0,\,\forall i=0,\dots,n-1\}.
\end{equation}
\end{problem}
A nonzero solution of Problem~\ref{prob:intProblemIGab} exists if the degree constraint satisfies (see~\cite{WachterzehZeh-ListUniqueErrorErasureInterpolationInterleavedGabidulin_DCC2014})
\begin{equation}
	\degConstraint=\left\lceil\frac{n+\intOrder(k-1)+1}{\intOrder+1}\right\rceil.
\end{equation}

\begin{proposition}\label{eq:comp_skew_KNH_gen_op}
 Problem~\ref{prob:intProblemIGab} can be solved by Algorithm~\ref{alg:skewMultVarKNH} in $\oh{\intOrder^2n^2}$ operations in $\Fqm$.
\end{proposition}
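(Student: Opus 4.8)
The plan is to instantiate Algorithm~\ref{alg:skewMultVarKNH} with the data of Problem~\ref{prob:intProblemIGab} — namely the maps $\set{E}^{op}=\{\vecOpEvNoInput{0},\dots,\vecOpEvNoInput{n-1}\}$, the evaluation parameters $\a=(1,\dots,1)$, and the weighting vector $\vec{w}=(0,k-1,\dots,k-1)$ (so that the interpolation parameter equals the interleaving order, $\intParam=\intOrder$) — and then to read off correctness and the operation count from results already available for that algorithm. First I would check the hypotheses needed to invoke it: each $\vecOpEvNoInput{i}$ in~\eqref{eq:def_vec_op_eval} is an $\Fqm$-linear map $\SkewPolyringZeroDer^{\intOrder+1}\to\Fqm$, and the composition identity $\opev{xf}{b}{a}=\op{a}{\opev{f}{b}{a}}$ for generalized operator evaluation propagates to $\opev{pf}{b}{a}=\opev{p}{\opev{f}{b}{a}}{a}$ for every $p\in\SkewPolyringZeroDer$, which shows that each kernel $\module{K}_i$ — and hence every intersection $\bar{\module{K}}_i$ — is a left $\SkewPolyringZeroDer$-submodule of $\SkewPolyringZeroDer^{\intOrder+1}$. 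That is exactly the standing assumption under which Algorithm~\ref{alg:skewMultVarKNH} was stated, so Lemma~\ref{lem:correctness_skewMultVarKNH} applies and the returned $\mat{B}$ is a $\vec{w}$-ordered weak-Popov basis of $\bar{\module{K}}_{n-1}$ from~\eqref{eq:int_module_IGab}, i.e.\ a solution of Problem~\ref{prob:intProblemIGab}.

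For the complexity I would first establish a uniform degree bound on the intermediate basis and then substitute it into Proposition~\ref{prop:complSkewMultVarKNH}. The basis starts as $\mat{I}_{\intOrder+1}$, whose rows have $\vec{w}$-weighted degree at most $k-1<n$; in each of the $n$ iterations the only step that can raise a weighted degree is the degree-increasing step of Line~\ref{alg1:degreeinc}, and it raises exactly one row's $\vec{w}$-weighted degree by one, the cross-evaluation steps of Line~\ref{alg1:crosseval} (applied only to rows that are not $\vec{w}$-minimal among those with nonzero functional value) leaving all weighted degrees unchanged. Hence at every stage of the run $\deg_\vec{w}(\vec{b}_j)\le n+k-1\in\oh{n}$ for all $j$, so every entry of every intermediate $\mat{B}$ is a skew polynomial of degree $\oh{n}$, and all vectors fed to the cost items of Proposition~\ref{prop:complSkewMultVarKNH} lie in $\SkewPolyringZeroDer^{\intOrder+1}$ with entries of degree $\oh{n}$.

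It then remains to plug this in. One application of a map $\vecOpEvNoInput{i}$ to such a vector is $\intOrder+1$ single-point generalized operator evaluations of skew polynomials of degree $\oh{n}$, each costing $\oh{n}$ operations in $\Fqm$ by a Horner-type scheme (under the zero-derivation cost model); with $\oh{\intOrder n}$ such applications this is $\oh{\intOrder^2 n^2}$. The $n$ degree-$1$ multiplications with such vectors cost $\oh{\intOrder n}$ each, for $\oh{\intOrder n^2}$ in total, and the $\oh{\intOrder n}$ scalar-by-vector multiplications cost $\oh{\intOrder n}$ each, for $\oh{\intOrder^2 n^2}$; the $\oh{\intOrder}$ field inversions per iteration are negligible. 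Adding up gives $\oh{\intOrder^2 n^2}$ operations in $\Fqm$, as claimed. I expect the one genuinely non-mechanical point to be the degree bound of the middle paragraph — arguing that the plain skew-polynomial degrees of the basis entries never exceed $\oh{n}$ via their $\vec{w}$-weighted degrees — since once that is in hand the rest is a direct substitution into Proposition~\ref{prop:complSkewMultVarKNH}.
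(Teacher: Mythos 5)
Your proposal is correct and essentially mirrors the paper's proof: recognize Problem~\ref{prob:intProblemIGab} as an instance of Problem~\ref{prob:generalIntProblem}, invoke Lemma~\ref{lem:correctness_skewMultVarKNH} for correctness, and then substitute the per-operation costs of the generalized operator evaluation maps into the operation tally of Proposition~\ref{prop:complSkewMultVarKNH}. The only difference is that you make explicit the $\oh{n}$ degree bound on the intermediate basis rows and the left-submodule hypothesis on the kernels, both of which the paper leaves implicit when it directly cites Proposition~\ref{prop:complSkewMultVarKNH} with vectors taken from $\SkewPolyringZeroDer^{\intParam+1}_{\leq n}$.
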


\begin{proof}
 Problem~\ref{prob:intProblemIGab} is an instance of the general vector interpolation problem (Problem~\ref{prob:generalIntProblem}) which can be solved by Algorithm~\ref{alg:skewMultVarKNH} (see Lemma~\ref{lem:correctness_skewMultVarKNH}).
 By Proposition~\ref{prop:complSkewMultVarKNH}, Algorithm~\ref{alg:skewMultVarKNH} requires $\oh{sn}$ evaluation maps $\vecOpEvNoInput{i}$ of a vector in $\SkewPolyringZeroDer^{\intParam+1}_{\leq n}$, which requires $\oh{\intParam n}$ operations in $\Fqm$ each.
 Overall, the computation of the evaluation maps requires $\oh{\intParam^2 n^2}$ operations in $\Fqm$.
 The computation of the $n$ multiplications of a monic degree-1 skew polynomial with a vector from $\SkewPolyringZeroDer$ (Line~\ref{alg1:degreeinc}) requires $\oh{\intParam n^2}$ operations in $\Fqm$ in total. 
 The $\oh{\intOrder n}$ multiplications of an element from $\Fqm$ and a vector from $\SkewPolyringZeroDer^{\intParam+1}_{\leq n}$ require at most $\oh{\intParam^2 n^2}$ operations in $\Fqm$.
 Therefore we conclude that Algorithm~\ref{alg:skewMultVarKNH} can solve Problem~\ref{prob:intProblemIGab} requiring at most $\oh{\intOrder^2n^2}$ operations in $\Fqm$. 
 \qed
\end{proof}

If the error weight $t=\RankWeight(\mat{E})$ satisfies
\begin{equation}\label{eq:dec_cond_IGab}
	t<\frac{\intOrder}{\intOrder+1}(n-k+1)
\end{equation}
it can be shown that (see~\cite[Theorem~1]{WachterzehZeh-ListUniqueErrorErasureInterpolationInterleavedGabidulin_DCC2014})
\begin{equation}\label{eq:dec_condition}
	P(x)=Q_0(x)+\sum_{j=1}^{\intOrder}Q_j(x)f^{(j)}(x)=0.
\end{equation}

The root-finding step consists of finding all polynomials $f^{(1)},\dots,f^{(\intOrder)}\in\SkewPolyringZeroDer_{<k}$ that satisfy~\eqref{eq:dec_condition}.
This task can be accomplished by the minimal approximant basis methods in~\cite{bartz2019fast,bartz2021fast} requiring at most $\softoh{\intOrder^\matmult\OMul{n}}$ operations in $\Fqm$. 

\subsubsection{Solving the Interpolation Step via the Fast \ac{KNH} Interpolation}

For an interpolation point set $\intPointSet=\{\vec{p}_0,\vec{p}_1,\dots,\vec{p}_{n-1}\}\subseteq\Fqm^{\intParam+1}$ define the vectors of minimal polynomials with respect to the generalized operator evaluation for $0\leq i,j\leq n-1$ and $i\leq j$ as
\begin{equation}\label{eq:def_gen_op_minpoly_vecs}
    \vecMinpolyOp{[i,j]}{a_i}\defeq\left(\minpolyOp{\{p_{i,0},\dots,p_{j,0}\}}{a_i},\minpolyOp{\{p_{i,1},\dots,p_{j,1}\}}{a_i},\dots,\minpolyOp{\{p_{i,\intParam},\dots,p_{j,\intParam}\}}{a_i}\right).
\end{equation}

\begin{lemma}\label{lem:mod_vec_op_eval}
Let $\set{E}^{op}=(\vecOpEvNoInput{0},\ldots,\vecOpEvNoInput{n-1})$ be an ordered set of skew vector evaluation maps as defined in~\eqref{eq:def_vec_op_eval} and let $\set{E}_{[i,j]}^{op}=(\vecOpEvNoInput{i},\ldots,\vecOpEvNoInput{j})\subseteq\set{E}^{op}$.
Then for any $\vec{Q}\in\SkewPolyringZeroDer^{\intParam+1}$ we have that
\begin{equation}\label{eq:mod_vec_gen_op_eval}
    \vecOpEv{l}{\vec{Q}}{a_l}=\vecOpEv{l}{\vec{Q}\modr\vecMinpolyOp{[i,j]}{\vec{a}}}{a_l},\quad\forall l=i,\dots,j
\end{equation}
where $\vec{a}=(a_i,\dots,a_j)$ contains the corresponding general operator evaluation parameters.
\end{lemma}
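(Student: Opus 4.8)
The plan is to reduce this vector identity to the scalar statement already established in Lemma~\ref{lem:mod_op_eval}, applied separately in each of the $\intParam+1$ coordinates, and then to recombine the coordinates using that $\vecOpEvNoInput{l}$ is defined in~\eqref{eq:def_vec_op_eval} as a sum of coordinate-wise operator evaluations.

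First I would fix an index $l\in\{i,\dots,j\}$ and a coordinate $j'\in\{0,\dots,\intParam\}$, and write $M_{j'}$ for the $j'$-th entry of $\vecMinpolyOp{[i,j]}{\vec{a}}$, i.e.\ by~\eqref{eq:def_gen_op_minpoly_vecs} the minimal skew polynomial with respect to the generalized operator evaluation that vanishes on the elements $\{p_{i,j'},\dots,p_{j,j'}\}$ with respect to the matching parameters in $(a_i,\dots,a_j)$ (see~\eqref{eq:def_minpoly_gen_op}). Then I would invoke Lemma~\ref{lem:mod_op_eval} with $f=Q_{j'}$, $\set{B}=\{p_{i,j'},\dots,p_{j,j'}\}$ and parameter vector $(a_i,\dots,a_j)$, which yields
\begin{equation*}
\opev{Q_{j'}}{p_{l,j'}}{a_l}=\opev{\left(Q_{j'}\modr M_{j'}\right)}{p_{l,j'}}{a_l},\qquad \forall\, l=i,\dots,j .
\end{equation*}

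Next I would sum this identity over $j'=0,\dots,\intParam$. By the definition of the element-wise right modulo operation together with the definition of $\vecMinpolyOp{[i,j]}{\vec{a}}$ in~\eqref{eq:def_gen_op_minpoly_vecs}, the $j'$-th coordinate of $\vec{Q}\modr\vecMinpolyOp{[i,j]}{\vec{a}}$ is precisely $Q_{j'}\modr M_{j'}$; substituting this into the definition~\eqref{eq:def_vec_op_eval} of $\vecOpEvNoInput{l}$ gives
\begin{equation*}
\vecOpEv{l}{\vec{Q}}{a_l}=\sum_{j'=0}^{\intParam}\opev{Q_{j'}}{p_{l,j'}}{a_l}=\sum_{j'=0}^{\intParam}\opev{\left(Q_{j'}\modr M_{j'}\right)}{p_{l,j'}}{a_l}=\vecOpEv{l}{\vec{Q}\modr\vecMinpolyOp{[i,j]}{\vec{a}}}{a_l}
\end{equation*}
for every $l=i,\dots,j$, which is the assertion.

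I do not expect a genuine obstacle here: the statement is a routine ``vectorization'' of Lemma~\ref{lem:mod_op_eval}. The only point deserving a line of care is the bookkeeping of the evaluation parameters — one must take $M_{j'}$ with respect to the whole parameter tuple $(a_i,\dots,a_j)$, so that $M_{j'}$ vanishes at each $p_{l,j'}$ with respect to the corresponding $a_l$ rather than only with respect to $a_i$. In the interleaved Gabidulin setting all of these parameters equal $1$, so the subtlety disappears, but I would phrase the argument so that it stays valid for an arbitrary parameter tuple.
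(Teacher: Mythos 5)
Your proposal is correct and follows exactly the same route as the paper, which likewise reduces the claim to Lemma~\ref{lem:mod_op_eval} applied to each elementary evaluation $\opev{Q_{j'}}{p_{l,j'}}{a_l}$ in the definition~\eqref{eq:def_vec_op_eval} and then sums over the coordinates. Your added remark about taking each coordinate's minimal polynomial with respect to the full parameter tuple $(a_i,\dots,a_j)$ is a useful clarification but not a deviation from the paper's argument.
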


\begin{proof}
The lemma follows directly by applying the result from Lemma~\ref{lem:mod_op_eval} to the elementary evaluations in the skew vector operator evaluation maps defined in~\eqref{eq:def_vec_op_eval}.
\qed
\end{proof}

An important consequence of Lemma~\ref{lem:mod_vec_op_eval} is, that the generalized operator vector evaluation maps from Definition~\ref{def:gen_op_vec_eval} and the minimal polynomial vectors in~\eqref{eq:def_gen_op_minpoly_vecs} fulfill Assumption~\ref{ass:mod_vectors}.
Hence, we can solve Problem~\ref{prob:intProblemIGab} by calling Algorithm~\ref{alg:skewIntTree} with $\set{E}^{op}$, basis $\mat{I}_{\intOrder+1}$ and initial degrees $(0, k-1, \dots, k-1)$.

\subsubsection{Complexity Analysis}

We now perform a complexity analysis of Algorithm~\ref{alg:precomputeMinVectors} for the generalized operator evaluation maps defined in~\eqref{eq:def_vec_op_eval} ($\SkewPolyringZeroDer$-case).

\begin{lemma}[Complexity of Computing Minimal Polynomial Vectors]\label{lem:comp_min_vecs_gen_op}
 Algorithm~\ref{alg:precomputeMinVectors} constructs the (ordered) set $\set{M}^{\text{op}}\subset\SkewPolyringZeroDer^{\intOrder+1}$ containing the minimal polynomial vectors defined as
 \begin{equation}
 	\set{M}^{\text{op}}\defeq
    \left(\vecMinpolyOp{[0,n-1]}{\vec{a}},\vecMinpolyOp{[0,\lfloor n/2\rfloor-1]}{\vec{a}},\vecMinpolyOp{[\lfloor n/2\rfloor,n-1]}{\vec{a}},\dots,\vecMinpolyOp{[n-1,n-1]}{\vec{a}}\right)
 \end{equation}
 in $\softoh{\intOrder\OMul{n}}$ operations in $\Fqm$.
\end{lemma}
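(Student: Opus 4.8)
The plan is to analyze the divide-and-conquer recursion of Algorithm~\ref{alg:precomputeMinVectors} and bound its total cost by a standard master-theorem-style argument. First I would observe that the tree built by the algorithm has depth $\oh{\log n}$ and that at the leaves we must compute the $n$ initial vectors $\vecMinpolyOp{[i,i]}{a_i}$ for $i = 0,\dots,n-1$; by the definition in~\eqref{eq:def_gen_op_minpoly_vecs} each of these is just the componentwise minimal polynomial of a single point $p_{i,j}$ with respect to $a_i$, i.e.\ a skew polynomial of degree at most $1$ in each of the $\intOrder+1$ coordinates, so all leaves together cost $\oh{\intOrder n}$ operations in $\Fqm$, which is negligible.

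The main content is bounding the cost of the merge steps. At a node associated with an index range $[a,b]$ of size $\nu = b-a+1$, the algorithm forms $\vecMinpolyOp{[a,b]}{\vec{a}} = \lclm{\vecMinpolyOp{[a,a+\delta-1]}{\vec{a}},\vecMinpolyOp{[a+\delta,b]}{\vec{a}}}{}$ componentwise, with $\delta = \lfloor \nu/2 \rfloor$. Each of the $\intOrder+1$ coordinates is an \ac{LCLM} of two skew polynomials of degree at most $\lfloor \nu/2 \rfloor$ and $\lceil \nu/2 \rceil$; by the cost model recalled at the end of Section~\ref{sec:preliminaries} (in particular~\cite[Theorem~3.2.7]{caruso2017new}) this costs $\softoh{\OMul{\nu}}$ operations in $\Fqm$ per coordinate, hence $\softoh{\intOrder\OMul{\nu}}$ for the whole vector at that node. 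Summing over one level of the tree, the node sizes $\nu$ at a fixed level sum to at most $n$, and since $\OMul{\cdot}$ is superadditive (it behaves like $n^c$ with $c > 1$) we get a cost of $\softoh{\intOrder\OMul{n}}$ per level. Multiplying by the $\oh{\log n}$ levels and absorbing the logarithmic factor into the soft-$O$ yields the claimed total of $\softoh{\intOrder\OMul{n}}$.

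To make the level-by-level bound rigorous I would write the recursion $C(n) \le 2\,C(n/2) + \softoh{\intOrder\OMul{n}}$ for the cost $C(n)$ of \textsf{PreComputeMinVectorsTree} on a range of size $n$, with base case $C(1) \in \oh{\intOrder}$, and invoke the standard master theorem: since $\OMul{n}$ grows strictly faster than linearly, the recursion is dominated by the root (up to the usual $\log$ factor hidden in $\softoh{\cdot}$), giving $C(n) \in \softoh{\intOrder\OMul{n}}$. I would also note that correctness of the output set $\set{M}^{\text{op}}$ — that it indeed contains exactly the vectors listed in the statement — is already established by the correctness lemma for Algorithm~\ref{alg:precomputeMinVectors}, so only the complexity bound remains to be shown here.

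The step I expect to require the most care is the per-coordinate \ac{LCLM} cost: one must make sure the degrees of the intermediate minimal polynomials actually stay bounded by the size of the associated index range (so that feeding them into the $\softoh{\OMul{\cdot}}$ \ac{LCLM} routine is legitimate), and that the componentwise \ac{LCLM} of two \emph{vectors} is correctly accounted for as $\intOrder+1$ scalar \ac{LCLM} computations rather than something more expensive. The degree bound follows because $\deg(\minpolyOp{\set{B}}{\vec{a}}) \le |\set{B}|$ as recalled after~\eqref{eq:def_minpoly_gen_op}, so the bookkeeping goes through; everything else is routine.
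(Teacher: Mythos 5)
Your proof is correct and takes essentially the same approach as the paper. The paper's own proof is much terser: it simply cites the recursion analysis of~\cite[Theorem~3.2.7]{caruso2017new}, which gives $\softoh{\OMul{n}}$ for a single coordinate's \ac{LCLM} tree, and scales by the $\intOrder+1$ coordinates; you unpack exactly the divide-and-conquer, superadditivity, and master-theorem reasoning that that citation encapsulates.
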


\begin{proof}
 Algorithm~\ref{alg:precomputeMinVectors} is a generalization of the procedure in~\cite[Theorem~3.2.7]{caruso2017new} to construct a single minimal polynomial, which requires $\softoh{\OMul{n}}$ operations in $\Fqm$.
 Hence, the overall complexity of Algorithm~\ref{alg:precomputeMinVectors} is in the order of $\softoh{\intOrder\OMul{n}}$ operations in $\Fqm$.
 \qed
\end{proof}

\begin{theorem}[Computational Complexity]\label{thm:comp_fast_KNH_IGab}
 Algorithm~\ref{alg:skewIntTree} solves Problem~\ref{prob:intProblemIGab} over $\SkewPolyringZeroDer^{\intOrder+1}$ in $\softoh{\intOrder^\matmult\OMul{n}}$ operations in $\Fqm$.
\end{theorem}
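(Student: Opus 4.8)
The plan is to analyze the recursion tree of Algorithm~\ref{alg:skewIntTree} and bound the work done at each node, then sum over the tree. The algorithm is a balanced binary tree with $n$ leaves and $\OCompl{\log n}$ levels; at a node handling the index interval $[i_1,i_2]$ of size $\ell = i_2 - i_1 + 1$, the dominant operations are (a) the two vector-reductions $\B \modr \vecMinpolyNoX{[i_1,z]}$ and $\mat{T}_1\B \modr \vecMinpolyNoX{[z+1,i_2]}$ in Lines~\ref{line:VecMod1} and~\ref{line:VecMod2}, and (b) the matrix product $\mat{T}_2\mat{T}_1$ in Line~\ref{line:T2T1}. The key structural fact I would establish first, by induction on the tree using Theorem~\ref{thm:KNHminimality} (equivalently Lemma~\ref{lem:correctness_skewMultVarKNH}), is a \emph{degree bound}: every skew-polynomial entry appearing in the matrices $\mat{T}$, in the (reduced) intermediate bases $\B$, and in the minimal-polynomial vectors $\vecMinpolyNoX{[i_1,i_2]}$ handled at a node of ``width'' $\ell$ has degree $\OCompl{\ell + k}$, hence $\OCompl{\ell}$ when $k \le n$ is absorbed into the leaf-level preprocessing. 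This is exactly the point of reducing modulo the minimal-polynomial vectors before recursing: it caps the degrees so that a node of width $\ell$ manipulates $(\intOrder+1)\times(\intOrder+1)$ matrices over $\SkewPolyringZeroDer_{<\OCompl{\ell}}$.

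\textbf{Cost at a single node.} Given the degree bound, I would invoke the cost model from Section~\ref{sec:preliminaries}: right-dividing a length-$(\intOrder+1)$ skew-polynomial vector of degree $\OCompl{\ell}$ by another costs $\OCompl{\intOrder}$ skew-polynomial divisions, i.e.\ $\softoh{\intOrder\,\OMul{\ell}}$ operations in $\Fqm$; and this is done once per row of $\B$, giving $\softoh{\intOrder^2\,\OMul{\ell}}$ for each of the two reduction steps. The matrix product $\mat{T}_2\mat{T}_1$ of two $(\intOrder+1)\times(\intOrder+1)$ matrices with entries of degree $\OCompl{\ell}$ costs $\softoh{\intOrder^\matmult\,\OMul{\ell}}$ by combining fast matrix multiplication with fast skew-polynomial multiplication (entrywise convolution of the $\intOrder^2$-sized products, each of bounded degree). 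So the per-node cost is $\softoh{\intOrder^\matmult\,\OMul{\ell}}$, dominated by the matrix multiplication.

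\textbf{Summing over the tree.} At level $t$ (counting from the root) there are $2^t$ nodes, each of width $\ell \approx n/2^t$. Using superadditivity of $\OMul{\cdot}$ (so $2^t\,\OMul{n/2^t} \le \OMul{n}$, up to constants) and likewise for the matrix-multiplication factor, each level contributes $\softoh{\intOrder^\matmult\,\OMul{n}}$, and there are $\OCompl{\log n}$ levels, which the soft-$O$ swallows. Adding the precomputation cost from Lemma~\ref{lem:comp_min_vecs_gen_op}, namely $\softoh{\intOrder\,\OMul{n}}$ for building $\set{M}^{\mathrm{op}}$, and the final application $\mat{T}\cdot\mat{I}_{\intOrder+1}$ plus the $\OCompl{\log n}$ top-level reductions (again $\softoh{\intOrder^\matmult\,\OMul{n}}$), the total is $\softoh{\intOrder^\matmult\,\OMul{n}}$ operations in $\Fqm$, as claimed; correctness is Lemma~\ref{lem:correctness_skewMultVarKNH} together with the fact that the operator evaluation maps satisfy Assumption~\ref{ass:mod_vectors} via Lemma~\ref{lem:mod_vec_op_eval}.

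\textbf{The main obstacle} I anticipate is making the degree bound genuinely tight and uniform across the tree: one must check that reducing $\mat{T}_1\B$ modulo $\vecMinpolyNoX{[z+1,i_2]}$ before the second recursive call really does keep the degrees of the returned $\mat{T}_2$ (and hence of $\mat{T}_2\mat{T}_1$, and of the $\B$ passed deeper) proportional to the subtree width and not to $n$. This requires relating $\deg_\vec{w}(\vec{b}_j)$ after $\ell$ interpolation steps to $\ell$ (a consequence of Theorem~\ref{thm:KNHminimality}: the minimal weighted degree in $\set{T}_{i,j}$ grows by at most one per step), and controlling the degree growth in the recorded transformation matrices $\mat{U}$ and their products — each $\mat{U}$ has entries of degree $\le 1$, but a product of $\ell$ of them could a priori have degree $\ell$, which is exactly the bound we want but must be argued does not blow up further under the interleaving of reductions. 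The secondary technical point is justifying the fast $\intOrder\times\intOrder$ matrix multiplication over a noncommutative polynomial ring with bounded degrees, which follows from evaluating the skew multiplication cost $\OMul{\cdot}$ once and treating the $\Fqm$-linear-algebra part via the exponent $\matmult$, but should be stated carefully.
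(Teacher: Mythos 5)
Your proposal is correct and follows essentially the same divide-and-conquer analysis as the paper: degree bounds on $\mat{T}_1,\mat{T}_2$ at each node (the paper states $\deg(\mat{T}_1),\deg(\mat{T}_2)\le n/2$ tersely), the $\softoh{\intOrder^\matmult\OMul{\cdot}}$ cost of the matrix product and $\softoh{\intOrder^2\OMul{\cdot}}$ cost of the two modular reductions per node, summation over the recursion tree (the paper invokes the master theorem where you sum levels via superadditivity), plus the $\softoh{\intOrder\,\OMul{n}}$ precomputation from Lemma~\ref{lem:comp_min_vecs_gen_op}. The ``main obstacle'' you flag — that reducing modulo the precomputed minimal-polynomial vectors keeps degrees proportional to subtree width — is exactly the degree bound the paper asserts without elaboration, so your write-up is a somewhat more careful account of the same argument.
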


\begin{proof}
  Let $C(n)$ denote the complexity on $n$ input points without the cost of $\textsf{LinInterpolatePoint}$.
  We have that $\deg(\mat{T}_1), \deg(\mat{T}_2)\leq n/2$ and thus the product of $\mat{T}_2\mat{T}_1$ in Line~\ref{line:T2T1} requires $\softoh{\intOrder^\matmult\OMul{n/2}}\in\softoh{\intOrder^\matmult\OMul{n}}$ operations in $\Fqm$.
  By the master theorem we have $C(n)=2C(n/2)+\softoh{\intParam^\matmult n/2}$ implying that $C(n)\in\softoh{\intParam^\matmult n}$.
  The complexity of $\textsf{LinInterpolatePoint}$ is dominated by $\intParam+1$ univariate skew polynomials of degree less than $1$, which requires $\oh{\intParam^2}$ operations in $\Fqm$.  
  The routine $\textsf{LinInterpolatePoint}$ is called $n$ times yielding $\oh{\intParam^2n}$ operations in total.
  By Lemma~\ref{lem:comp_min_vecs_gen_op} all minimal polynomial vectors required in Lines \ref{line:VecMod1} and \ref{line:VecMod2} of Algorithm~\ref{alg:skewIntTree} can be pre-computed in $\softoh{\intOrder\OMul{n}}$ operations in $\Fqm$.
  One (right) modulo operation in $\SkewPolyringZeroDer^{\intOrder+1}$ requires $\oh{\intOrder\OMul{n}}$ operations in $\Fqm$. Therefore, Lines~\ref{line:VecMod1} and~\ref{line:VecMod2} require $\softoh{\intOrder^2\OMul{n}}$ operations in $\Fqm$ each.
 \qed
\end{proof}

\begin{remark}[Practical Consideration]
 Note, that in the $\SkewPolyringZeroDer$-case the $i$-th generalized operator vector evaluation map $\vecOpEv{i}{x\vec{\Q}}{a_i}$ of $x\Q$ can be computed efficiently from $\vecOpEv{i}{\vec{\Q}}{a_i}$ requiring only one application of the automorphism $\aut$ and one multiplication since by the product rule~\cite{martinez2019private} we have that
 \begin{equation}
  \vecOpEv{i}{x\vec{\Q}}{a_i}=\aut\left(\vecOpEv{i}{\vec{\Q}}{a_i}\right)a_i,
  \quad\forall i=0,\dots,n-1.
 \end{equation} 
\end{remark}

\subsection{Interpolation-Based Decoding of Interleaved Linearized Reed--Solomon Codes}

Linearized Reed--Solomon \acused{LRS}(\ac{LRS}) codes are codes that have distance properties with respect to the \emph{sum-rank} metric and were introduced in~\cite{martinez2018skew} and also considered in~\cite{caruso2019residues}.
Recently, codes in the sum-rank metric gained attraction since they generalize several code families in the Hamming metric and the rank metric, such as Reed--Solomon codes and Gabidulin codes, and have potential applications in code-based quantum-resistant cryptosystems~\cite{puchinger2022generic}.
Further applications include the construction of space-time codes~\cite{lu2005unified}, locally repairable codes with maximal recoverability~\cite{martinez2019universal} (also known as partial MDS codes) and error control for multishot network coding~\cite{martinez2019reliable}.
Interpolation-based decoding of $\intOrder$-\emph{interleaved} \ac{LRS} codes, which allows for correcting errors beyond the unique decoding radius (up to $\frac{\intOrder}{\intOrder+1}(n-k+1)$) in the sum-rank metric, was recently considered in~\cite{bartz2022fast}.
In the following we show, how Algorithm~\ref{alg:skewIntTree} can be used to solve the interpolation step in the interpolation-based decoder for \ac{ILRS} from~\cite{bartz2022fast} efficiently.

The \emph{sum-rank weight} of a vector $\vec{x}=\left(\vec{x}^{(1)}\mid\vec{x}^{(2)}\mid\dots\mid\allowbreak\vec{x}^{(\shots)}\right)\in\Fqm^n$, where $\vec{x}^{(i)}\in\Fqm^{n_l}$ for all $l=1,\dots,\shots$, is defined as (see~\cite{lu2005unified,nobrega2010multishot})
 \begin{equation}
 	\SumRankWeight(\vec{x})\defeq\sum_{l=1}^{\shots}\rk_q\left(\vec{x}^{(l)}\right).
 \end{equation}
For any $\vec{x}\in\Fqm^n$ we have that $\SumRankWeight(\vec{x})\leq\HammingWeight(\vec{x})$ (see~\cite{martinez2016similarities,martinez2018skew}).
The \emph{sum-rank distance} between two vectors $\vec{x},\vec{y}\in\Fqm^n$ is then
\begin{equation}
	\SumRankDist(\vec{x},\vec{y})\defeq\SumRankWeight(\vec{x}-\vec{y})=\sum_{l=1}^{\shots}\rk_q\left(\vec{x}^{(l)}-\vec{y}^{(l)}\right).
\end{equation}

The sum-rank weight of a matrix $\mat{X}=(\mat{X}^{(1)} \mid \mat{X}^{(2)} \mid \dots \mid \mat{X}^{(\shots)})\in\Fqm^{\intOrder\times n}$ is defined as 
\begin{equation}
  \SumRankWeight(\mat{X})\defeq\sum_{i=1}^{\shots}\rk_q\left(\mat{X}^{(i)}\right),
\end{equation}
where $\mat{X}^{(i)}\in\Fqm^{\intOrder\times n_i}$ for all $i=1,\dots,n_i$.
The sum-rank distance between two matrices $\mat{X},\mat{Y}\in\Fqm^{\intOrder\times n}$ is then defined as
\begin{equation}
  \SumRankDist(\mat{X},\mat{Y})\defeq\SumRankWeight(\mat{X}-\mat{Y})=\sum_{i=1}^{\shots}\rk_q\left(\mat{X}^{(i)}-\mat{Y}^{(i)}\right).
\end{equation}

As channel model we consider the \emph{sum-rank channel} 
\begin{equation}
	\vec{r}=\vec{c}+\vec{e}
\end{equation}
where the error vector
\begin{equation}
 \vec{e}=(\vec{e}^{(1)}\mid\vec{e}^{(2)}\mid\dots\mid\vec{e}^{(\shots)})
\end{equation}
has sum-rank weight $\SumRankWeight(\vec{e})=t$.

\begin{definition}[Interleaved Linearized Reed--Solomon Code]
 Let $\aut$ be the Frobenius automorphism of $\Fqm$ defined as $\aut(\cdot)=\cdot^q$.
 Let $\bm{\xi}=(\xi_0,\xi_1,\dots,\xi_{\shots-1})\in\Fqm^\shots$ be a vector containing representatives from different conjugacy classes of $\Fqm$.
 Let the vectors $\bm{\beta}^{(l)}=\Big(\beta_0^{(l)},\allowbreak\dots,\allowbreak\beta_{n_l-1}^{(l)}\Big)\in\Fqm^{n_l}$ contain $\Fq$-linearly independent elements from $\Fqm$ for all $l=1,\dots,\shots$ and define the vector $\bm{\beta}=\left(\bm{\beta}^{(1)},\bm{\beta}^{(2)},\dots,\bm{\beta}^{(\shots)}\right)\in\Fqm^n$.
 An $\intOrder$-interleaved linearized Reed--Solomon (ILRS) code of length $n=n_1+n_2+\dots+n_\shots$ and dimension $k$ is defined as
 \begin{equation}
 	\intLinRS{\bm{\beta},\shots,\intOrder;n,k}=\left\{\left(
 	\begin{array}{c|c|c}
 	 \opev{f^{(1)}}{\bm{\beta}^{(1)}}{\xi_0} \, & \quad\dots\quad & \, \opev{f^{(1)}}{\bm{\beta}^{(\shots)}}{\xi_{\shots-1}}
 	 \\
 	 \opev{f^{(2)}}{\bm{\beta}^{(1)}}{\xi_0} & \dots & \opev{f^{(2)}}{\bm{\beta}^{(\shots)}}{\xi_{\shots-1}}
 	 \\
 	 \vdots & \ddots & \vdots
 	 \\
 	 \opev{f^{(\intOrder)}}{\bm{\beta}^{(1)}}{\xi_0} & \dots & \opev{f^{(\intOrder)}}{\bm{\beta}^{(\shots)}}{\xi_{\shots-1}}
 	\end{array}
 	\right): \begin{array}{c}f^{(j)}\in\SkewPolyringZeroDer_{<k}, \\\forall j\in\intervallincl{1}{\intOrder}\end{array}\right\}.
 \end{equation}
\end{definition}

The code rate and the minimum sum-rank distance of~\ac{ILRS} codes is $R=\frac{k}{n}$ and $d=n-k+1$ (see~\cite{bartz2022fast}), respectively. \ac{ILRS} codes fulfill the Singleton-like bound in the sum-rank metric with equality and thus are~\ac{MSRD} codes~\cite{martinez2018skew}. 

Suppose we transmit a codeword
\begin{equation}
	\mat{C}=
	\left(
	\begin{array}{c|c|c}
	  \opev{f^{(1)}}{\bm{\beta}^{(1)}}{\xi_0} \, & \quad\dots\quad & \, \opev{f^{(1)}}{\bm{\beta}^{(\shots)}}{\xi_{\shots-1}}
 	 \\
 	 \opev{f^{(2)}}{\bm{\beta}^{(1)}}{\xi_0} & \dots & \opev{f^{(2)}}{\bm{\beta}^{(\shots)}}{\xi_{\shots-1}}
 	 \\
 	 \vdots & \ddots & \vdots
 	 \\
 	 \opev{f^{(\intOrder)}}{\bm{\beta}^{(1)}}{\xi_0} & \dots & \opev{f^{(\intOrder)}}{\bm{\beta}^{(\shots)}}{\xi_{\shots-1}}
	\end{array}
	\right)\in\intLinRS{\bm{\beta},\shots,\intOrder;n,k}
\end{equation}
and receive a matrix 
\begin{equation}
 \mat{R}=\left(\mat{R}^{(1)}\mid\mat{R}^{(2)}\mid\dots\mid\mat{R}^{(\shots)}\right)=\mat{C}+\mat{E}
\end{equation}
with $\R^{(l)}\in\Fqm^{\intOrder\times n_l}$ for all $l=1,\dots,\shots$ where the error matrix $\E=\left(\E^{(1)}\mid\E^{(2)}\mid\dots\mid\E^{(\shots)}\right)$ has sum-rank weight $\SumRankWeight(\mat{E})=t$.
In order to simplify the notation we index the entries in $\mat{R}$ as $r_{j,i}$ for $j=0,\dots,\intOrder-1$ and $i=0,\dots,n-1$.

Further, we define the vector 
\begin{equation}\label{eq:def_eval_param_vec}
    \vec{a}=(\vec{a}^{(1)},\vec{a}^{(2)},\dots,\vec{a}^{(\shots)})\in\Fqm^n
\end{equation}
where $\vec{a}^{(l)}=(\xi_{l-1},\dots,\xi_{l-1})\in\Fqm^{n_l}$ for all $l=1,\dots,\shots$. 

For interleaved linearized Reed--Solomon codes, the interpolation point set is
\begin{equation}
    \intPointSet=\{(\beta_i,r_{0,i},\dots,r_{\intOrder-1,i}):i=0,\dots,n-1\}\subset\Fqm^{\intOrder+1}.
\end{equation}

\begin{problem}[Vector Interpolation Problem]\label{prob:intProblemILRS}
Given the integer $\intParam\in\mathbb{Z}_+$, a set of $\Fqm$-linear vector evaluation maps $\set{E}^{op}=\{\vecOpEvNoInput{0},\dots,\vecOpEvNoInput{n-1}\}$
as defined in~\eqref{eq:def_vec_op_eval}, a vector $\a=(a_0,a_1,\dots,\allowbreak a_{n-1})\in\Fqm^n$ as defined in~\eqref{eq:def_eval_param_vec} and a vector $\vec{w}=(0,k-1,\dots,k-1)\in\mathbb{Z}_+^{\intParam+1}$ compute a $\vec{w}$-ordered weak-Popov Basis for the left $\SkewPolyringZeroDer$-module
\begin{equation}\label{eq:int_module_ILRS}
    \bar{\module{K}}_{n-1}=\{\vec{b}\in\SkewPolyringZeroDer^{\intParam+1}:\vecOpEv{i}{\vec{b}}{a_i}=0,\,\forall i=0,\dots,n-1\}.
\end{equation}
\end{problem}

A nonzero solution of Problem~\ref{prob:intProblemILRS} exists if the degree constraint satisfies
\begin{equation}
	\degConstraint=\left\lceil\frac{n+\intOrder(k-1)+1}{\intOrder+1}\right\rceil.
\end{equation}
By following the ideas in Proposition~\ref{eq:comp_skew_KNH_gen_op} for general evaluation parameters we see that Problem~\ref{prob:intProblemILRS} can be solved by Algorithm~\ref{alg:skewMultVarKNH} in $\oh{\intOrder^2n^2}$ operations in $\Fqm$.

If the sum-rank weight of the error $t=\SumRankWeight(\mat{E})$ satisfies
\begin{equation}\label{eq:dec_cond_ILRS}
 t<\frac{\intOrder}{\intOrder+1}(n-k+1)
\end{equation}
we have that (see~\cite[Theorem~2]{bartz2022fast})
\begin{equation}\label{eq:dec_condition_ILRS}
	P(x)=Q_0(x)+\sum_{j=1}^{\intOrder}Q_j(x)f^{(j)}(x)=0.
\end{equation}

The root-finding problem consists of finding all polynomials $f^{(1)},\dots,f^{(\intOrder)}\in\SkewPolyringZeroDer_{<k}$ that satisfy~\eqref{eq:dec_condition_ILRS}.
The root-finding problem can be solved efficiently by the minimal approximant basis methods in~\cite{bartz2019fast,bartz2021fast} requiring at most $\softoh{\intOrder^\matmult\OMul{n}}$ operations in $\Fqm$. 

\subsubsection{Solving the Interpolation Step via the Fast \ac{KNH} Interpolation}

Since Lemma~\ref{lem:mod_vec_op_eval} holds for arbitrary evaluation parameters $\vec{a}$ we have that the minimal polynomial vectors in~\eqref{eq:def_gen_op_minpoly_vecs} fulfill Assumption~\ref{ass:mod_vectors}.
Consequently, we can solve Problem~\ref{prob:intProblemILRS} by calling Algorithm~\ref{alg:skewIntTree} with $\set{E}^{op}$ as defined in~\eqref{eq:def_vec_op_eval}, the basis $\mat{I}_{\intOrder+1}$ and initial degrees $(0,k-1,\dots,k-1)$.

\subsubsection{Complexity Analysis}

Note, that Problem~\ref{prob:intProblemIGab} is an instance of Problem~\ref{prob:intProblemILRS} with particular evaluation parameters.
Therefore, the complexity follows directly from Theorem~\ref{thm:comp_fast_KNH_IGab}.

\begin{corollary}[Computational Complexity]\label{col:comp_fast_KNH_ILRS}
 Algorithm~\ref{alg:skewIntTree} solves Problem~\ref{prob:intProblemILRS} over $\SkewPolyringZeroDer^{\intOrder+1}$ in $\softoh{\intOrder^\matmult\OMul{n}}$ operations in $\Fqm$.
\end{corollary}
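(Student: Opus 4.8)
The plan is to reduce the claim to Theorem~\ref{thm:comp_fast_KNH_IGab} by noticing that Problem~\ref{prob:intProblemIGab} is precisely the special case of Problem~\ref{prob:intProblemILRS} in which the evaluation-parameter vector is $\vec{a}=(1,1,\dots,1)$, and that nothing in the proof of Theorem~\ref{thm:comp_fast_KNH_IGab} actually uses this specialization. First I would record that every structural ingredient of that proof remains valid for the general $\vec{a}$ from~\eqref{eq:def_eval_param_vec}: Lemma~\ref{lem:mod_vec_op_eval} is stated for arbitrary evaluation parameters, so the generalized operator vector evaluation maps in~\eqref{eq:def_vec_op_eval} together with the minimal polynomial vectors in~\eqref{eq:def_gen_op_minpoly_vecs} satisfy Assumption~\ref{ass:mod_vectors}; hence Algorithm~\ref{alg:skewIntTree}, called with $\set{E}^{op}$, basis $\mat{I}_{\intOrder+1}$ and initial degrees $(0,k-1,\dots,k-1)$, correctly outputs a transformation matrix yielding a $\vec{w}$-ordered weak-Popov basis for $\bar{\module{K}}_{n-1}$ of Problem~\ref{prob:intProblemILRS}.

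Next I would re-run the cost accounting of Theorem~\ref{thm:comp_fast_KNH_IGab} line by line, checking where (if anywhere) the evaluation parameters enter. The recursion tree of Algorithm~\ref{alg:skewIntTree} has depth $\oh{\log n}$ and at each internal node combines two transformation matrices of degree at most $n/2$ via one multiplication in $\SkewPolyringZeroDer^{(\intOrder+1)\times(\intOrder+1)}$ (Line~\ref{line:T2T1}), costing $\softoh{\intOrder^\matmult\OMul{n}}$; so the merging part satisfies $C(n)=2C(n/2)+\softoh{\intOrder^\matmult n}$, giving $C(n)\in\softoh{\intOrder^\matmult n}$ by the master theorem. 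Each leaf calls $\textsf{SkewInterpolatePoint}$, which merely evaluates $\intOrder+1$ functionals and assembles the degree-$\leq 1$ update matrix $\mat{U}$ from~\eqref{eq:opMatrix} in $\oh{\intOrder^2}$ operations, hence $\oh{\intOrder^2 n}$ over all $n$ leaves. The only place the evaluation parameters could matter is the precomputation and application of the minimal polynomial vectors in Lines~\ref{line:VecMod1} and~\ref{line:VecMod2}; but Lemma~\ref{lem:comp_min_vecs_gen_op}, whose proof rests on~\cite[Theorem~3.2.7]{caruso2017new}, bounds the cost of building the set $\set{M}^{\text{op}}$ by $\softoh{\intOrder\OMul{n}}$ for \emph{any} $\vec{a}$, and each of the $\oh{n}$ right-modulo reductions of a vector in $\SkewPolyringZeroDer^{\intOrder+1}_{\leq n}$ costs $\oh{\intOrder\OMul{n}}$, so these two lines contribute $\softoh{\intOrder^2\OMul{n}}$ in total.

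Summing the three contributions and using that $\OMul{n}$ is at least linear in $n$ and $\matmult\geq 2$ yields the claimed $\softoh{\intOrder^\matmult\OMul{n}}$ bound. I do not anticipate a genuine obstacle: the argument of Theorem~\ref{thm:comp_fast_KNH_IGab} is parameter-agnostic, so the only care needed is to point explicitly at the two lemmas that were already formulated for general $\vec{a}$ (Lemma~\ref{lem:mod_vec_op_eval} and Lemma~\ref{lem:comp_min_vecs_gen_op}) and to observe that $\textsf{SkewInterpolatePoint}$ and the matrix-product merge step are unchanged. The mildest subtlety worth double-checking is that the degree bounds $\deg(\mat{T}_1),\deg(\mat{T}_2)\leq n/2$ used to bound the merge cost still hold; they do, because they follow from Theorem~\ref{thm:KNHminimality} applied to the sub-trees, which again does not reference $\vec{a}$.
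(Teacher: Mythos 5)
Your proposal is correct and matches the paper's approach: the paper proves this corollary in one line by observing that Problem~\ref{prob:intProblemIGab} and Problem~\ref{prob:intProblemILRS} differ only in the choice of evaluation parameters, so the complexity bound carries over directly from Theorem~\ref{thm:comp_fast_KNH_IGab} (with Lemma~\ref{lem:mod_vec_op_eval} already stated for arbitrary $\vec{a}$ ensuring Assumption~\ref{ass:mod_vectors} holds). Your line-by-line verification that the cost analysis is parameter-agnostic is simply a more explicit rendering of the same argument.
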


\subsection{Interpolation-Based Decoding of Interleaved Skew Reed--Solomon Codes}

In this section we consider decoding of Skew Reed--Solomon\acused{SRS} (\ac{SRS}) codes with respect to the \emph{skew metric}, which was introduced in~\cite{martinez2018skew}.
Decoding schemes for \ac{SRS} codes that allow for correcting error of skew weight up to $\lfloor\frac{n-k}{2}\rfloor$ were presented in~\cite{martinez2018skew,boucher2018algorithm,liu2015construction,bartz2021fast}.
Interpolation-based decoding of \emph{interleaved} \ac{SRS} \ac{ISRS} codes that allows for decoding errors of skew weight up to $\frac{\intOrder}{\intOrder+1}(n-k+1)$, where $\intOrder$ is the \emph{interleaving order}, was considered in~\cite{bartz2022fast}.
The interpolation scheme can be either used as a list decoder or as a probabilistic-unique decoder.

In this work we consider the definition of the skew weight from~\cite[Proposition~1]{boucher2018algorithm}.
Let the vector $\vec{b}\in\Fqm^n$ contain $P$-independent elements and let $\set{B}$ be the $P$-closed set generated by the elements in $\b$.
Then, the \emph{skew weight} of a vector $\vec{x}\in\Fqm^n$ with respect to $\set{B}$ is defined as
\begin{equation}\label{eq:def_skew_weight}
 \SkewWeight(\vec{x})\defeq \deg\left(\lclm{x-\conjg{b_i}{x_i}}{\mystack{1\leq i\leq n-1}{x_i\neq0}}\right).
\end{equation}
For simplicity we omit the dependence on $\set{B}$ in the definition of $\SkewWeight(\cdot)$ as it will be clear from the context.
Similar to the rank and the sum-rank weight we have that $\SkewWeight(\vec{x})\leq\HammingWeight(\vec{x})$ for all $\vec{x}\in\Fqm^n$ (see~\cite{martinez2018skew}).
The \emph{skew distance} between two vectors $\vec{x},\vec{y}\in\Fqm^n$ is defined as
\begin{equation}\label{eq:def_skew_distance}
 \SkewDist(\vec{x},\vec{y})\defeq\SkewWeight(\vec{x}-\vec{y}).
\end{equation}
As a channel model we consider the \emph{skew metric channel}
\begin{equation}
 \vec{r}=\vec{c}+\vec{e}
\end{equation}
where the error vector $\vec{e}$ has skew weight $\SkewWeight(\vec{e})=t$.

We define (vertically) interleaved skew Reed--Solomon (ISRS) codes as follows.
\begin{definition}[Interleaved Skew Reed--Solomon Code]
 Let $\vec{b}=(b_0,b_1,\dots,b_{n-1})\in\Fqm^n$ contain $P$-independent elements from $\Fqm$. 
 For a fixed integer $k\leq n$, a vertically $\intOrder$-interleaved skew Reed--Solomon (ISRS) code of length $n$ and dimension $k$ is defined as
 \begin{equation}\label{eq:defSkewRS}
 	\intSkewRS{\vec{b},\intOrder;n,k}
 	=\left\{
 	\begin{pmatrix}
 	 \remev{f^{(1)}}{\vec{b}}
 	 \\
 	 \remev{f^{(2)}}{\vec{b}}
 	 \\
 	 \vdots 
 	 \\
 	 \remev{f^{(\intOrder)}}{\vec{b}}
 	\end{pmatrix}:f^{(j)}\in\SkewPolyringZeroDer_{<k},\forall j\in\intervallincl{1}{\intOrder}
 	\right\}.
 \end{equation}
\end{definition}

\ac{ISRS} codes are \ac{MSD} codes, i.e. we have that
\begin{equation}
	\SkewDist(\intSkewRS{\vec{b},\intOrder;n,k})=n-k+1.
\end{equation}

Under a fixed basis of $\Fqms$ over $\Fqm$ there is a bijection between a matrix $\mat{C}\in\Fqm^{\intOrder\times n}$ and a vector $\vec{c}\in\Fqms^n$.
Hence, we can represent each codeword $\mat{C}\in\intSkewRS{\vec{b},\intOrder;n,k}$ as a vector $\vec{c}=f(\vec{b})\in\Fqms^n$ where $f\in\Fqms[\x;\aut]$ is the the polynomial obtained by considering the coefficients of $f^{(1)},\dots,f^{(\intOrder)}$ over $\Fqms$.
This relation between $\intOrder$-interleaved evaluation codes over $\Fqm$ and punctured evaluation codes over the bigger field $\Fqms$ is well-known from Reed--Solomon and Gabidulin codes (see e.g.~\cite{sidorenko2008decoding,bartz2017algebraic}).

Suppose we transmit a vector $\vec{c}=f(\vec{b})\in\intSkewRS{\vec{b},\intOrder;n,k}$ and receive a vector
\begin{equation}
 \mat{r}=\mat{c}+\mat{e}\in\Fqms^n
\end{equation}
where $\SkewWeight(\vec{e})=t$.
Let $\mat{R}\in\Fqm^{\intOrder\times n}$ denote the expanded matrix of $\vec{r}\in\Fqms^n$ over $\Fqm$.

\begin{definition}[Remainder Vector Evaluation Map]\label{def:rem_vec_eval}
 Given an interpolation point set $\intPointSet=\{(p_{i,0},p_{i,1},\dots,p_{i,\intParam}):i=0,\dots,n-1\}\subseteq\Fqm^{\intParam+1}$ and a vector $\Q\in\SkewPolyringZeroDer^{\intOrder+1}$ we define the remainder vector evaluation maps as
\begin{equation}\label{eq:def_vec_rem_eval}
 \vecRemEv{i}{\vec{Q}}\defeq \remev{Q_0}{p_{i,0}} + \sum_{j=1}^{\intParam}\remev{Q_j}{\conjg{p_{i,0}}{p_{i,j}}}p_{i,j}, \quad\forall i=0,\dots,n-1.
\end{equation}
\end{definition}

\begin{remark}\label{rem:conjugates_with_zero}
 Note, that $\conjg{p_{i,0}}{p_{i,j}}$ is not defined whenever $p_{i,j}=0$. Similar to~\cite{martinez2019reliable} we define $\remev{Q_j}{\conjg{p_{i,0}}{p_{i,j}}}p_{i,j}=0$ for all $p_{i,j}=0$.
 This definition is very natural in view of the correspondence to the generalized operator evaluation (see Lemma~\ref{lem:rel_remev_opev}) and the product rule (see~\cite[Theorem~2.7]{lam1988vandermonde}).
\end{remark}

For interleaved skew Reed--Solomon codes, the interpolation point set is
\begin{equation}
    \intPointSet=\{(\beta_i,r_{0,i},\dots,r_{\intOrder-1,i}):i=0,\dots,n-1\}\subset\Fqm^{\intOrder+1}.
\end{equation}

\begin{problem}[Vector Interpolation Problem]\label{prob:intProblemISRS}
Given the integer $\intParam\in\mathbb{Z}_+$, a set of $\Fqm$-linear vector evaluation maps $\set{E}^{rem}=\{\vecRemEvNoInput{0},\dots,\vecRemEvNoInput{n-1}\}$
as defined in~\eqref{eq:def_vec_rem_eval} and a vector $\vec{w}=(0,k-1,\dots,k-1)\in\mathbb{Z}_+^{\intParam+1}$ compute a $\vec{w}$-ordered weak-Popov Basis for the left $\SkewPolyringZeroDer$-module
\begin{equation}\label{eq:int_module_ISRS}
    \bar{\module{K}}_{n-1}=\{\vec{b}\in\SkewPolyringZeroDer^{\intParam+1}:\vecRemEv{i}{\vec{b}}=0,\,\forall i=0,\dots,n-1\}.
\end{equation}
\end{problem}

A nonzero solution of Problem~\ref{prob:intProblemISRS} exists if the degree constraint satisfies (see~\cite{bartz2022fast})
\begin{equation}
	\degConstraint=\left\lceil\frac{n+\intOrder(k-1)+1}{\intOrder+1}\right\rceil.
\end{equation}

\begin{proposition}
Problem~\ref{prob:intProblemISRS} can be solved by Algorithm~\ref{alg:skewMultVarKNH} in $\oh{\intOrder^2n^2}$ operations in $\Fqm$.
\end{proposition}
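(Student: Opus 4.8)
The plan is to mirror the proof of Proposition~\ref{eq:comp_skew_KNH_gen_op}, which treated the analogous statement for the operator-evaluation problem. First I would observe that Problem~\ref{prob:intProblemISRS} is a special case of the general vector interpolation problem (Problem~\ref{prob:generalIntProblem}): the remainder vector evaluation maps $\vecRemEvNoInput{i}$ of Definition~\ref{def:rem_vec_eval} are $\Fqm$-linear functionals $\SkewPolyringZeroDer^{\intParam+1}\to\Fqm$, and — using the product rule for the remainder evaluation (see Remark~\ref{rem:conjugates_with_zero} and~\cite[Theorem~2.7]{lam1988vandermonde}) together with the convention $\remev{Q_j}{\conjg{p_{i,0}}{p_{i,j}}}p_{i,j}=0$ when $p_{i,j}=0$ — the kernels $\module{K}_i$ and their intersections $\bar{\module{K}}_i$ are left $\SkewPolyringZeroDer$-submodules. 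Hence Lemma~\ref{lem:correctness_skewMultVarKNH} applies, and Algorithm~\ref{alg:skewMultVarKNH} returns a $\vec{w}$-ordered weak-Popov basis of $\bar{\module{K}}_{n-1}$, i.e. a solution to Problem~\ref{prob:intProblemISRS}.

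It then remains to bound the cost via Proposition~\ref{prop:complSkewMultVarKNH}. The complexity splits into three contributions. (i) The $\OCompl{\intParam n}$ applications of the maps $\vecRemEvNoInput{i}$ to vectors in $\SkewPolyringZeroDer^{\intParam+1}_{\leq n}$: evaluating $\vecRemEv{i}{\vec{Q}}$ amounts to $\intParam+1$ remainder evaluations of skew polynomials of degree at most $n$ (each costing $\oh{n}$ operations in $\Fqm$ by naive right division by a degree-$1$ polynomial), the $\intParam$ conjugates $\conjg{p_{i,0}}{p_{i,j}}$ (which depend only on the interpolation point and can be precomputed, $\oh{1}$ each), and $\intParam$ multiplications in $\Fqm$; this is $\oh{\intParam n}$ per map, hence $\oh{\intParam^2 n^2}$ in total. (ii) The $n$ products of a monic degree-$1$ skew polynomial with a vector from $\SkewPolyringZeroDer^{\intParam+1}_{\leq n}$ cost $\oh{\intParam n}$ each, hence $\oh{\intParam n^2}$ in total. (iii) The $\OCompl{\intParam n}$ products of an element of $\Fqm$ with a vector from $\SkewPolyringZeroDer^{\intParam+1}_{\leq n}$ cost $\oh{\intParam n}$ each, hence $\oh{\intParam^2 n^2}$. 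Adding the $\oh{\intParam n}$ field inversions/divisions, which are negligible, the dominating term is $\oh{\intOrder^2 n^2}$, as claimed.

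The one place to be careful — and the main obstacle — is justifying the reduction to Problem~\ref{prob:generalIntProblem}, i.e. that the $\bar{\module{K}}_i$ really are left $\SkewPolyringZeroDer$-submodules despite the conjugate terms $\conjg{p_{i,0}}{p_{i,j}}$ being undefined when $p_{i,j}=0$. The point is that the convention fixed in Remark~\ref{rem:conjugates_with_zero} is precisely the one for which the product rule for the remainder evaluation remains valid, so that $\vecRemEv{i}{x\vec{Q}}$ is controlled by $\vecRemEv{i}{\vec{Q}}$ in the way required for the kernel to be closed under left multiplication by $x$; this is exactly the structure already exploited in~\cite{bartz2022fast}. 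Once this is in place the complexity count above is routine, and the resulting bound coincides with those obtained for the operator-evaluation problems in Proposition~\ref{eq:comp_skew_KNH_gen_op} and Problem~\ref{prob:intProblemILRS}.
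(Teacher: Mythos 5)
Your proposal is correct and follows essentially the same route as the paper: reduce Problem~\ref{prob:intProblemISRS} to the general vector interpolation problem so that Lemma~\ref{lem:correctness_skewMultVarKNH} applies, then count the three contributions of Proposition~\ref{prop:complSkewMultVarKNH} (evaluation maps, degree-increasing products, cross-evaluation scalings) to obtain $\oh{\intOrder^2n^2}$. Your extra care about the left-submodule structure of the kernels under the zero-conjugate convention of Remark~\ref{rem:conjugates_with_zero} is a welcome elaboration of a point the paper handles only implicitly, but it does not change the argument.
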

\begin{proof}
 Problem~\ref{prob:intProblemISRS} is an instance of the general vector interpolation problem (Problem~\ref{prob:generalIntProblem}) which can be solved by Algorithm~\ref{alg:skewMultVarKNH} (see Lemma~\ref{lem:correctness_skewMultVarKNH}).
 By Proposition~\ref{prop:complSkewMultVarKNH}, Algorithm~\ref{alg:skewMultVarKNH} requires the computation of $\oh{sn}$ evaluation maps $\vecRemEvNoInput{i}$ of a vector in $\SkewPolyringZeroDer^{\intParam+1}_{\leq n}$, which requires $\oh{\intParam n}$ operations in $\Fqm$ each.
 Overall, the computation of the evaluation maps requires $\oh{\intParam^2 n^2}$ operations in $\Fqm$.
 The computation of the $n$ multiplications of a monic degree-1 skew polynomial with a vector from $\SkewPolyringZeroDer$ (Line~\ref{alg1:degreeinc}) requires $\oh{\intParam n^2}$ operations in $\Fqm$ in total. 
 The $\oh{\intOrder n}$ multiplications of an element from $\Fqm$ and a vector from $\SkewPolyringZeroDer^{\intParam+1}_{\leq n}$ require at most $\oh{\intParam^2 n^2}$ operations in $\Fqm$.
 Therefore we conclude that Algorithm~\ref{alg:skewMultVarKNH} can solve Problem~\ref{prob:intProblemIGab} requiring at most $\oh{\intOrder^2n^2}$ operations in $\Fqm$. 
 \qed
\end{proof}

If the skew weight of the error $t=\SkewWeight(\vec{e})$ satisfies
\begin{equation}\label{eq:dec_cond_SRS}
	t<\frac{\intOrder}{\intOrder+1}(n-k+1)
\end{equation}
it can be shown that (see~\cite{bartz2022fast})
\begin{equation}\label{eq:dec_condition_SRS}
	P(x)=Q_0(x)+\sum_{j=1}^{\intOrder}Q_j(x)f^{(j)}(x)=0.
\end{equation}

The root-finding step consists of finding all polynomials $f^{(1)},\dots,f^{(\intOrder)}\in\SkewPolyringZeroDer_{<k}$ that satisfy~\eqref{eq:dec_condition_SRS}.
The root-finding problem can be solved efficiently by the minimal approximant basis methods in~\cite{bartz2019fast,bartz2021fast} requiring at most $\softoh{\intOrder^\matmult\OMul{n}}$ operations in $\Fqm$. 

\subsubsection{Solving the Interpolation Step via the Fast \ac{KNH} Interpolation}

Define the sets
\begin{equation}\label{def:rem_ev_min_poly_sets}
    \set{P}_{[i,j]}^{(r)}\defeq\left\{\conjg{p_{l,0}}{p_{l,r}}:p_{l,r}\neq0, \forall l=i,\dots,j\right\}
\end{equation}
for all $0\leq i,j\leq n-1$ and $r=1,\dots,s$.
Then the vectors of minimal polynomials with respect to the generalized operator evaluation are defined as
\begin{equation}\label{eq:def_rem_minpoly_vecs}
 \vecMinpolyRem{[i,j]}\defeq
 \left(\minpolyRem{\{p_{i,0},\dots,p_{j,0}\}},\minpolyRem{\set{P}_{[i,j]}^{(1)}},\dots,\minpolyRem{\set{P}_{[i,j]}^{(\intParam)}}\right).
\end{equation}

\begin{lemma}\label{lem:mod_vec_rem_eval}
Let $\set{E}^{rem}=(\vecRemEvNoInput{0},\dots,\vecRemEvNoInput{n-1})$ be an ordered set of skew vector evaluation maps as defined in~\eqref{eq:def_vec_rem_eval} and let $\set{E}_{[i,j]}^{rem}=(\vecRemEvNoInput{i},\ldots,\vecRemEvNoInput{j})\subseteq\set{E}^{rem}$.
Then for any $\vec{Q}\in\SkewPolyringZeroDer^{\intParam+1}$ we have that
\begin{equation}\label{eq:mod_vec_rem_eval}
    \vecRemEv{l}{\vec{Q}}=\vecRemEv{l}{\vec{Q}\modr\vecMinpolyRem{[i,j]}},
    \quad\forall l=i,\dots,j.
\end{equation}
\end{lemma}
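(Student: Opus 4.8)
The plan is to reduce the vector identity~\eqref{eq:mod_vec_rem_eval} to the scalar statement of Lemma~\ref{lem:mod_rem_eval}, matching the summands of $\vecRemEvNoInput{l}$ one at a time, exactly in the spirit of the proof of Lemma~\ref{lem:mod_vec_op_eval}. First I would unfold the element-wise right modulo operation: by definition of $\vecMinpolyRem{[i,j]}$ in~\eqref{eq:def_rem_minpoly_vecs}, the components of $\vec{Q}\modr\vecMinpolyRem{[i,j]}$ are $Q_0\modr\minpolyRem{\{p_{i,0},\dots,p_{j,0}\}}$ in position $0$ and $Q_r\modr\minpolyRem{\set{P}_{[i,j]}^{(r)}}$ in position $r$ for $r=1,\dots,\intParam$, where the sets $\set{P}_{[i,j]}^{(r)}$ are those of~\eqref{def:rem_ev_min_poly_sets}.

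Next, fix an index $l\in\{i,\dots,j\}$ and expand both sides of~\eqref{eq:mod_vec_rem_eval} via the definition~\eqref{eq:def_vec_rem_eval} of $\vecRemEvNoInput{l}$; it then suffices to compare the two evaluations term by term. For the zeroth summand, $p_{l,0}$ is one of the elements $p_{i,0},\dots,p_{j,0}$, so applying Lemma~\ref{lem:mod_rem_eval} with $\set{B}=\{p_{i,0},\dots,p_{j,0}\}$ gives $\remev{Q_0}{p_{l,0}}=\remev{(Q_0\modr\minpolyRem{\{p_{i,0},\dots,p_{j,0}\}})}{p_{l,0}}$. For the $r$-th summand with $r\geq 1$ I would split into two cases: if $p_{l,r}=0$ both summands vanish by the convention of Remark~\ref{rem:conjugates_with_zero}; if $p_{l,r}\neq 0$, then $\conjg{p_{l,0}}{p_{l,r}}\in\set{P}_{[i,j]}^{(r)}$ by definition, hence $\minpolyRem{\set{P}_{[i,j]}^{(r)}}$ vanishes on $\conjg{p_{l,0}}{p_{l,r}}$ under the remainder evaluation, and Lemma~\ref{lem:mod_rem_eval} with $\set{B}=\set{P}_{[i,j]}^{(r)}$ yields $\remev{Q_r}{\conjg{p_{l,0}}{p_{l,r}}}=\remev{(Q_r\modr\minpolyRem{\set{P}_{[i,j]}^{(r)}})}{\conjg{p_{l,0}}{p_{l,r}}}$, and multiplying through by $p_{l,r}$ preserves the equality. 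Summing the matched terms over $r=0,\dots,\intParam$ gives~\eqref{eq:mod_vec_rem_eval}.

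The argument is essentially bookkeeping, so I do not anticipate a real obstacle; the only point deserving a line of care is the degenerate case where $\set{P}_{[i,j]}^{(r)}$ is empty (that is, $p_{l',r}=0$ for every $l'\in\{i,\dots,j\}$), in which $\minpolyRem{\emptyset}$ is to be read as the constant polynomial $1$, so that $Q_r\modr 1=0$; this is consistent because the corresponding summand of $\vecRemEv{l}{\vec{Q}}$ is then also $0$ for every such $l$. I would also remark that the identity is asserted only for $l\in\{i,\dots,j\}$, which is precisely the range over which each constituent minimal polynomial was constructed to vanish, so no additional hypotheses on the interpolation points are required. As an immediate consequence, the remainder vector evaluation maps of Definition~\ref{def:rem_vec_eval} together with the minimal polynomial vectors~\eqref{eq:def_rem_minpoly_vecs} satisfy Assumption~\ref{ass:mod_vectors}.
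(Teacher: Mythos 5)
Your proposal is correct and follows essentially the same route as the paper: reduce the vector identity to the scalar Lemma~\ref{lem:mod_rem_eval} component-by-component and invoke the convention of Remark~\ref{rem:conjugates_with_zero} for the $p_{l,r}=0$ terms. You spell out the bookkeeping in more detail than the paper does, and your added remark on the degenerate case $\set{P}_{[i,j]}^{(r)}=\emptyset$ (where $\minpolyRem{\emptyset}=1$ forces the corresponding component to $0$, which is consistent) is a correct and worthwhile point that the paper leaves implicit.
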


\begin{proof}
For the case where the interpolation point $\vec{p}_i$ (corresponding to $\vecRemEvNoInput{i}$) contains only nonzero elements, the lemma follows directly by applying the result from Lemma~\ref{lem:mod_rem_eval} to the elementary evaluations in the skew vector remainder evaluation maps defined in~\eqref{eq:def_vec_rem_eval}.
Since by definition of the sets in~\eqref{def:rem_ev_min_poly_sets} all conjugates $\conjg{p_{i,0}}{p_{i,r}}$ where $p_{i,r}=0$ are excluded, and by definition the evaluation of each $\remev{Q_r}{\conjg{p_{i,0}}{p_{i,r}}}p_{i,r}=0$ for all $p_{i,r}=0$ (see Remark~\ref{rem:conjugates_with_zero}), we have that the statement also holds in this case.
\qed
\end{proof}

An important consequence of Lemma~\ref{lem:mod_vec_rem_eval} is, that the vector remainder evaluation maps from Definition~\ref{def:rem_vec_eval} and the minimal polynomial vectors in~\eqref{eq:def_rem_minpoly_vecs} fulfill Assumption~\ref{ass:mod_vectors}.
Hence, we can solve Problem~\ref{prob:intProblemISRS} by calling Algorithm~\ref{alg:skewIntTree} with $\textsf{SkewInterpolateTree}(\set{E}^{rem}, \mat{I}_{\intOrder+1}, (0,k-1,\dots,k-1))$.

\subsubsection{Complexity Analysis}

\begin{lemma}[Complexity of Computing Minimal Polynomial Vectors]\label{lem:comp_min_vecs_rem}
 Algorithm~\ref{alg:precomputeMinVectors} constructs the (ordered) set containing the minimal polynomial vectors
 \begin{equation}
 	\set{M}^{\text{rem}}\defeq
    \left(\vecMinpolyRem{[0,n-1]},\vecMinpolyRem{[0,\lfloor n/2\rfloor-1]},\vecMinpolyRem{[\lfloor n/2\rfloor,n-1]}\dots,\vecMinpolyRem{[n-1,n-1]}\right)\subset\SkewPolyringZeroDer^{\intOrder+1}
 \end{equation}
 as defined in~\eqref{eq:def_rem_minpoly_vecs} in $\softoh{\intOrder\OMul{n}}$ operations in $\Fqm$.
\end{lemma}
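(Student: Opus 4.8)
The plan is to follow the same route as the proof of Lemma~\ref{lem:comp_min_vecs_gen_op}, adapting it from the generalized operator to the remainder evaluation. Algorithm~\ref{alg:precomputeMinVectors} acts coordinate-wise on the length-$(\intOrder+1)$ vectors: by definition of the element-wise \ac{LCLM}, the merge step $\vecMinpolyNoX{[a,b]}\gets\lclm{\vecMinpolyNoX{[a,a+\delta-1]},\vecMinpolyNoX{[a+\delta,b]}}{}$ is just $\intOrder+1$ independent \ac{LCLM} computations, one per coordinate $r\in\{0,\dots,\intParam\}$. Restricted to coordinate $r$, the entire computation tree of Algorithm~\ref{alg:precomputeMinVectors} is a balanced divide-and-conquer \ac{LCLM} of the degree-one factors occurring in $\minpolyRem{\{p_{0,0},\dots,p_{n-1,0}\}}$ (if $r=0$) or in $\minpolyRem{\set{P}_{[0,n-1]}^{(r)}}$ (if $r\geq1$); this is exactly the procedure analysed in \cite[Theorem~3.2.7]{caruso2017new}, which builds such an \ac{LCLM} in $\softoh{\OMul{n}}$ operations in $\Fqm$, the resulting minimal polynomial having degree at most $n$ in every coordinate.

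First I would therefore bound the total cost of all merges by that of $\intOrder+1$ such per-coordinate trees, i.e.\ $\softoh{\intOrder\OMul{n}}$ operations in $\Fqm$. Next I would account for the leaves $\vecMinpolyRem{i}=\vecMinpolyRem{[i,i]}$: the first coordinate is simply $x-p_{i,0}$, while the $r$-th coordinate for $r\geq1$ is $x-\conjg{p_{i,0}}{p_{i,r}}$ when $p_{i,r}\neq0$ and the empty-\ac{LCLM} constant $1$ otherwise, in accordance with the sets in~\eqref{def:rem_ev_min_poly_sets} and the convention of Remark~\ref{rem:conjugates_with_zero}. Since $\der=0$, each required conjugate is $\conjg{p_{i,0}}{p_{i,r}}=\aut(p_{i,r})\,p_{i,0}\,p_{i,r}^{-1}$ and costs $\oh{1}$ operations in $\Fqm$, so producing all $n$ leaves costs $\oh{\intOrder n}\subseteq\softoh{\intOrder\OMul{n}}$. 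Adding the two contributions gives the claimed bound $\softoh{\intOrder\OMul{n}}$; correctness of the construction itself has already been established above.

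The hard part — mild as it is — will be checking that the per-coordinate tree really inherits the cost bound of \cite[Theorem~3.2.7]{caruso2017new}. In the coordinates $r\geq1$ the points with $p_{i,r}=0$ are discarded, so the effective list of degree-one factors may be shorter than $n$ and the balanced split chosen by Algorithm~\ref{alg:precomputeMinVectors} need not match the one used in \cite{caruso2017new}. However, the cost estimate there depends only on the total degree of the factors being merged at each node, which is at most $n$ in every coordinate; the recursion has depth $\oh{\log n}$ and, by super-additivity of $\OMul{\cdot}$, performs $\softoh{\OMul{n}}$ work per level, so the $\softoh{\OMul{n}}$ per-coordinate estimate is unaffected. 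Apart from this degree bookkeeping across coordinates with varying numbers of nonzero interpolation points, the argument is identical to the generalized-operator case treated in Lemma~\ref{lem:comp_min_vecs_gen_op}.
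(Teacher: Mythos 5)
Your proposal is correct and follows the same route as the paper's proof: reduce to $\intOrder+1$ independent per-coordinate \ac{LCLM} trees, invoke \cite[Theorem~3.2.7]{caruso2017new} for the $\softoh{\OMul{n}}$ per-coordinate bound, and multiply by the number of coordinates. You fill in more detail than the paper does — explicitly costing the leaf conjugate computations under $\der=0$ and checking that coordinates with fewer nonzero points (hence unbalanced effective splits) still satisfy the cost bound via degree bookkeeping and superadditivity of $\OMul{\cdot}$ — but the core argument is identical.
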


\begin{proof}
 Algorithm~\ref{alg:precomputeMinVectors} is a generalization of the procedure in~\cite[Theorem~3.2.7]{caruso2017new} to construct a single minimal polynomial, which requires $\softoh{\OMul{n}}$ operations in $\Fqm$.
 Hence, the overall complexity of Algorithm~\ref{alg:precomputeMinVectors} is in the order of $\softoh{\intOrder\OMul{n}}$ operations in $\Fqm$.
 \qed
\end{proof}

\begin{theorem}[Computational Complexity]\label{thm:comp_fast_KNH_ISRS}
 Algorithm~\ref{alg:skewIntTree} solves Problem~\ref{prob:intProblemISRS} over $\SkewPolyringZeroDer^{\intOrder+1}$ in $\softoh{\intOrder^\matmult\OMul{n}}$ operations in $\Fqm$.
\end{theorem}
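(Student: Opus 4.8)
The plan is to mirror the complexity argument of Theorem~\ref{thm:comp_fast_KNH_IGab} almost verbatim, since Problem~\ref{prob:intProblemISRS} differs from Problem~\ref{prob:intProblemIGab} only in the choice of evaluation maps (remainder instead of generalized operator) and the corresponding minimal polynomial vectors. The key structural facts are already in place: by Lemma~\ref{lem:mod_vec_rem_eval} the maps $\vecRemEvNoInput{i}$ together with the vectors $\vecMinpolyRem{[i,j]}$ from~\eqref{eq:def_rem_minpoly_vecs} satisfy Assumption~\ref{ass:mod_vectors}, so Algorithm~\ref{alg:skewIntTree} is correct for this instance, and by Lemma~\ref{lem:comp_min_vecs_rem} the full set $\set{M}^{\text{rem}}$ of required minimal polynomial vectors can be precomputed in $\softoh{\intOrder\OMul{n}}$ operations in $\Fqm$.

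First I would set up the divide-and-conquer recurrence. Let $C(n)$ denote the cost of \textsf{SkewInterpolateTree} on $n$ points excluding the leaf calls to \textsf{SkewInterpolatePoint}. The two recursive calls operate on roughly $n/2$ points each; the matrices $\mat{T}_1,\mat{T}_2$ returned have degree at most $n/2$ (one degree-increasing step per point), so the product $\mat{T}_2\mat{T}_1$ in Line~\ref{line:T2T1} costs $\softoh{\intOrder^\matmult\OMul{n/2}}\subseteq\softoh{\intOrder^\matmult\OMul{n}}$ by fast $(\intOrder+1)\times(\intOrder+1)$ matrix multiplication over skew polynomials. Lines~\ref{line:VecMod1} and~\ref{line:VecMod2} each require one matrix product of the form $\mat{T}_1\mat{B}$ (cost $\softoh{\intOrder^\matmult\OMul{n}}$) and one element-wise right modulo reduction in $\SkewPolyringZeroDer^{\intOrder+1}$ against $\vecMinpolyRem{[\cdot,\cdot]}$, the latter costing $\oh{\intOrder\OMul{n}}$. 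Thus the master theorem with $C(n)=2C(n/2)+\softoh{\intOrder^\matmult\OMul{n}}$ yields $C(n)\in\softoh{\intOrder^\matmult\OMul{n}}$ (the $\OMul{n}$ factor is super-linear, so the recursion does not accumulate extra log-factors beyond the soft-$O$).

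Next I would account for the leaves: \textsf{SkewInterpolatePoint} is invoked $n$ times, and each call evaluates $\intOrder+1$ scalars $\Delta_j=\vecRemEv{i}{\vec b_j}$ and builds the matrix $\mat U$ of~\eqref{eq:opMatrix}, whose only nonconstant entry is a degree-$1$ skew polynomial. The dominant per-leaf cost is therefore $\oh{\intOrder^2}$ operations in $\Fqm$ (the $\intOrder+1$ evaluations of degree-$<1$ polynomial vectors, plus $\oh{\intOrder}$ inversions), giving $\oh{\intOrder^2 n}$ in total, which is absorbed into $\softoh{\intOrder^\matmult\OMul{n}}$ since $\matmult\ge 2$ and $\OMul{n}\ge n$. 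Adding the precomputation cost $\softoh{\intOrder\OMul{n}}$ from Lemma~\ref{lem:comp_min_vecs_rem}, which is likewise dominated, the overall complexity is $\softoh{\intOrder^\matmult\OMul{n}}$ operations in $\Fqm$.

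The only subtlety — and the point I would be most careful about — is the claim that the modulo reductions against $\vecMinpolyRem{[i,j]}$ at each node genuinely keep all intermediate basis degrees $\oh{n}$, so that the matrix products are over skew polynomials of degree $\oh{n}$ rather than degree growing with recursion depth. This follows from Assumption~\ref{ass:mod_vectors} (established for this setting via Lemma~\ref{lem:mod_vec_rem_eval}): the reduction in Lines~\ref{line:VecMod1} and~\ref{line:VecMod2} does not change the image under the relevant evaluation maps, while $\deg(\vecMinpolyRem{[i,j]})\le j-i+1\le n$ bounds the reduced degrees, exactly as in the proof of Theorem~\ref{thm:comp_fast_KNH_IGab}. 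Since Problem~\ref{prob:intProblemISRS} is otherwise formally identical to Problem~\ref{prob:intProblemIGab}, invoking that proof mutatis mutandis — with $\vecRemEvNoInput{i}$ in place of $\vecOpEvNoInput{i}$, $\vecMinpolyRem{[i,j]}$ in place of $\vecMinpolyOp{[i,j]}{\vec a}$, and Lemma~\ref{lem:comp_min_vecs_rem} in place of Lemma~\ref{lem:comp_min_vecs_gen_op} — completes the argument. \qed
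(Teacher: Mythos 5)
Your proposal is correct and takes essentially the same route as the paper: the paper's proof of Theorem~\ref{thm:comp_fast_KNH_ISRS} is the single sentence ``Follows the ideas of the proof of Theorem~\ref{thm:comp_fast_KNH_IGab},'' and your argument is exactly that proof transposed to the remainder-evaluation setting, with Lemma~\ref{lem:mod_vec_rem_eval} and Lemma~\ref{lem:comp_min_vecs_rem} playing the roles of Lemma~\ref{lem:mod_vec_op_eval} and Lemma~\ref{lem:comp_min_vecs_gen_op}. In fact you are somewhat more careful than the paper's proof of Theorem~\ref{thm:comp_fast_KNH_IGab}, which writes the recurrence cost as $\softoh{\intParam^\matmult n/2}$ rather than $\softoh{\intParam^\matmult\OMul{n/2}}$ and does not explicitly note that the modulo reductions are what keep all intermediate degrees $\oh{n}$; you state both correctly.
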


\begin{proof}
 Follows the ideas of the proof of Theorem~\ref{thm:comp_fast_KNH_IGab}.
 \qed
\end{proof}

\begin{remark}[Practical Consideration]
 Note, that in the $\SkewPolyringZeroDer$-case the $i$-th remainder vector evaluation map $\vecRemEv{i}{x\vec{\Q}}$ of $x\Q$ can be computed efficiently from $\vecRemEv{i}{\vec{\Q}}$ requiring only one application of the automorphism $\aut$ and one multiplication since by the product rule~\cite{lam1988vandermonde} we have that
 \begin{equation}
  \vecRemEv{i}{x\vec{\Q}}=\aut\left(\vecRemEv{i}{\vec{\Q}}\right)p_{i,0},
  \quad\forall i=0,\dots,n-1.
 \end{equation} 
\end{remark}

\subsection{Applications to other Related Coding Problems}

Algorithm~\ref{alg:skewIntTree} with generalized operator vector evaluation maps of the form as defined in Definition~\ref{def:gen_op_vec_eval} can be used to solve the interpolation step in other related interpolation-based decoding problems efficiently. The corresponding interpolation problems are instances of Problem~\ref{prob:generalIntProblem} where the minimal polynomial vectors in~\eqref{eq:def_gen_op_minpoly_vecs} fulfill Assumption~\ref{ass:mod_vectors}.

In particular, the interpolation step in decoding of lifted $\intOrder$-interleaved Gabidulin codes in the subspace metric~\cite{bartz2018efficient,bartz2017algebraic} can be performed requiring at most $\softoh{\intOrder^\matmult\OMul{\nReceive}}$ operations in $\Fqm$, where $\nReceive$ denotes the dimension of the received subspace.
Further, the interpolation step in decoding $\foldPar$-folded Gabidulin codes~\cite{mahdavifar2012list,bartz2017algebraic} and \emph{lifted} $\foldPar$-folded Gabidulin codes~\cite{bartz2015list} can be performed in at most $\softoh{\intParam^\matmult\OMul{n}}$ and $\softoh{\intParam^\matmult\OMul{\nReceive}}$ operations in $\Fqm$, respectively, where $\intParam \leq \foldPar$ denotes an interpolation parameter.

Algorithm~\ref{alg:skewIntTree} can also be used to solve the interpolation step for decoding lifted $\intOrder$-interleaved \ac{LRS} codes in the sum-subspace metric~\cite{bartz2021decoding} requiring at most $\softoh{\intOrder^\matmult\OMul{\nReceive}}$ operations in $\Fqm$, where $\nReceive$ denotes the sum of the dimensions of the received subspaces over all shots.

Algorithm~\ref{alg:skewIntTree} can also solve the interpolation step for interpolation-based decoding of $\foldPar$-folded l\ac{LRS} codes~\cite{hormann2021efficient} requiring at most $\softoh{\intParam^\matmult\OMul{n}}$ operations in $\Fqm$, where $\intParam \leq \foldPar$ is a decoding parameter and $n$ is the length of the unfolded code.

Since, unlike the minimal approximant basis approach in~\cite{bartz2021fast,bartz2022fast}, Algorithm~\ref{alg:skewIntTree} has no requirements on the interpolation points (related to the evaluation maps), it can be applied in a straight-forward manner to decoding problems of lifted Gabidulin and \ac{LRS} code variants in the (sum-)subspace metric.

\section{Conclusion}\label{sec:conclusion}

 We proposed a fast divide-and conquer variant of the \acf{KNH} interpolation over free modules over skew polynomial rings.
 We showed how the proposed \ac{KNH} interpolation can be used to solve the interpolation step of interpolation-based decoding of interleaved Gabidulin, linearized Reed--Solomon and skew Reed--Solomon codes and variants thereof efficiently requiring at most $\softoh{\intOrder^\matmult\OMul{n}}$ operations in $\Fqm$, where $n$ is the length of the code, $\intOrder$ the interleaving order, $\OMul{n}$ the complexity for multiplying two skew polynomials of degree at most $n$, $\matmult$ the matrix multiplication exponent and $\softoh{\cdot}$ the soft-$O$ notation which neglects log factors.
 The computational complexity of the proposed fast \ac{KNH} variant coincides with the complexity of the currently fastest interpolation algorithms for skew polynomial rings, where the proposed variant relies on the well-known (bottom-up) \ac{KNH} interpolation algorithm instead of quite involved (top-down) minimal approximant bases techniques. 
 The proposed results also hold for codes defined over general skew polynomials rings (with derivations), except for the complexity analysis.
 Due to the bottom-up nature of the proposed KNH interpolation there are no requirements on the interpolation points and thus no pre-processing of the interpolation points, which may be required for the top-down minimal approximant bases approaches, is necessary. 

 

\vfill

\bibliographystyle{spmpsci}      



\end{document}